\newcommand{\Ell}{\mathcal{L}}
\newcommand{\zero}{\underline{0}}
\newcommand{\one}{\underline{1}}
\newcommand{\two}{\underline{2}}
\newtheorem{corollary}{Corollary}{\itshape}{\rmfamily}
\newtheorem{theorem}{Theorem}{\itshape}{\rmfamily}
\newtheorem{remark}{Remark}{\itshape}{\rmfamily}
\newtheorem{definition}{Definition}{\itshape}{\rmfamily}
\newtheorem{proposition}{Proposition}{\itshape}{\rmfamily}
\newtheorem{example}{Example}{\itshape}{\rmfamily}
\newtheorem{lemma}{Lemma}{\itshape}{\rmfamily}
\newcommand{\nat}{\mathbb{N}}
\newcommand{\ind}[2]{\ensuremath{\llparenthesis \udl{#1},#2\rrparenthesis}}
\newcommand{\udl}[1]{\underline{#1}}
\newcommand{\dup}[3]{\ensuremath{\ind{#1}{#2}\,\triangledown {#3}}}
\newcommand{\era}[3]{\ind{#1}{#2}\odot {#3}}
\newcommand{\LR}{\ensuremath{`L_{\textrm{\footnotesize\textregistered}}}}
\newcommand{\lRdB}{\ensuremath{`l_{\textrm{\textregistered}}}}
\newcommand{\LRdB}{\ensuremath{`L_{\textrm{\textregistered}}}}
\newcommand{\LRdBin}{\LRdB^{\Ell}}
\newcommand{\lRdBin}{\lRdB^{\Ell}}
\newcommand{\Lin}{\ensuremath{`L}^{\Ell}}
\newcommand{\lin}{\ensuremath{`l}^{\Ell}}
\newcommand{\sub}[3]{\ensuremath{#1[#2/#3]}}
\newcommand{\down}[1]{\downarrow#1}
\newcommand{\Downs}{~\Downarrow_s~}
\newcommand{\Downr}{~\Downarrow_r~}
\newcommand{\Nd}[1]{*+[o][F]{\large \bf #1}}
\newcommand{\dupG}[4]{\ensuremath{#1"<"^{#2}_{#3}~#4}}
\newcommand{\eraG}[2]{#1`(.)#2}
\newcommand{\dupL}[4]{\ensuremath{\mathcal{C}^{#2|#3}_{#1}(#4)}}
\newcommand{\eraL}[2]{\ensuremath{\mathcal{W}_{#1}(#2)}}
\newcommand{\lu}{\lambda\upsilon}
\newcommand{\Lu}{\Lambda\upsilon}
\renewcommand{\stackrel}[2]{\mathrel{\mathop{#2}\limits^{#1}}}
\newcommand{\luar}{\stackrel{\lu}{\xymatrix@C=18pt{\ar[r] &}}}
\newcommand{\luarone}{\stackrel{\lu_1}{\xymatrix@C=18pt{\ar[r] &}}}
\newcommand{\luarplus}{\stackrel{\lu\;+}{\xymatrix@C=25pt{\ar[r] &}}}
\newcommand{\luin}{\lambda^\Ell_\upsilon}
\newcommand{\luarin}{\stackrel{\luin}{\xymatrix@C=23pt{\ar[r] &}}}
\newcommand{\Luin}{\Lambda^{\Ell}_\upsilon}
\newcommand{\lift}{\Uparrow\!}
\newcommand{\shift}{\uparrow}
\newcommand{\upd}[1]{\llbracket #1 \rrbracket}
\newcommand{\zip}{\mathop{\,\ddagger\,}}
\newcommand{\mapdown}{\downarrow}
\newcommand{\mapup}{\uparrow}
\newcommand{\plpl}{\zip}
\newcommand{\bO}{0}
\newcommand{\bI}{1}
\newcommand{\multiDup}[1]{\raisebox{-5pt}{$\stackrel{\displaystyle
      \bigtriangledown}{\scriptstyle \ind{k}{`g}`:#1}$} \ind{k}{`g}~}
\newcommand{\replace}[2]{\llfloor #1 / #2\rrceil} 
\newcommand{\repl}[3]{#1\,\bm{\lfloor}#2 "<-" #3\bm{\rceil}} 
\renewcommand{\emph}[1]{\textit{#1}}
\newcommand\Warning{%
 \makebox[1.4em][c]{%
 \makebox[0pt][c]{\raisebox{.1em}{\small!}}%
 \makebox[0pt][c]{\color{red}\Large$\bigtriangleup$}}}%
\def\thickhline{%
  \noalign{\ifnum0=`}\fi\hrule \@height \thickarrayrulewidth \futurelet
   \reserved@a\@xthickhline}
\def\@xthickhline{\ifx\reserved@a\thickhline
               \vskip\doublerulesep
               \vskip-\thickarrayrulewidth
             \fi
      \ifnum0=`{\fi}}
\newlength{\thickarrayrulewidth}
\definecolor{darkGreen}{rgb}{0,.5,0}
\definecolor{mauve}{rgb}{1,0,1}
\definecolor{darkRed}{cmyk}{.3,1,.3,0}
\definecolor{cyanp}{cmyk}{.5,.3,0,0}
\definecolor{yellow}{cmyk}{0,0,.7,0}
\definecolor{beige}{cmyk}{0,.2,.7,0}
\definecolor{lightBrown}{cmyk}{0,.5,.7,0}
\definecolor{darkBrown}{cmyk}{.3,.75,.75,.15}
\newif\ifcomment
\newif\ifUcomment 
\newcommand{\Jel}[1]{\ifcomment {\color{black} {#1}} \else #1\fi}
\newcommand{\Sim}[1]{\ifcomment {\color{black} {#1}} \else #1\fi}
\newcommand{\Pier}[1]{\ifcomment {\color{black} {#1}} \else #1\fi}
\newcommand{\Jel}[1]{\ifcomment {\color{black} {#1}} \else #1\fi}
\newcommand{\Sil}[1]{\ifcomment {\color{black} {#1}} \else #1\fi}
\newcommand{\Sim}[1]{\ifcomment {\color{black} {#1}} \else #1\fi}
\newcommand{\Pier}[1]{\ifcomment {\color{black} {#1}} \else #1\fi}
\newcommand{\thefigurelOOO}{
 \begin{figure}[!ht]
   \centering
  \begin{displaymath}
    \resizebox{.15\textwidth}{!}{\xymatrix{
        &&    `l\ar@{-}[d] \\
        &&   @\ar@{-}[ld]\ar@{-}[rd]\\
        &@\ar@{-}[ld]\ar@{-}[rd]&& \ar@{.>}@(dr,ur)[uul]\\
        \ar@{.>}@(dl,ul)[uuurr]&&\ar@{.>}@(dr,dr)[uuu]
}} \qquad \raisebox{-20pt}{exemplifies} \qquad
  \xymatrix@C=7pt@R=10pt{&&\ar@{-}[d]\\\ar@{.>}[rr]&&{`l}\ar@{-}[d]&&\ar@{.>}[ll]\\
   &&&&\ar@{.>}[ull]}%
\end{displaymath}
\medskip
  \begin{displaymath}
    \resizebox{\textwidth}{!}{
      \xymatrix{
        &&    `l\ar@{-}[d] \\
        &&   @\ar@{-}[ld]\ar@{-}[rd]\\
        &@\ar@{-}[ld]\ar@{-}[rd]&& \ar@{.>}@(dr,ur)[dddl]\\
        \ar@{.>}@(dl,ul)[dr]&&\ar@{.>}@(dr,ur)[dl]&&\\
        &\triangledown\ar@{.>}@(d,ul)[dr]\\
        &&\triangledown\ar@{.>}@(dl,l)[uuuuu]
      }
      \qquad
      \xymatrix{
        &&    `l\ar@{-}[d] \\
        &&   @\ar@{-}[ld]\ar@{-}[rd]\\
        &@\ar@{-}[ld]\ar@{-}[rd]&& \ar@{.>}@(dr,ur)[dd]\\
        \ar@{.>}@(dl,ul)[ddrr]&&\ar@{.>}@(dr,ul)[dr]&&\\
        &&&\triangledown\ar@{.>}[dl]&&\\
        &&\triangledown\ar@{.>}@(dl,ul)[uuuuu]
      }
      \xymatrix{
        &&    `l\ar@{-}[d] \\
        &&   @\ar@{-}[ld]\ar@{-}[rd]\\
        &@\ar@{-}[ld]\ar@{-}[rd]&& \ar@{.>}@(dr,ur)[ddll]\\
        \ar@{.>}@(dl,ul)[dr]&&\ar@{.>}@(dr,ur)[dd]&&\\
        &\triangledown\ar@{.>}@(d,ul)[dr]\\
        &&\triangledown\ar@{.>}@(dl,l)[uuuuu]
      }
      \qquad
      \xymatrix{
        &&    `l\ar@{-}[d] \\
        &&   @\ar@{-}[ld]\ar@{-}[rd]\\
        &@\ar@{-}[ld]\ar@{-}[rd]&& \ar@{.>}@(dr,ur)[dddl]\\
        \ar@{.>}@(dl,ur)[dr]&&\ar@{.>}@(dr,ul)[dl]&&\\
        &\triangledown\ar@{.>}@(d,ul)[dr]\\
        &&\triangledown\ar@{.>}@(dl,l)[uuuuu]
      }
      \qquad
      \xymatrix{
        &&    `l\ar@{-}[d] \\
        &&   @\ar@{-}[ld]\ar@{-}[rd]\\
        &@\ar@{-}[ld]\ar@{-}[rd]&& \ar@{.>}@(dr,ul)[dddl]\\
        \ar@{.>}@(dl,ur)[dr]&&\ar@{.>}@(dr,ul)[dl]&&\\
        &\triangledown\ar@{.>}@(d,ur)[dr]\\
        &&\triangledown\ar@{.>}@(dl,l)[uuuuu]
      }
    }
  \end{displaymath}
   \caption{$`l ((\udl{0}\,\udl{0})\,\udl{0})$ and antecedents by
     \textsf{readback} as terms with two duplications}\label{fig:Lxxx}
 \end{figure}
 }
\newcommand{\BigStepSub}{
\begin{figure}[!ht]
\centering
 \Ovalbox{
   \begin{minipage}{.9\textwidth}
     \begin{displaymath}
  (Sapp)~
  \prooftree
  \sub{t_1}{t}{\ind{i}{`a}} \Downs u_1 \qquad  \sub{t_2}{t}{\ind{i}{`a}} \Downs u_2
  \justifies  \sub{(t_1\, t_2)}{t}{\ind{i}{`a}} \Downs u_1\,u_2 
  \endprooftree
\end{displaymath}

\bigskip
  
    \begin{displaymath}
    (Sabs)~
  \prooftree
  \sub{t'}{t}{\ind{i+1}{`a}} \Downs u
  \justifies \sub{(`l\,t')}{t}{\ind{i}{`a}} \Downs `l\, u
  \endprooftree
\end{displaymath}

\bigskip

  \begin{displaymath}
   (Sind_1)~
   \prooftree
  \justifies \sub{\ind{i}{`a}}{t}{\ind{i}{a}} \Downs t
  \endprooftree
\end{displaymath}

\bigskip

\begin{displaymath}
  (Sind_2)~
  \prooftree
 j > i
   \justifies \sub{\ind{j}{`b}}{t}{\ind{i}{`a}} \Downs \ind{j-1}{`b}
   \endprooftree
 \end{displaymath}
 
\bigskip
 
 \begin{displaymath}
(Sind_3)~
   \prooftree
 j < i  \vee (i = j \wedge `a \neq `b) 
   \justifies \sub{\ind{j}{`b}}{t}{\ind{i}{`a}} \Downs \ind{j}{`b}
   \endprooftree
 \end{displaymath}

 \bigskip

\begin{displaymath}
  (S\odot_1)~
  \prooftree
   \justifies \sub{(\era{i}{`a}{t'})}{t}{\ind{i}{`a}} \Downs t'
  \endprooftree
  \end{displaymath}

  \bigskip

  \begin{displaymath}
(S\odot_2)~
\prooftree
\sub{t'}{t}{\ind{j}{`b}} \Downs u \qquad   i > j
  \justifies \sub{(\era{i}{`a}{t'})}{t}{\ind{j}{`b}} \Downs \era{i-1}{`a}{u}
  \endprooftree
  \end{displaymath}

  \bigskip

  \begin{displaymath}
      (S\odot_3)~
  \prooftree
\sub{t'}{t}{\ind{j}{`b}} \Downs u \qquad   i < j \vee (i = j \wedge `a \neq `b)
  \justifies \sub{(\era{i}{`a}{t'})}{t}{\ind{j}{`b}} \Downs \era{i}{`a}{u}  
  \endprooftree
\end{displaymath}

\bigskip

\begin{displaymath}
  (S\triangledown_1)~
  \prooftree
  \sub{t'}{t}{\ind{i}{`a0}} \Downs v \qquad v \replace{t}{\ind{i}{`a1}} \Downr u
  \justifies \sub{(\dup{i}{`a}{t'})}{t}{\ind{i}{`a}} \Downs u 
  \endprooftree
\end{displaymath}
\bigskip
  \begin{displaymath}
      (S\triangledown_2)~
  \prooftree
  \sub{t'}{t}{\ind{j}{`b}} \Downs u  \qquad i > j
  \justifies \sub{(\dup{i}{`a}{t'})}{t}{\ind{j}{`b}}  \Downs  \dup{i-1}{`a}{u}
  \endprooftree
  \end{displaymath}

\bigskip  

  \begin{displaymath}
    (S\triangledown_3)~
  \prooftree
  \sub{t'}{t}{\ind{j}{`b}} \Downs u \qquad i < j \vee (i = j \wedge `a \neq `b)
  \justifies \sub{(\dup{i}{`a}{t'})}{t}{\ind{j}{`b}}  \Downs  \dup{i}{`a}{u}
  \endprooftree
\end{displaymath}
\end{minipage}}
\caption{Big-step semantics of \emph{substitute}}\label{fig:BSS}
\end{figure}}
\newcommand{\BigStepRep}{
\begin{figure}[!ht]
  \centering
\Ovalbox{
\begin{minipage}{.9\textwidth}
     \begin{displaymath}
  (Rapp)~
  \prooftree
  t_1 \replace{t}{\ind{i}{`a}} \Downr u_1 \qquad  t_2 \replace{t}{\ind{i}{`a}} \Downr u_2
  \justifies  (t_1\, t_2)\replace{t}{\ind{i}{`a}} \Downr u_1\,u_2  
  \endprooftree
\end{displaymath}

\bigskip
  
\begin{displaymath}
  (Rabs)~
  \prooftree
  t' \replace{t}{\ind{i+1}{`a}} \Downr u 
  \justifies (`l\,t')\replace{t}{\ind{i}{`a}} \Downr `l\, u
  \endprooftree
\end{displaymath}
\bigskip
  \begin{displaymath}
   (Rind_1)~
   \prooftree
  \justifies \ind{i}{`a}\replace{t}{\ind{i}{a}} \Downr t
  \endprooftree
\qquad\qquad \qquad
    (Rind_2)~
  \prooftree
\ind{i}{`a}\neq \ind{j}{`b}
  \justifies \ind{j}{`b}\replace{t}{\ind{i}{`a}} \Downr \ind{j}{`b}
\endprooftree
  \end{displaymath}

  \begin{displaymath}
      (R\odot_1)~
      \prooftree
    \justifies (\era{i}{`a} t') \replace{t}{\ind{i}{`a}} \Downr t'
  \endprooftree
\qquad
  (R\odot_2)~
  \prooftree
  t'  \replace{t}{\ind{i}{`a}} \Downr u \qquad \ind{i}{`a}\neq \ind{j}{`b}
   \justifies (\era{j}{`b} t') \replace{t}{\ind{i}{`a}} \Downr\era{j}{`b} u
  \endprooftree
\end{displaymath}
\bigskip
\begin{displaymath}
  (R\triangledown_1)~
  \prooftree
  t'  \replace{t}{\ind{i}{`a0}} \Downr v \qquad v  \replace{t}{\ind{i}{`a1}} \Downr u 
  \justifies (\dup{i}{`a}{t'}) \replace{t}{\ind{i}{`a}} \Downr u 
  \endprooftree
  \end{displaymath}
    \bigskip
  \begin{displaymath}
  (R\triangledown_2)~
  \prooftree
  t'\replace{t}{\ind{j}{`b}} \Downr u \qquad \ind{i}{`a}\neq \ind{j}{`b}
  \justifies (\dup{i}{`a}{t'})\replace{t}{\ind{j}{`b}}  \Downr  \dup{i}{`a}{u}
  \endprooftree
\end{displaymath}
\end{minipage}}
\caption{Big-step semantics of \emph{replace}}\label{fig:BSR}
\end{figure}}
  \providecommand\BibTeX{{%
    \normalfont B\kern-0.5em{\scshape i\kern-0.25em b}\kern-0.8em\TeX}}}
\newcommand{\streetaddress}[1]{~#1}
\newcommand{\city}[1]{~#1}
\newcommand{\postcode}[1]{~#1}
\newcommand{\country}[1]{~#1}
\begin{document}
\authorrunning{Ghilezan et al.}
\titlerunning{L-ypes for resource aware languages}
\title{List types for resource aware languages\\ an implicit name approach}

\author{Silvia Ghilezan \inst{1} \inst{2} 
  \and {Jelena Iveti\'{c}} \inst{1}
  \and Pierre Lescanne \inst{3}
  \and Simona Proki\'{c} \inst{1}}

\institute{University of Novi Sad, Faculty of Technical Sciences\\
  \streetaddress{Trg Dositeja Obradovi\'{c}a 6}
  \city{Novi Sad}
  \postcode{21000}
  \country{Serbia}
\and
  Mathematical Institute of the Serbian Academy of Sciences and Arts\\
  \streetaddress{Kneza Mihaila 36}
  \city{Beograd}
  \postcode{11000}
  \country{Serbia}
\and
  University of Lyon, \'Ecole normale sup\'erieure de Lyon\\
  \streetaddress{46 all\'ee d'Italie}
  \city{Lyon}
  \postcode{69364}
  \country{France}}

\maketitle
\begin{abstract}
  A novel formalisation of variable control in languages with implicit names based on de Bruijn indices is presented.
We design and implement three languages: first, a restricted language with implicit names; then, a restricted calculus with implicit names and explicit substitution, and  finally, an extended calculus with implicit names, implicit substitution and resource control. We propose a novel concept of list types, which are used to give a simple and manageable definition of linearity.
We develop an implementation in Haskell.
\end{abstract}

\section{Introduction}
\label{sec:introduction}
 
In computation, the control of variable use goes back to Church’s $\lambda I$-calculus and restricted terms~\cite{hindley97:_basic_simpl_theor}. Likewise, in logic, the control of formula use is present in
Gentzen’s structural rules~\cite{gent35} which enable a wide class of substructural logics~\cite{DS-H93}.
In programming, the augmented ability to control 
the use of operations and objects has a wide range of applications which enable, among others, compiling functional languages without garbage collector and avoids memory leaking~\cite{rose11:_implem_trick_that_make_crsx_tick,rose:LIPIcs:2011:3130}; inline expansion in compiler optimisations~\cite{DBLP:conf/sas/CalvertM12}; safe memory management~\cite{walk05};
controlled type discipline as a framework for resource-sensitive compilation~\cite{DBLP:conf/esop/GhicaS14};  the interpretation of linear formulae as session types that provides a purely logical account of session types~\cite{DBLP:conf/concur/CairesP10}.
At the core of all these phenomena is the  Curry-Howard correspondence of formulae-as-types and proofs-as-terms.

\paragraph*{\bf Control:  by restriction vs by extension}

There are several restricted classes of  $\lambda$-terms, where the restrictions are due to the control of variable use. The best known among them are: $\lambda \mathsf{I}$-terms, aka \emph{relevant} terms, where variables occur at least once;
$\mathsf{BCK}`l$-terms, aka
\emph{affine} terms, where variables occur at most once;
$\mathsf{BCI}`l$-terms, aka \emph{linear} terms, where each variable occurs exactly once~\cite{hindley97:_basic_simpl_theor,DBLP:journals/tcs/Girard87}.
E.g. the combinator $\mathsf{K}$ is not a $\lambda \mathsf{I}$-term and the combinator $\mathsf{S}$
is not a $\mathsf{BCK}\lambda$-term.
This ``\emph{control by restriction}'' approach  is widely present in substructural logics~\cite{DS-H93}, substructural type theory~\cite{walk05}, linear logic~\cite{DBLP:conf/tapsoft/GirardL87,DBLP:conf/lics/LincolnM92}, among others.
 On the other hand, the control of variable use can be achieved by extending the language by operators
 meant to  tightly encode the control.
If a variable has to be reused, it will be explicitly duplicated, whereas
if the variable is not needed, it will be explicitly erased.
These two resource control operators, \emph{duplication} and
\emph{erasure}, are extensions of the syntax of the $`l$-calculus which
allow  all $\lambda$-terms to become: relevant (only erasure is used), affine (only duplication is used)  and linear (both erasure and duplication are used).
The advantage of this  ``\emph{control by extension}'' approach is that all $`l$-terms can be encoded in the extended calculus.
Hence, the extended calculi are equivalent, in computational power, to $\lambda$-calculus, which is not the case with the restricted calculi.
This approach has been developed in different theoretical~\cite{DBLP:conf/concur/Boudol93,DBLP:journals/iandc/KesnerL07} and applicative~\cite{rose11:_implem_trick_that_make_crsx_tick,DBLP:conf/sas/CalvertM12} settings.
From a proof theoretical perspective, such a simply typed \emph{extended}
$`l$-calculus has a Curry-Howard correspondence with
intuitionistic logic with \emph{Contraction} and
\emph{Thinning} structural rules~\cite{Troelstra:2000:BPT:351148}, whereas a \emph{restricted} $`l$-calculus corresponds to substructural logic~\cite{DS-H93,Restall00} such as relevant or affine logic.

\paragraph*{\bf Names: explicit vs implicit}

The well-known $\lambda$-calculus is a calculus with explicit use of variables (names).
On the other hand, the calculus with implicit names is de Bruijn notation of $\lambda$-calculus that avoids the explicit naming of variables by employing de Bruijn indices~\cite{NGDeBruijn108,curien93:categ_combin,Lescanne95WADT}.
Each variable is replaced by a natural
number which is the number of $`l$'s crossed in order to reach the
binder of that variable.  For instance in de Bruijn notation, the combinator $\mathsf{I} \equiv `l x.x$  is  $`l \udl{0}$, the combinator $\mathsf{K} \equiv `l x.`l y.x$  is $`l `l \udl{1}$ and the combinator
$\mathsf{S}\equiv `l x .`l y. `l z . x z (y z)$ is $`l`l`l\,\udl{2}\,\udl{0}(\udl{1}\,\udl{0})$. The profound advantage of de Bruijn notation is that $\alpha$-conversion, the renaming of bound variables, is not needed, which significantly facilitates implementation and also, in the case of an extended $`l$-calculus, simplifies the rules.

\paragraph*{\bf Foundations}
In this paper, we study both restricted and extended control of
variable use in calculi with \emph{implicit names}.  This means that instead
of (explicit) variables we use either de Bruijn indices, or novel
®\=/indices. Inspired by de Bruijn indices, ®\=/indices provide
information about duplication of names.  We design three languages and implement two associated calculus:
\begin{itemize}
    \item a restricted language with implicit names $\Lin$; 
    \item a restricted language with implicit names and explicit substitution $\Luin$, together with its $\luin$-calculus;  
    \item an extended language with
implicit names, implicit substitution and explicit duplication and erasure $\LRdBin$, together with its $\lRdBin$-calculus.
\end{itemize}
In all introduced calculi, linear terms are defined as
\Jel{specifically typeable} terms in systems we call \emph{list
  types}. The list type, abbreviated as $\Ell$-type, of a term
represents the list of its free indices and is convenient for checking
its linearity.  This is why we use $\Ell$-typeability as a definition
of linearity.

\paragraph*{\bf Linearity: indirect vs direct}
\Pier{ 
For a reader used to explicit names only, i.e. regular $\lambda$-calculus, there exists, at first glance, a simple method to characterise linear terms: it suffices to
check that each variable occurs once, which is easy if one creates a new variable for each new abstraction.}
\Pier{
However with
implicit names, aka de Bruijn indices, checking the occurrences of indices with the same number is of no help, because,
as a matter of fact, a same number may occur several times in a linear term and a term with only occurrences of
different numbers, may be non linear. Thus $`l\zero\, (`l\zero)$ is linear and $`l\zero\,(`l\one)$ is not linear.  We
let the reader imagine examples with thousand $`l$'s (thousand abstractions) and check whether indices $\underline{127}$
and $\underline{828}$ correspond to the same variable or whether those variables are duplicated or not.

Thus for terms with implicit names, checking unique occurrences of numbers (associated with de Bruijn indices) does not
work.  There is then
} 
\Sil{ an \emph{indirect method} for checking linearity with implicit names} 
\Pier{
which consists of two steps:
\begin{enumerate}
\item to translate a $`l$-term with implicit names into a $\lambda$-term with explicit names,
\item to check unique occurrence of variables.
\end{enumerate}
} \Sil{ This indirect} \Pier{ method is algorithmically costly, since
  it consists in a translation from implicit names to explicit
  names. 
} In turn, we give a \emph{direct} characterisation of
\emph{linearity} with implicit names by means of $\Ell$-types.
 We likewise did the direct definition of the occurence of an ®-index in
  $\lRdB$-term.

\paragraph*{\bf Implementation}
We worked simultaneously on the development of the $\Ell$-type calculi and on their implementation,
because the implementation shapes the development\Pier{\footnote{The implementation in
    \textsf{Haskell}~\cite{githubrepo2} is complete, while an \textsf{Agda} implementation is still
    in development~\cite{githubrepo}}}.

\medskip

The main contributions of this paper are:
\begin{itemize}
\item[-] the use of de Bruijn indices (an old concept, \Jel{in a novel setting}) and ®-indices (a new concept), 
\item[-] the novel concept  of list types, \Sil{dubbed $\Ell$-types,} for characterising \Sil{linearity in languages with implicit names,} 
\item[-] a formal definition of linearity based on $\Ell$-typeability in 
\Sil{three languages}, especially in a calculus with
    explicit substitution and in a calculus with resource control, because in those calculi a formal and direct
    definition of linearity is hard to state,
\item[-] proofs of $\Ell$-type preservation, hence of linearity,
\item[-] an implementation in \textsf{Haskell} for framework \cite{githubrepo2}. 
\end{itemize}

The rest of this paper is organised as follows. We first review the background on de Bruijn indices, the  $\lambda$-calculus with implicit names,  
in Section~\ref{sec:dB}. In Section~\ref{sec:LinLambda}, we introduce the notion of \Sil{list types (abbreviated by $\Ell$-types)} and design the language $\Lin$ of restricted terms with implicit names.  In Section~\ref{sec:luin}, we extend the notion of $\Ell$-types and design the language $\Luin$ of restricted terms with implicit names and explicit substitution and the corresponding calculus for which we prove type preservation.
In Section~\ref{sec:LRdB}, we modify the notion of $\Ell$-types for the design of the language $\LRdBin$ of extended terms with implicit names, implicit substitution, and resource control, and the corresponding calculus for which we prove type preservation. We also provide several examples of computation, and an implementation in \textsf{Haskell}. In Section~\ref{sec:related}, we discuss related work. Section~\ref{sec:conc} concludes the paper. To facilitate the reader's comprehension and understanding of the text, the following table highlights the most important introduced notations, along with their informal definitions and references to the sections in which they are formally introduced.

\begin{displaymath}
  \begin{array}[h]{|l|l|l|}
  \hline %
  \textbf{name}&\textbf{informal definition}&\textbf{reference to its definition}\\\hline
  \Lambda & \textrm{The set of terms (with implicit names)} &\textrm{Section~\ref{sec:dB}}\\
  \Lin& \textrm{The set of restricted terms}& \textrm{Section~\ref{sec:LinLambda}}\\
 \Lu&\textrm{The set of terms with explicit substitutions} & \textrm{Section~\ref{sec:luin}}\\
    \Luin&\textrm{The set of restricted terms with explicit substitutions} & \textrm{Section~\ref{sec:luin}}\\
    \luin&\textrm{The calculus of~} \Luin&\textrm{Section~\ref{sec:luin}}\\
    \LRdB&\textrm{The set of extended terms} &\textrm{Section~\ref{sec:LRdB}}\\
    \LRdBin&\textrm{The set of extended terms with resource control} &\textrm{Section~\ref{sec:LRdB}}\\
   \lRdBin & \textrm{The calculus of~} \LRdBin &\textrm{Section~\ref{sec:LRdB}}\\
    \hline  
\end{array}
\end{displaymath}

\section{Terms with implicit names $\Lambda$}
\label{sec:dB}

Our development relies on the paradigm of implicit names in formal calculi.
We recall the notion of term with implicit names based on de Bruijn indices~\cite{NGDeBruijn108,curien93:categ_combin,Lescanne95WADT}. 
Let us consider the (regular)  $\lambda$-terms $\textsf{K} \equiv `l x. `l y. x$ and $\textsf{S}\equiv `l x .`l y. `l z . x z (y z)$
 and the three contractions of the term \textsf{SK}
\begin{displaymath}
  (`l x .`l y. `l z . x z (y z))\,(`l x. `l y. x) "->" %
`l y .`l z. (`l x . `l y . x) z (y z)  "->"%
`l y . `l z . (`l y . z) (y z) "->" `l y. `l z . z.
\end{displaymath}

Assume that we want to represent those terms without using variables. Such
a variable-free representation is
called sometimes \emph{Bourbaki assembly}, because this variable-free
two dimensional representation of terms has been first used by
Bourbaki~\cite{Bourbaki39}
\begin{figure}
  \centering
  \includegraphics[width=2cm]{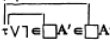}
  \caption{Bourbaki assembly in~\href{https://books.google.fr/books?id=VDGifaOQogcC&pg=SA1-PA14&dq=Bourbaki+ensemble+signes+assemblages}{\cite{Bourbaki39}}}
  \label{fig:Bourb}
\end{figure}
(see Figure~\ref{fig:Bourb}) and has been called
``assembly''~\cite{bourbaki68:_theor_sets,DBLP:journals/jfrea/Grimm10}.  \Pier{In a contemporary style (eighty five years later and a different context), Figure~\ref{fig:Bourb} can be explained as follows.  In a system of implicit names and an infix operator $\in$, this \emph{assembly} represents $`t((\neg (\zero \in A')) \vee (\zero \in A))$, where $\tau$ is a binder acting like $`l$.  In a system of explicit names, where $`t$ binds $x$, this assembly represents $`t~x . (\neg(x`: A')) \vee (x`:A)$.

Figure~\ref{fig:Bourb} resembles Figure~\ref{fig:assembly} (we use here an infix notation for the binary operator
``application'' and Bourbaki uses prefix notations).}

\begin{figure}

  \vspace*{25pt}

\begin{displaymath}
 \xymatrix@C=7pt@R=2pt{
`l \ar@{-}@(u,u)[rrr]&`l\ar@{--}@(u,u)[rrrr]&`l\ar@{.}@(u,u)[rr]\ar@{-}@(ur,u)[rrrr] &\Box&\Box&(\Box&\Box)&
`l \ar@{-}@(u,u)[rr]&`l&\Box
}
\end{displaymath}
\caption{Bourbaki assembly of $\mathsf{SK}$}\label{fig:assembly}
\end{figure}

Later and independently, de
Bruijn proposed a one dimension variable-free representation using natural
numbers\footnote{This has been popularised by
  Curien~\cite{curien93:categ_combin}. Notice that de Bruijn and Curien
  make the indices to start at $1$, but the third author
  proposed in~\cite{Lescanne95WADT} the indices to start at $0$, a convention
  largely adopted since~\cite{pierce02:_types_progr_languag}.}, called
since \emph{de Bruijn indices}.  Each variable is replaced by a natural
number which is the number of $`l$'s crossed in order to reach the
binder of that variable.  For instance, $`l x .`l y. `l z . x z (y z)$ is
replaced by $`l`l`l\,\udl{2}\,\udl{0}(\udl{1}\,\udl{0})$.  Indeed, $x$ is
replaced by $\udl{2}$ because one crosses two $`l$'s to meet its binder,
$y$ is replaced by~$\udl{1}$ because one crosses one $`l$ to meet its
binder and $z$ is replaced by~$\udl{0}$ because one crosses no $`l$ to
meet its binder.

The abstract syntax of terms with de Bruijn notation is the following:
\begin{eqnarray*}
    t &::=& \udl{n} \mid `l t \mid t\; t
\end{eqnarray*}
where $\udl{n}$, associated with $n \in \nat$, is an index. The set of all terms with de Bruijn notation will be denoted by $`L $ and it will be ranged over by $t, \; s, \ldots$.
We will call them  $\lambda$-terms or terms without mentioning 
 de Bruijn indices if there is no
place for confusion.

Terms with de Bruijn notation are also called terms with \emph{implicit names} since variables are implicit rather than explicit as in the regular $\lambda$-calculus.  Using implicit names is
convenient because terms with de Bruijn indices represent classes of
$`a$\=/conversion of terms with explicit variables.
 

We will see that de Bruijn indices also enable simple
descriptions of features connected with linearity, duplication and erasure
that are otherwise described with cumbersome
notations~\cite{DBLP:journals/iandc/KesnerL07,DBLP:journals/tcs/KesnerR11,DBLP:journals/corr/GhilezanILL14}.
The formal definition of $`b$-reduction is given in Example~\ref{ex:b}.

\Sil{Since ${\sf S} \equiv `l x .`l y. `l z . x z (y z)$ is
replaced by $`l`l`l\,\udl{2}\,\udl{0}(\udl{1}\,\udl{0})$ and ${\sf K} \equiv `l x .`l y.x$ is replaced by $\lambda \lambda \one$ } the above chain of contractions of the term \textsf{SK} is replaced by
\begin{displaymath}
  (`l`l`l\,\udl{2}\,\udl{0}(\udl{1}\,\udl{0}))\,(`l`l\udl{1}) "->" %
  `l`l((`l`l\udl{1})\udl{0}(\udl{1}\,\udl{0})) "->"%
  `l`l((`l\udl{1})\,(\udl{1}\,\udl{0})) "->" `l`l\udl{0}.
\end{displaymath}

\noindent Figure~\ref{fig:SK} presents $\mathsf{SK}$ and its three contractions. It shows how de Bruijn indices are built from
variables (aka explicit names), indicates the links between names and their binders and presents the
chain of $`b$-reductions in de Bruijn notation.
Notice that in
$`l`l((`l\underline{1})\,(\underline{1}\,\underline{0}))$ the same
variable $z$ is associated with two de Bruijn indices,
$\underline{1}$ and $\underline{0}$ and that the same de Bruijn index
$\underline{1}$ is associated with two variables,  $y$
and~$z$.  In the de~Bruijn notation the value of an index associated with
a variable depends on the context. \Sim{Also, notice that $\udl{0}, \udl{1}\, \udl{0}$ and $\udl{2}\,\udl{0}(\udl{1}\,\udl{0})$ are examples of open terms in de Bruijn notation.}

\begin{figure}[!th]
  \centering
 \resizebox{0.9\textwidth}{!}{
  \xymatrix @C=7pt@R=7pt{
&&&&{@}\ar@{-}[dl]\ar@{-}[dr] \\
&&&`l\ar@{-}[d]&&`l\ar@{-}[d]&\\
&&&`l\ar@{-}[d]&&`l\ar@{-}[d]&\\
&&&`l\ar@{-}[d]&&\ar@{.>}@(d,r)[uu]\\
&&&{@}\ar@{-}[dll]\ar@{-}[drr]\\
&{@}\ar@{-}[dl]\ar@{-}[dr]&&&&{@}\ar@{-}[dl]\ar@{-}[dr]& \\
\ar@{.>}@(dl,l)[uuuuurrr] &&\ar@{.>}@(dr,l)[uuur]&&\ar@{.>}@(dl,r)[uuuul] &&\ar@{.>}@(dr,r)[uuulll]  &
}
\quad\raisebox{-.1\textwidth}{$"->"$}
  \xymatrix @C=7pt@R=7pt{
&&&`l\ar@{-}[d]\\
&&&`l\ar@{-}[d]\\
&&&{@}\ar@{-}[dll]\ar@{-}[drr]\\
&@\ar@{-}[dl]\ar@{-}[dr]&&&&@\ar@{-}[dl]\ar@{-}[dr]\\
`l\ar@{-}[d]&&\ar@{.>}@(dr,l)[uuur]&&\ar@{.>}@(dl,r)[uuuul] &&\ar@{.>}@(dr,r)[uuulll]\\
`l\ar@{-}[d]\\
\ar@{.>}@(d,r)[uu]&
}
\quad\raisebox{-.1\textwidth}{$"->"$}\quad
  \xymatrix @C=7pt@R=7pt{
&&`l\ar@{-}[d]\\
&&`l\ar@{-}[d]\\
&&@\ar@{-}[dll]\ar@{-}[drr] \\
`l\ar@{-}[d]&&&&@\ar@{-}[dl]\ar@{-}[dr]\\
\ar@{.>}@(dl,l)[uuurr] &&&\ar@{.>}@(dl,r)[uuuul]&&\ar@{.>}@(dr,r)[uuulll]
}
\quad\raisebox{-.1\textwidth}{$"->"$}
  \xymatrix @C=7pt@R=7pt{
`l\ar@{-}[d]\\
`l \ar@{-}[d]\\
\ar@{.>}@(d,r)[u]
}
}
\caption{The term \textsf{S}\textsf{K} and its three contractions}\label{fig:SK}
\end{figure}

The basic reduction considered here is
\begin{displaymath}
  \xymatrix@C=7pt@R=7pt{
& \ar@{-}[dd]\\
&\\
&@ \ar@{-}[dl] \ar@{-}[dr] \\
`l\ar@{-}[d]  && \Nd{B}\\
 \Nd{A} \ar@{.>}@(dl,ul)[uur]
}
\qquad
\raisebox{-.05\textwidth}{$"->"$}
\qquad
\xymatrix@C=7pt@R=7pt{
& \ar@{-}[dd]\\
&\\
&\Nd{A}\ar@{-}[dl]\\
\Nd{B}
}
\end{displaymath}

Three patterns are of interest in Figure~\ref{fig:SK}:
\begin{displaymath}
  \xymatrix@C=7pt@R=7pt{\ar@{-}[d]\\{`l}\ar@{-}[d]\\ \\}
  \qquad
  \xymatrix@C=7pt@R=7pt{&&\ar@{-}[d]\\\ar@{.>}[rr]&&{`l}\ar@{-}[d]\\ &&}%
  \qquad  %
  \xymatrix@C=7pt@R=7pt{&&\ar@{-}[d]\\\ar@{.>}[rr]&&{`l}\ar@{-}[d]&&\ar@{.>}[ll]\\
    &&}%
\end{displaymath}
The first pattern corresponds to a $`l$ that binds no index, the second
pattern corresponds to a $`l$ that binds exactly one index and the third
pattern corresponds to a $`l$ that binds two indices.  This later pattern
is representative, but clearly, there are patterns with more bound indices
(see Figure~\ref{fig:Lxxx}, page \pageref{fig:Lxxx}).
We propose the control of variable by restricting the language to $\Lin$ in  Section~\ref{sec:LinLambda} and $\Luin$ in Section~\ref{sec:luin}, and by  extending the language to $\LRdB$ in Section~\ref{sec:LRdB}.
In the new language  $\LRdB$, extended with two new operators $\triangledown$  (duplicator) and $`(.)$ (erasure),
terms are \emph{linearised}, meaning that only patterns corresponding to a $`l$ that binds exactly one index are present in the Bourbaki representation  (see Figure~\ref{fig:merged}).
This recalls
Lamping's optimal
calculus~\cite{DBLP:conf/popl/Lamping90}, which is described in~\cite{DBLP:conf/popl/GonthierAL92} and in~\cite{DBLP:books/daglib/0095289} with its connections with linear logic.
In $\LRdB$, we have an atomic substitution, whereas in Lamping's
calculus there is none.  Indeed, in Lamping calculus, \emph{fans} (a~kind of duplicators) are propagated.  However, the connection
should be deepened.

\begin{figure}[!ht]
 \centering
   \begin{displaymath}
    \resizebox{.9\textwidth}{!}{
      \xymatrix @C=7pt@R=7pt{
        &&&&{@}\ar@{-}[dl]\ar@{-}[dr] \\
        &&&`l\ar@{-}[d]&&`l\ar@{-}[d]&\\
        &&&`l\ar@{-}[d]&&`l\ar@{-}[d]&`(.)\ar@{.>}[l]\\
        &&&`l\ar@{-}[d]&&\ar@{.>}@(d,l)[uu]\\
        &&&{@}\ar@{-}[dll]\ar@{-}[drr]\\
        &{@}\ar@{-}[dl]\ar@{-}[dr]&&&&{@}\ar@{-}[dl]\ar@{-}[dr]& \\
        \ar@{.>}@(dl,l)[uuuuurrr] &&\ar@{.>}[ddrr]&&\ar@{.>}@(dl,l)[uuuul]
        &&\ar@{.>}@(dr,ur)[ddll]  \\
        \\
        &&&&\triangledown\ar@{.>}@(d,l)[uuuuul]
      }
      \quad\raisebox{-.1\textwidth}{$"->"$} %
      \xymatrix @C=7pt@R=7pt{
        &&&&`l\ar@{-}[d]\\
        &&&&`l\ar@{-}[d]\\
        &&&&{@}\ar@{-}[dll]\ar@{-}[drr]\\
        &&@\ar@{-}[dl]\ar@{-}[dr]&&&&@\ar@{-}[dl]\ar@{-}[dr]\\
        &`l\ar@{-}[d]&&\ar@{.>}[ddr]&&\ar@{.>}@(dl,r)[uuuul] &&\ar@{.>}@(rd,ur)[ddlll]  \\ 
        &`l\ar@{-}[d]&`(.)\ar@{.>}[l]&&\\
        &\ar@{.>}@(d,l)[uu]&&&\triangledown\ar@{.>}@(dl,l)[uuuuu]
      }
      \quad %
      \raisebox{-.1\textwidth}{$"->"$} %
       \xymatrix @C=7pt@R=7pt{
        &&&`l\ar@{-}[d]\\
        &&&`l\ar@{-}[d]\\
        &&&@\ar@{-}[dll]\ar@{-}[drr] \\
        `(.)\ar@{.>}[r]&`l\ar@{-}[d]&&&&@\ar@{-}[dl]\ar@{-}[dr]\\
        &\ar@{.>}@(d,ul)[ddrr] &&&\ar@{.>}@(dl,r)[uuuul]&&\ar@{.>}@(dr,ur)[ddlll]\\
        &&&&&\\
        &&&\triangledown\ar@{.>}@(dl,l)[uuuuu]}
      \raisebox{-.1\textwidth}{$"->"$}
\xymatrix @C=7pt@R=7pt{
        `l\ar@{-}[d]&`(.)\ar@{.>}[l]\\
        `l \ar@{-}[d]\\
        \ar@{.>}@(d,r)[u]
      }
    }
  \end{displaymath}
  \caption{Terms with duplicators and erasures}\label{fig:merged}
\end{figure}

\section{Restricted terms $\Lin$}\label{sec:LinLambda}

In this section, we focus on  \emph{restricted terms}~\cite{hindley97:_basic_simpl_theor}   with implicit names~\cite{NGDeBruijn108,Lescanne95WADT}.
We first define the concept of list types, which we will refer to as $\Ell$-types. Then we define a type system which assigns $\Ell$-types to $\lambda$-terms with implicit names and show how this type system singles out 
linear terms  with implicit names. 
The set of terms typeable with $\Ell$-types will be denoted by~$\Lin$.
 
\subsection{$\Ell$-types for $\Lin$}
\label{sec:Lin-types}

Lists of natural numbers are called $\Ell$-\emph{types} for $\Lin$.

\begin{definition}[$\Ell$-types] The abstract syntax of $\Ell$-types is given by
\begin{eqnarray*}
    \ell  &::=& [] \mid i :: \ell \qquad\qquad \textrm{where~} i \in \nat
\end{eqnarray*}
\end{definition}
The empty list is $[]$ and the \emph{cons} operation, $::$, puts an
element in front of a list. We write\footnote{Beware ! The reader
    should not confuse lists as $\Ell$-types and lists of references.} the list made of $1 :: ( 3 :: (5 :: [] ))$ as $[1, 3, 5]$. A list is
\emph{affine} if its elements are not repeated.  On lists, we define 
two operations: a binary operation \emph{merge}, $\zip$, and a unary operation
decrement, $\downarrow$.
 
\begin{definition}[Merge]\label{def: merge}
The binary operation $\zip$ 
 which merges two lists is defined as follows:
\begin{eqnarray*}
  [] \zip \ell &=& \ell\\
  (i :: \ell) \zip [] &=& i :: \ell\\
  (i_1 :: \ell_1) \zip (i_2 :: \ell_2) &=& \mathbf{if~} i_1 < i_2
                                           \mathbf{~then~} i_1 :: (\ell_1
                                           \zip (i_2 :: \ell_2))
  \\
 && \mathbf{if~} i_1 > i_2 \mathbf{~then~}i_2 :: ((i_1 :: \ell_1) \zip \ell_2 )
\end{eqnarray*}
\end{definition}

\begin{remark}
  Be aware that $\zip$ is \textbf{not total}.   For instance if $j$ occurs both in $\ell_1$ and in $\ell_2$ then $\ell_1\zip \ell_2$ is not defined.
  Also, note that if two sorted lists are merged, the result is a sorted
  list.
\end{remark}
If all elements of a list are strictly positive, the list is said to be a \emph{strictly positive list}. We define a unary operation $\downarrow$ on strictly positive lists.  The result is either the empty list or the list where  all  indices of the initial list are decremented.

\begin{definition}[Decrement]\label{def:decrement} The unary operation $\downarrow$ is defined as follows:
\begin{eqnarray*}
	\downarrow []   &=& [] \\
  \downarrow ((i+1) :: \ell)  &=& i ::~ \downarrow \ell
\end{eqnarray*}
We assume that the list $(i+1) :: \ell$ is strictly positive, thus the list $\ell$ is also strictly positive and $\downarrow \ell$ is defined.
\end{definition}
The function $\downarrow$ \textbf{fails} if the list contains $0$.  Said otherwise $\downarrow$ is not total, that is $\downarrow$ is only defined on strictly positive lists.

The type system that  defines the set of restricted terms $\Lin$ is given as follows.
\begin{definition}[Terms $\Lin$]\label{def:lin}
A $\lin$-term is a $`l$-term that can be typed by the following rules.
\\

\begin{center}
  \noindent
    \Ovalbox{
    \begin{minipage}{.75\textwidth}
      \begin{center}
  \begin{displaymath}
    \mathsf{(ind)}~
\prooftree
    \quad
    \justifies \underline{i} : [i]
    \endprooftree
    \qquad \qquad
    \mathsf{(abs)}~
    \prooftree
    t : 0 :: \ell
    \justifies `l t :~ \downarrow \ell
    \endprooftree
    \qquad\qquad
    \mathsf{(app)}~
    \prooftree
    t_1 : \ell_1 \qquad t_2 : \ell_2  
    \justifies t_1 \; t_2 : \ell_1 \zip \ell_2
    \endprooftree
  \end{displaymath}
\end{center}

\medskip

\end{minipage}}
\end{center}
The set of all $\lin$-terms is denoted by $\Lin$.
\end{definition}

\Sim{The \textsf{(abs)} rule enforces the constraint that} \Sil{$\lambda x.t$ is well-typed under the condition that} \Sim{$x$ belongs to the set of free variables of $t$, via the condition  $t : 0 :: \ell$.}  If the function $\downarrow$
fails the rule \textsf{(abs)} fails as well. \Sim{Furthermore, in the \textsf{(app)} rule, the usage of the merge operator $\zip$ disallows the use of the
same free variable in both sub-terms.} Likewise, if the operator
$\zip$ fails the rule \textsf{(app)} fails as well.  Thus the non
determinism of the type system lies in the failures of the functions it uses. 

The given type system \emph{has no side condition}. \Sil{The same holds for the other two type systems of this paper}, namely they have also no side condition for the rules  \textsf{(abs)} and \textsf{(app)}.


An $\Ell$-type assigned to a term represents the list of natural numbers corresponding to its free implicit names. 
For instance,
$`l\, \zero\,\underline{5}\,\two$ has $\Ell$-type $[1,4]$ since the $\Ell$- type of $\zero\,\underline{5}\,\two$ is clearly $[0, 2, 5]$ and to obtain the $\Ell$-type of $`l\, \zero\,\underline{5}\,\two$  one removes the $\zero$ which is bound and one decrements the other indices.
Moreover, it is a sorted list, as shown by the following Proposition~\ref{prop:sortLin}.

\begin{proposition}[Sortedness of lists]\label{prop:sortLin}
If $t : \ell$ then $\ell$ is sorted.
\end{proposition}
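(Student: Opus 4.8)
The plan is to argue by induction on the derivation of $t : \ell$ — equivalently, by structural induction on $t$, since the shape of $t$ determines which of the three rules was applied last. To make the $\mathsf{(abs)}$ case go through I would actually prove the slightly stronger statement that $\ell$ is \emph{strictly} sorted (strictly increasing); this costs nothing, since by the Remark following Definition~\ref{def: merge} the operation $\zip$ is only defined when the two lists have disjoint element sets, so every derivable $\Ell$-type is in fact repetition-free.

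The base case is rule $\mathsf{(ind)}$: here $\ell = [i]$ is a one-element list, hence vacuously strictly sorted. For rule $\mathsf{(app)}$ we have $t = t_1\,t_2$ with $t_1 : \ell_1$, $t_2 : \ell_2$ and $\ell = \ell_1 \zip \ell_2$; since the rule fired, $\ell_1 \zip \ell_2$ is defined, and the induction hypothesis gives that $\ell_1$ and $\ell_2$ are strictly sorted. A routine induction on the two lists, following the clauses of Definition~\ref{def: merge} (this is the Remark after that definition, read with ``strictly''), then shows $\ell_1 \zip \ell_2$ is strictly sorted.

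For rule $\mathsf{(abs)}$ we have $t = `l\,t'$ with $t' : 0 :: \ell'$ and $\ell = \downarrow \ell'$. By the induction hypothesis $0 :: \ell'$ is strictly sorted, hence $\ell'$ is strictly sorted and, because $0$ is its least element and repetitions are excluded, every entry of $\ell'$ is $> 0$; so $\ell'$ is strictly positive and $\downarrow \ell'$ is well-defined by Definition~\ref{def:decrement}. An easy induction on $\ell'$ shows that decrementing every entry of a strictly positive, strictly-sorted list yields a strictly-sorted list, so $\ell = \downarrow \ell'$ is strictly sorted, a fortiori sorted.

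The only point requiring a little care — the main, and rather minor, obstacle — is to carry the strict-sortedness invariant (equivalently, ``sorted and affine'') through the induction rather than mere non-decreasingness: without strictness, $0 :: \ell'$ being sorted would not force $\ell'$ to be strictly positive, and the $\mathsf{(abs)}$ step could not even guarantee that $\downarrow \ell'$ is defined. Granted that, the two supporting monotonicity facts about $\zip$ and $\downarrow$ are immediate case analyses on their defining clauses.
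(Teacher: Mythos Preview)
Your argument is correct and follows the same skeleton as the paper's proof: induction on the typing derivation, with the base case $[i]$ trivially sorted and the inductive steps appealing to the fact that $\zip$ and $\downarrow$ preserve sortedness. The paper's proof says exactly this in one line.

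Your strengthening to \emph{strict} sortedness, however, is not needed, and the justification you give for it rests on a small misreading of the type system. You write that without strictness ``the $\mathsf{(abs)}$ step could not even guarantee that $\downarrow \ell'$ is defined.'' But you never have to guarantee this: the hypothesis is that the derivation $`l t' : \downarrow \ell'$ exists, so the rule $\mathsf{(abs)}$ has already fired successfully, which by the paper's convention means $\downarrow \ell'$ \emph{is} defined (the rule fails precisely when it is not). All you must show is that $\downarrow$ applied to a sorted, strictly positive list yields a sorted list, and that is immediate since $i+1 \le j+1$ implies $i \le j$. So ordinary (non-strict) sortedness suffices as the inductive invariant, and the paper's terse proof is complete as stated. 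Your stronger invariant is true (it is essentially Proposition~2 on affineness), just not required here.
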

\begin{proof}
  $[i]$ is sorted and $\zip$ and $\downarrow$ preserve
  sortedness.
\end{proof}

\begin{example}[Typing terms]\label{typing_terms}
\begin{displaymath}
  \prooftree 
  \prooftree 
  \prooftree 
  \one : [1]
  \qquad
 \zero : [0]
  \justifies \one\,\zero : [0,1]
  \endprooftree 
  \justifies `l \one\,\zero : [0]
  \endprooftree 
  \justifies `l `l \one\, \zero : []
  \endprooftree 
  \qquad
  \prooftree 
  \prooftree 
  \two : [2]
   \qquad
  \zero : [0]
  \justifies \two\,\zero : [0,2]
  \endprooftree 
  \qquad
  \prooftree 
   \one : [1]
    \qquad
   \zero : [0]
    \justifies \one\,\zero : [0,1]
  \endprooftree 
  \justifies \two\,\zero \, (\one\,\zero) : ? 
  \using \Warning
  \endprooftree 
  \qquad
  \prooftree 
  \prooftree 
  \zero :[0]
  \justifies `l\zero : []
  \endprooftree 
  \justifies `l `l \zero : ?
  \using \Warning
  \endprooftree 
\end{displaymath}

Let us 
highlight the following facts: 
\begin{enumerate}
 
\item The term $\two\,\zero \, (\one\,\zero)$ is not $\Ell$-typeable since there are two free occurrences of index $\zero$.  We cannot merge lists $[0, 2]$ and $[0, 1]$, 
 thus $\two\,\zero \, (\one\,\zero)$ does not belong to $\Lin$.

\item The empty list $[]$ does not start with $0$, thus $`l`l\udl{0}$ is not $\Ell$-typeable, i.e. it does not belong to  $\Lin$.
\end{enumerate}
\end{example}

We prove further which combinators belong to $\Lin$ and which do not.

\Sil{\begin{proposition}[Combinators in $\Lin$]\label{prop:combinators} 
For the basic combinators ${\sf S} \equiv \lambda \lambda \lambda \two \zero (\one \zero) $, ${\sf K} \equiv \lambda \lambda \zero$, ${\sf W} \equiv \lambda \lambda \one \zero \zero $
 ${\sf B} \equiv \lambda \lambda \lambda \two (\one \zero)$, ${\sf C} \equiv \lambda \lambda \lambda \two\zero  \one$, and ${\sf I} \equiv \lambda 0$ the following holds:
\begin{enumerate}
	\item The combinators ${\sf S}$, ${\sf K}$ and ${\sf W}$ are not $\lin$-terms, i.e. ${\sf S},\; {\sf K},\; {\sf W}\;\not \in \Lin$.
	\item The combinators ${\sf B}$, ${\sf C}$ and ${\sf I}$ are $\lin$-terms, i.e. ${\sf B},\; {\sf C},\; {\sf I}\;  \in \Lin$, moreover ${\sf B}:[]$, ${\sf C}:[]$ and ${\sf I}:[]$ .
\end{enumerate}
\end{proposition}
\begin{proof}
	\begin{enumerate}
		\item Example~\ref{typing_terms} proves that ${\sf S} \equiv \lambda \lambda \lambda \two \zero (\one \zero) \not \in \Lin$.  Furthermore, ${\sf K} \equiv \lambda \lambda \one \not \in \Lin$ and  ${\sf W} \equiv \lambda \lambda  \one \zero \zero  \not \in \Lin$ since

 \begin{displaymath}
  \prooftree 
  \one :[1]
  \justifies `l\one : ?
    \using \Warning
  \endprooftree 
 \qquad
  \prooftree 
  \prooftree 
  \one : [1]
  \qquad
 \zero : [0]
  \justifies 
  \one\,\zero : [0,1]
  \endprooftree 
  \qquad
   \zero : [0]
  \justifies 
  \one\, \zero\, \zero : ?
   \using \Warning
  \endprooftree 
\end{displaymath}

		\item The proof-trees given below prove that  ${\sf B} \equiv \lambda \lambda \lambda \two (\one \zero) : []$, ${\sf C }\equiv \lambda \lambda \lambda \two \zero \one  :[]$	 and
${\sf I} \equiv \lambda \zero :[]$ 

\begin{displaymath}
\prooftree
 \prooftree
  \prooftree 
  \prooftree 
  \two : [2]
  \qquad
  \prooftree 
  \one : [1]
  \qquad
 \zero : [0]
  \justifies 
  \one\,\zero : [0,1]
  \endprooftree 
  \justifies 
  \two\, (\one\, \zero) : [0,1,2]
  \endprooftree 
  \justifies
  `l \two\, (\one\, \zero) : [0,1]
  \endprooftree 
  \justifies `l `l \two\, (\one\, \zero) : [0]
  \endprooftree 
   \justifies \lambda `l `l \two\, (\one\, \zero) : []
   \endprooftree   
  \qquad
 \prooftree
 \prooftree
  \prooftree 
  \prooftree 
  \prooftree 
  \two : [2]
  \qquad
 \zero : [0]
  \justifies 
  \two\,\zero : [0,2]
  \endprooftree 
  \qquad
   \one : [1]
  \justifies 
  \two\, \zero\, \one : [0,1,2]
  \endprooftree 
  \justifies
  `l \two\,  \zero\, \one : [0,1]
  \endprooftree 
  \justifies `l `l \two\,  \zero\, \one : [0]
  \endprooftree 
   \justifies \lambda `l `l \two\, \zero\, \one : []
   \endprooftree   
  \qquad
  \prooftree 
  \zero :[0]
  \justifies `l\zero : []
  \endprooftree 
\end{displaymath}
	
	\end{enumerate}
\end{proof}
}

\Sil{

\begin{proposition}
If $t$ is a  ${\sf B} {\sf C} {\sf I}$-term, i.e. a term generated by the combinators ${\sf B}, {\sf C}$ and ${\sf I}$, then $t : []$.
\end{proposition}

\begin{proof} By Proposition~\ref{prop:combinators} (2) and the fact that application is closed for terms typeable with empty lists, i.e. if $t:[]$ and $s:[]$, then $ts:[]$.
	
\end{proof}
}

Two lists are merged only in the rule {\sf (app)}. In order to successfully apply $\zip$, the two lists of free indices must be disjoint and there cannot be more than one occurrence of an index in the application. Moreover, in order to apply the rule {\sf (abs)}, the unique membership of the index to be abstracted is checked. In a term of type $[]$ all the indices are abstracted, then the check for unique membership is made for all of them. This justifies the following definition of linearity of $\lambda$-terms with implicit names.

\begin{definition}[Linearity of $\lambda$-terms]
  A $\lambda$-term $t$ is said to be \emph{linear}  if $t : []$.
\end{definition}

%



\Sil{To this end, the notion of $\Ell$-typeability has enabled a direct characterisation of 
linearity of  $\lambda$-terms with implicit names. As discussed in the Introduction (Section~\ref{sec:introduction}) the direct method for checking linearity of $\lambda$-terms with explicit names does not work for $\lambda$-terms with implicit names and the indirect method which involves the translation of $\lambda$-terms with implicit names back to $\lambda$-terms with  explicit names followed by linearity check on the later is algorithmically costly. 
}


\paragraph{Remark}\label{sec:red-Lin}
We do not treat reduction in $\Lin$, or more precisely,
reduction using implicit substitution.  We will treat fully reduction in
the framework of explicit substitution in $\Luin$
(Section~\ref{sec:luin}) and in the extended language $\LRdBin$ (Section~\ref{sec:LRdB}).  Consequently the reader will find no
$`b$-reduction and no statement of a theorem of type preservation in this section.
For a discussion the reader is invited to look at 
Section~\ref{subsec:reduction}.

\section{Restricted terms with explicit substitution $\Luin$}\label{sec:luin}
\newcommand{\RulesLuin}{
 \noindent
\begin{displaymath}
\begin{array}{rcl@{\quad}l}
 \thickhline
(\lambda t_1)\,t_2 &\luarone& t_1\{t_2,0\}& (\mathsf{B}_{in})\\
(t_1\,t_2)\upd{i} &\luarone& t_1\upd{i}\,t_2 \upd{i}& (\mathsf{App}_{\upd{}})\\
(t_1\,t_2)\{t_3,i\} &\luarone& t_1\{t_3,i\}\,t_2\{t_3,i\} & (\mathsf{App}_{\{\}})\\
(\lambda t)\upd{i} &\luarone& \lambda (t\upd{i+1})& (\mathsf{Lambda}_{\upd{}})\\
(\lambda t_1)\{t_2,i\} &\luarone& \lambda (t_1\{t_2,i+1\})& (\mathsf{Lambda}_{\{\}})\\
 \udl{0}\{t,0\} &\luarone& t&  (\mathsf{FVar}_{\{\}})\\
 \udl{n+1}\{t,0\} &\luarone& \udl{n} & (\mathsf{RVar}_{\{\}})\\
 \udl{0}\{t,i+1\} &\luarone& \udl{0}& (\mathsf{FVarLift}_{\{\}})\\
\udl{n+1}\{t,i+1\} &\luarone& \udl{n}\{t,i\} \upd{0}& (\mathsf{RVarLift}_{\{\}})\\
\udl{0}\upd{i+1} & \luarone& \udl{0} &(\mathsf{FVarLift}_{\upd{}})\\
\udl{n+1}\upd{i+1} & \luarone& \udl{n} \upd{i}\upd{0} &(\mathsf{RVarLift}_{\upd{}})\\
\udl{n}\upd{0} & \luarone& \udl{n+1}  &(\mathsf{VarShift}_{\upd{}})\\[3pt]
\thickhline
\end{array}
\end{displaymath}}


In this section, we focus on terms with implicit names and explicit substitution. 
We start from the $\lambda{\upsilon}$-calculus, a
simple calculus with explicit substitution introduced by Lescanne in~\cite{DBLP:conf/popl/Lescanne94}.  First, we modify the syntax and define restricted terms, dubbed $\luin$-terms, by typeability with $\Ell$-types. We then prove $\Ell$-type preservation under reduction. The design of the language is inspired by~\cite{Lescanne95WADT}.

The set of plain $\lu$-terms, denoted by $\Lu$, is given by the following syntax:
\begin{eqnarray*}
  t &::=& \udl{n} \mid \lambda t \mid tt \mid t[s]\\
  s &::=& t/ \mid \; \lift(s) \mid \;\shift
\end{eqnarray*}

A term $t$  can be a natural number $\udl{n}$ (i.e. a de Bruijn index), an abstraction, an application or a substituted term, where a substitution can be one of the following three: a slash $t/$, a lift $\lift(s)$ or a shift $\shift$.

The rewriting rules of the $\lu$-calculus are given in Figure~\ref{fig:rewriting rules for Lu}.

\begin{figure}[!h]
\begin{displaymath}
  \begin{array}{rcl@{\quad}l}
    \thickhline
(`l t_1) \, t_2 &\luar& t_1 [ t_2/ ] &\mathsf{(B)}\\
 (t_1\,t_2)[s] &\luar& (t_1[s]) \, (t_2 [s]) & \mathsf{(App)}\\
 (`l t) [s] &\luar& `l (t [ \lift(s)])  &\mathsf{(Lambda)}\\
 \udl{0} [t/] &\luar& t & \mathsf{(FVar)}\\
 \udl{n+1}[t/] &\luar& \udl{n}  &\mathsf{(RVar)}\\
 \udl{0}[\lift(s)] &\luar& \udl{0} & \mathsf{(FVarLift)}\\
 \udl{n+1} [\lift(s)] &\luar& \udl{n}[s][\shift] & \mathsf{(RVarLift)} \\
    \udl{n}[\shift]&\luar& \udl{n+1} & \mathsf{(VarShift)}\\
 \thickhline
  \end{array}
\end{displaymath}
\caption{The rewriting system for $\lu$-calculus}\label{fig:rewriting rules for Lu}
\end{figure}%

\begin{example}[$`b$-reduction]\label{ex:b}
  Let us write $t\downarrow_\upsilon$ the normalisation of $t$ using the set of rules
  \begin{displaymath}
\upsilon \ = \ \{\mathsf{(App)}, \mathsf{(Lambda)}, \mathsf{(FVar)}, \mathsf{(RVar)}, \mathsf{(FVarLift)}, \mathsf{(RVarLift)}, \mathsf{(VarShift)} \}.
\end{displaymath}
Said otherwise, one eliminates in $t$ all the occurrences of
substitutions. Given a $`b$-redex $(`l t_1) \, t_2$, its
$`b$-reduction is the term $(t_1 [ t_2/ ])\downarrow_\upsilon$.  This means that after the substitution $[ t_2/ ]$ has been introduced, it is eliminated. 
\end{example}

In what follows $\luarplus$ is the transitive closure of the rewriting relation $\luar$.
In order to characterise linearity by a type system, we consider two kinds of objects:  
\begin{itemize}
\item $[\lift^i(\shift)]$ is called an \emph{updater} and abbreviated as $\upd{i},\;i=0,1,...$, whereas
\item $[\lift^i(t/)]$ is called simply a \emph{substitution} and abbreviated as $\{t,i\},\;i=0,1,...$.
\end{itemize}

   According to the introduced abbreviations, we propose an alternative syntax that will be used in the definition of terms $\Luin$:
\begin{eqnarray*}
  t &::=& \udl{n} \mid \lambda t \mid tt \mid t\upd{i} \mid t\{t,i\}
\end{eqnarray*}

\noindent Furthermore, we propose an alternative rewriting system for $\lu$-calculus, given in Figure~\ref{fig:rewriting rules for Luin}, which is in accordance with the new syntax introduced above.

\begin{figure}[!h]

\medskip

\RulesLuin  

\caption{An alternative rewriting system for $\lu$-calculus}\label{fig:rewriting rules for Luin}
\end{figure}

\begin{proposition}
The rewriting system given by the  rules in Figure~\ref{fig:rewriting rules for Luin} is computationally equivalent to the rewriting system of $\lu$ given in Figure~\ref{fig:rewriting rules for Lu}.
\end{proposition}

\begin{proof} 
	
  Let us show, on four rules\Pier{, as a paradigm for the others,} the computational equivalence of the systems of Figure~\ref{fig:rewriting rules for Lu} and
  Figure~\ref{fig:rewriting rules for Luin}.
  \begin{itemize}
    
  \item \emph{Consider rule $(\mathsf{B}_{in})$:} In Figure~\ref{fig:rewriting rules for
      Lu} and Figure~\ref{fig:rewriting rules for
      Luin}, $(\mathsf{B})$ and $(\mathsf{B}_{in})$ have the same left-hand side.  In Figure~\ref{fig:rewriting rules for
      Lu} the right-hand side is $t_1[t_2/]$ that is $t_1[\lift^0(t_2/)]$ which writes $t_1\{t_2,0\}$ in the syntax of Figure~\ref{fig:rewriting rules for Luin}.
  \item \emph{Consider rule $(\mathsf{Lambda}_{\upd{}})$:} Written in the syntax of Figure~\ref{fig:rewriting rules for
      Lu}, left-hand side $(\lambda t)\upd{i}$ is $(`l t)[\lift^i(\shift)] $ and right-hand side
    $\lambda (t\upd{i+1})$ is $`l(t[\lift^{i+1}(\shift)])$. Therefore right-hand side of rule
    $(\mathsf{Lambda}_{\upd{}})$ of Figure~\ref{fig:rewriting rules for Luin} is obtained from left-hand side of rule
    $(\mathsf{Lambda}_{\upd{}})$ by application of rule $(\mathsf{Lambda})$ of Figure~\ref{fig:rewriting rules for Lu}.
  \item \emph{Consider rule $(\mathsf{Lambda}_{\{\}})$:} Written in the syntax of Figure~\ref{fig:rewriting rules for
      Lu}, left-hand side $(\lambda t_1)\{t_2,i\}$ is $(`l t_1)[\lift^i(t_2/)] $ and right-hand side
    $\lambda (t_1\{t_2,i+1\})$ is \(`l(t_1[\lift^{i+1}(t_2/ )])\). Therefore right-hand side of rule
    $(\mathsf{Lambda}_{\{\}})$ of Figure~\ref{fig:rewriting rules for Luin} is obtained from left-hand side of rule
    $(\mathsf{Lambda}_{\{\}})$ by application of rule $(\mathsf{Lambda})$ of Figure~\ref{fig:rewriting rules for Lu}.
  \item \emph{Consider rule $(\mathsf{RVarLift}_{\{\}})$:} Written in the syntax of Figure~\ref{fig:rewriting rules for
      Lu}, left-hand side $\udl{n+1}\{t,i+1\}$ is $\udl{n+1}[\lift^{i+1}(t/)] $ and right-hand side
    $\udl{n}\{t,i\} \upd{0}$ is $\udl{n}[\lift^i(t/)][\shift]$. Therefore right-hand side of rule
    $(\mathsf{RVarLift}_{\{\}})$ of Figure~\ref{fig:rewriting rules for Luin} is obtained from left-hand side of rule
    $(\mathsf{RVarLift}_{\{\}})$ by application of rule $(\mathsf{RVarLift})$ of Figure~\ref{fig:rewriting rules for Lu}.
\end{itemize}

\end{proof}

\subsection{$\Ell$-types for $\Luin$}
\label{sec:type-Luin}

Just like in the case of $\Lin$, $\Ell$-types for $\Luin$ provide information on free indices of a  $\luin$-term. In a declaration $t:\ell$, the type $\ell$ represents a sorted  list of free indices of~$t$. In order to define the language $\Luin$ we need to extend the notion of $\Ell$-types with a new operation and the notion of filters on $\Ell$-types.

The operation \Pier{\textit{increment,}} denoted by $\mapup$,
    increments all the elements of a list and should not be confused with $\downarrow$, defined in Definition~\ref{def:decrement}, which decrements the indices of a list.

\begin{definition}[Increment]\label{def:increment} The unary operation $\mapup$ is defined as follows:
\begin{eqnarray*}
	\mapup []   &=& [] \\
 \mapup (i :: \ell)  &=& (i+1) ::~ \mapup \ell
\end{eqnarray*}
\end{definition}
\Sim{By $\mapup^i$ we denote $i$ applications of the operation $\mapup$, i.e. $\mapup^0 \ell = \ell$ and $\mapup^{i+1} = \mapup \left(\mapup^i \ell\right)$.} Since \emph{shift} is not defined in the same context as \emph{increment}, there is no confusion despite both use the same symbol.


\subsubsection*{Filters on lists}
\label{sec:mod}
In order to ease list manipulation, we introduce filters on lists.
Given a predicate $p$ on naturals and a list $\ell$, then  $(p\,\mid\,\ell)$ is the list filtered by the predicate.
\begin{eqnarray*}
  (p \mid []) &=& [] \\
  (p \mid i :: \ell) &=& \mathsf{if~} p(i) \mathsf{~then~} i :: (p \mid \ell) \mathsf{~else~} (p \mid \ell)
\end{eqnarray*}

We will consider three basic predicates
\begin{eqnarray*}
  <i &=^{def}& \bm{\lambda} k \;.\; k < i\\ 
  >i &=^{def}& \bm{\lambda} k \;.\; k > i\\ 
\ge i &=^{def}& \bm{\lambda} k \;.\; k \ge i
\end{eqnarray*}

We can modify a predicate $p$ into $p_{[i "<-" e]}$ which is $p$ in which
each free occurrence of $i$ is replaced by $e$.  
Assume that predicates are made of
\begin{itemize}
\item constants,
\item free variables,
\item basic predicates $<i$, $>i$, and $\ge i$,
\item logical connectors,
\item functions on the naturals, like $\bm{\lambda} k \;.\; k +1$
\end{itemize}
we define $p_{[i"<-"e]}$ by induction as follows (\Sim{we denote an expression by $expr$ and} we
  assume $k$ is not the same variable as $i$):
\begin{itemize}
\item $(<expr)_{[i"<-"e]} =^{def} \bm{\lambda} k . k < expr_{[i"<-"e]}$
\item $(>expr)_{[i"<-"e]} =^{def} \bm{\lambda} k . k > expr_{[i"<-"e]}$
\item $(\ge expr)_{[i"<-"e]} =^{def} \bm{\lambda} k . k \ge expr_{[i"<-"e]}$
\item $(p \vee q)_{[i"<-"e]} =^{def} p_{[i"<-"e]} \vee q_{[i"<-"e]}$
\item $(p \wedge q)_{[i"<-"e]} =^{def} p_{[i"<-"e]} \wedge q_{[i"<-"e]}$
\end{itemize}
The substitution in expressions over the naturals is done as usual, as the substitution in universal algebra.

\begin{example}[Predicates]~
  \begin{itemize}
  \item $(<3\mid [0,2,3,4]) = [0,2]$
  \item $(\ge 3 \mid  [0,2,3,4]) = [3,4]$
  \item $(> 3 \mid  [0,2,3,4]) = [4]$
  \item $\uparrow (\ge 3 \mid  [0,2,3,4]) = \; \uparrow [3,4] = [4,5]$
  \item $\downarrow(\ge 3 \mid  [0,2,3,4])  = \; \downarrow [3,4] = [2,3]$
  \item $(< (i+1))_{[i"<-"j+1]} = \; < ((j + 1) + 1) = \; < (j + 2)$
  \item $(\ge (i+1))_{[i"<-"0]} = \; \ge 1 $
  \end{itemize}
\end{example}
Now, we can prove the following auxiliary lemma, containing list related properties needed in the proof of type preservation.

\begin{lemma}\label{lem:on-zip}~ Let $\ell, \ell_1$, $\ell_2$ and $\ell_3$ be sorted lists. The following equations hold, if all lists that appear in the equations are defined.
\begin{enumerate}[label=\alph*)]
\item $\ell_1 \zip \ell_2 = \ell_2 \zip \ell_1$;
  \item $\ell_1 \zip (\ell_2 \zip \ell_3) = (\ell_1 \zip \ell_2) \zip \ell_3$;
\item $(p\mid\ell_1) \zip (p\mid\ell_2) ~=~ (p\mid\ell_1 \zip \ell_2)$;
\item $\mapup \ell_1 \zip \mapup \ell_2  ~=~ \mapup(\ell_1 \zip \ell_2)$;
\item $\mapdown \ell_1 \zip \mapdown \ell_2  ~=~ \mapdown(\ell_1 \zip \ell_2)$;
\item $\mapup(p\mid\ell)= (p_{[i"<-"i+1]}\mid\mapup \ell)$;
\item $\mapdown(p_{[i "<-" i+1]}\mid\ell)= (p\mid\mapdown \ell)$. 
\end{enumerate}
\end{lemma}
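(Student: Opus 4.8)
The plan is to establish all seven identities by structural induction on the list arguments, each time unfolding the recursive clauses of $\zip$ (Definition~\ref{def: merge}), of the decrement $\mapdown$ (Definition~\ref{def:decrement}) and the increment $\mapup$, and of the filter $(p\mid{-})$. The standing hypothesis that every list written in an equation is defined is what keeps the case analysis finite: for $\zip$ it means the two lists being merged are disjoint as sets, and for $\mapdown$ it means the list is empty or strictly positive; I would invoke it at each induction step to know that the sub-merges produced are again defined and, for associativity, that the three lists are pairwise disjoint. A useful mental picture is the set-theoretic one: a sorted affine list is a finite subset of $\nat$, $\zip$ is disjoint union, $\mapup$ and $\mapdown$ are the images under $k\mapsto k{+}1$ and $k\mapsto k{-}1$, and $(p\mid{-})$ is intersection with $\{k\mid p(k)\}$; under this dictionary (a) is commutativity and (b) associativity of disjoint union, (c) is distributivity of intersection over union, (d)--(e) say injective images commute with disjoint union, and (f)--(g) say that shifting all indices interacts with a threshold filter exactly by shifting the threshold.

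Concretely, for (a) I would do simultaneous induction on $\ell_1,\ell_2$: the cases with an empty list are immediate from the first two clauses of $\zip$; in the cons/cons case definedness forces the heads $i_1,i_2$ to differ, so exactly one guarded branch fires on each side, and when $i_1<i_2$ the two sides reduce to $i_1 :: (\ell_1\zip(i_2::\ell_2))$ and $i_1 :: ((i_2::\ell_2)\zip\ell_1)$, which the induction hypothesis equates (symmetrically for $i_1>i_2$). For (d) I would induct on the total length and use that $k\mapsto k{+}1$ is strictly monotone, so $i_1<i_2 \iff i_1{+}1<i_2{+}1$ and $\mapup$ commutes with $\zip$ clause by clause; (e) is the same with $k\mapsto k{-}1$, strictly monotone on the strictly positive naturals that occur — this is precisely where the definedness hypothesis for $\mapdown$ is used. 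For (c), again by induction on $\ell_1,\ell_2$: in the cons/cons case with $i_1<i_2$, the key observation is that, $i_2::\ell_2$ being sorted, every element of $(p\mid i_2::\ell_2)$ is $\ge i_2>i_1$, so $i_1$ stays minimal after filtering; splitting on whether $p(i_1)$ holds then makes the two sides expose the same leading behaviour, and the induction hypothesis closes the tails.

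For (f) and (g) the list induction is routine once one has the pointwise equivalence $p_{[i"<-"i+1]}(k{+}1) \iff p(k)$ for all $k$; I would prove this by structural induction on the shape of $p$ straight from the clauses defining $p_{[i"<-"e]}$, the base cases being arithmetic trivialities ($k{+}1<(i{+}1)\iff k<i$, and likewise for $>$ and $\ge$) and the connectives distributing. Since $\mapup\ell$ contains only positive numbers, the exceptional value $0$ never enters the filter on the right-hand side of (f), so this equivalence suffices; (g) then follows from (f), or directly, by unfolding $\mapdown$ together with the filter, which is why it is flagged as holding essentially by definition.

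The genuinely laborious step will be (b). There I would compare the three heads $i_1,i_2,i_3$, use definedness to rule out the equalities and to get pairwise disjointness of $\ell_1,\ell_2,\ell_3$, and split into the cases ``$i_1$ minimal'', ``$i_2$ minimal'' and ``$i_3$ minimal''; in each case both $\ell_1\zip(\ell_2\zip\ell_3)$ and $(\ell_1\zip\ell_2)\zip\ell_3$ expose the same minimal head and what is left reduces to the induction hypothesis, with (a) available to collapse symmetric subcases. The difficulty is bookkeeping rather than ideas: keeping track of which intermediate merges are defined so that every recursive call makes sense — exactly the concern that forces the explicit proof arguments in the \textsf{Agda} development.
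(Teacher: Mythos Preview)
Your proposal is correct and follows essentially the same approach as the paper: structural induction on the list arguments with case analysis on the heads, using the definedness hypothesis to exclude equal heads in $\zip$. The paper's proof is terser---for (b) it displays only the chain for $n_1<n_2<n_3$ and waves at the other orderings, and it proves (f) and (g) directly by induction rather than via your pointwise equivalence $p_{[i\leftarrow i+1]}(k{+}1)\iff p(k)$---but the underlying argument is the same, and your set-theoretic framing and explicit monotonicity observations are helpful glosses rather than a different route.
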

\begin{proof}
  \newcommand{\IF}{\textsf{if~}}
  \newcommand{\THEN}{\textsf{~then~}}
    \newcommand{\ELSE}{\textsf{~else~}}
  \begin{description}
  \item[a)] Cases $[] \zip \ell$ and $\ell \zip []$ are by definition.
    Consider the case ${(n_1 ::  \ell_1) \zip (n_2 :: \ell_2)}$ with $n_1 <
    n_2$:
    \begin{eqnarray*}
      (n_1 ::  \ell_1) \zip (n_2 :: \ell_2) &=& n_1 :: (\ell_1 \zip (n_2 :: \ell_2)) \quad
                                                \textrm{(by definition)}\\
                                            &=& n_1 :: ((n_2 :: \ell_2) \zip \ell_1) \qquad \textrm{(by induction)} \\
      &=& (n_2 :: \ell_2) \zip (n_1 :: \ell_1) \quad
                                                \textrm{(by definition)}.
    \end{eqnarray*}
    Case $n_2 < n_1$ is symmetric.
  \item[b)] Cases where at least one of $\ell_1$, $\ell_2$ or $\ell_3$ is $[]$ are easy.
    For the general case, consider ${n_1 < n_2 < n_3}$.  The other cases
    are on the same pattern.
    \begin{eqnarray*}
      (n_1 ::  \ell_1) \zip ((n_2 :: \ell_2) \zip (n_3 :: \ell_3)) %
      &=&  (n_1 :: \ell_1) \zip (n_2 :: (\ell_2 \zip (n_3 :: \ell_3))) \\
      &=& n_1 :: (\ell_1 \zip (n_2 :: (\ell_2 \zip (n_3 :: \ell_3)))\\
      &=& n_1 :: (\ell_1 \zip ((n_2 :: \ell_2) \zip (n_3 :: \ell_3))) \\
      &=& n_1 :: ((\ell_1 \zip (n_2 :: \ell_2)) \zip (n_3 :: \ell_3)) \\
      &=& (n_1 :: (\ell_1 \zip (n_2 :: \ell_2))) \zip (n_3 :: \ell_3)\\
      &=& ((n_1 ::  \ell_1) \zip (n_2 :: \ell_2)) \zip (n_3 :: \ell_3)
    \end{eqnarray*}
  \item[c)] By case and induction 
    \begin{eqnarray*}
      (p \mid []) \zip (p \mid \ell) &=& [] \zip (p \mid \ell) = (p \mid \ell) = (p \mid [] \zip \ell)\\[5pt]
      (p \mid i :: \ell) \zip (p \mid []) &=& (p \mid i :: \ell) \zip []\\ 
            &=& (p \mid i :: \ell )\\
            &=& (p \mid (i :: \ell) \zip [])
    \end{eqnarray*}
    General case and sub-case $i_1 < i_2$ and $\neg p(i_1)$:
    \begin{eqnarray*}
      (p \mid i_1 :: \ell_1) \zip (p \mid i_2 :: \ell_2) &=& (p \mid \ell_1) \zip (p \mid i_2 :: \ell_2)\\
                                                         &=& (p \mid \ell_1 \zip (i_2 :: \ell_2)) = (p \mid i_1 :: (\ell_1 \zip (i_2 :: \ell_2))) \\
      &=& (p \mid (i_1 :: \ell_1) \zip (i_2 :: \ell_2))
    \end{eqnarray*}
    by induction.  Sub-case $i_1 < i_2$ and $p(i_1)$:
    \begin{eqnarray*}
      (p \mid i_1 :: \ell_1) \zip (p \mid i_2 :: \ell_2) &=& (i_1 :: (p \mid \ell_1)) \zip (p \mid i_2 :: \ell_2)\\
                                                         &=& i_1 :: ((p \mid \ell_1) \zip (p \mid i_2 :: \ell_2))\\
                                                         &=& i_1 :: (p \mid \ell_1 \zip (i_2 :: \ell_2)) = (p \mid i_1 :: (\ell_1 \zip (i_2 :: \ell_2)))\\
                                                         &=& (p \mid (i_1 :: \ell_1) \zip (i_2 :: \ell_2)).
    \end{eqnarray*}
    The sub-cases $i_2 < i_1$ ($\neg p(i_2)$ and $p(i_2)$) are similar.
  \item[d)] Also by case and induction
    \begin{displaymath}
      (\mapup []) \zip (\mapup \ell) = [] \zip (\mapup \ell) = \mapup\ell
      = \mapup ([] \zip \ell) 
    \end{displaymath}
    \begin{displaymath}
   ( \mapup (i :: \ell)) \zip (\mapup []) = (\mapup (i :: \ell)) \zip [] = \mapup (i :: \ell) = \mapup ((i :: \ell) \zip [])
  \end{displaymath}
  case $i_1 < i_2$ (hence $(i_1 + 1) < (i_2 + 1)$):
  \begin{eqnarray*}
    (\mapup (i_1 :: \ell_1)) \zip (\mapup (i_2 :: \ell_2)) &=& ((i_1 + 1) :: \mapup \ell_1) \zip ((i_2 + 1) :: \mapup \ell_2) \\
    &=& (i_1 + 1) :: ((\mapup \ell_1) \zip ((i_2 + 1) :: \mapup \ell_2)) \\
                                                       &=& (i_1 + 1) :: ((\mapup \ell_1) \zip (\mapup (i_2 :: \ell_2))) \\
    &=& (i_1 + 1) :: \mapup (\ell_1 \zip (i_2 :: \ell_2))\\
    &=& \mapup (i_1  :: (\ell_1 \zip (i_2 :: \ell_2))) = ~\mapup~ ((i_1 :: \ell_1) \zip (i_2 :: \ell_2))
  \end{eqnarray*}
  
  Case $i_2 < i_1$ is similar.
\item[e)] This proof is  similar to the proof of $d)$.

\item[f)] Also by case and induction
  \begin{eqnarray*}
    \mapup (p \mid []) &=& \mapup [] = [] = (p_{[i"<-"i+1]} \mid []) = (p_{[i"<-"i+1]} \mid \uparrow [])
  \end{eqnarray*}
  Sub-case $\neg p(i)$, hence $\neg p_{[i"<-"i+1]}(i+1)$ 
  \begin{displaymath}
    \begin{array}[h]{rcl@{\qquad}l}
      \mapup (p \mid i :: \ell) &=& \mapup (p \mid \ell) &\textrm{by definition of the filter~}\\
                                &=& (p_{[i"<-"i+1]} \mid \mapup \ell) & \textrm{by induction}\\
                                &=& (p_{[i"<-"i+1]} \mid (i+1) :: ~\mapup \ell)& \textrm{by definition of the filter}\\
                                &=& (p_{[i"<-"i+1]} \mid \mapup (i :: \ell)) &\textrm{by definition of~} \mapup
    \end{array}
  \end{displaymath}
  Sub-case $p(i)$, hence $p_{[i"<-"i+1]}(i+1)$
  \begin{displaymath}
    \begin{array}[h]{rcl@{\qquad}l}
      \mapup (p \mid i :: \ell) &=& \mapup (i :: (p \mid \ell)) &\textrm{by definition of the filter}\\
                                &=& (i+1) :: ~\mapup (p \mid \ell) & \textrm{by definition of~} \mapup\\
                                &=& (i+1) :: (p_{[i"<-"i+1]} \mid ~\mapup \ell) &\textrm{by induction}\\
                                &=& (p_{[i"<-"i+1]} \mid ~\mapup (i :: \ell))&\textrm{by definition of the filter}
    \end{array}
  \end{displaymath}
\item[g)] Works like \emph{f)}.
  \end{description}
\end{proof}

\begin{definition}[Terms $\Luin$]
A $\luin$-term is a plain $\lu$-term that can be $\Ell$-typed by the rules of Figure~\ref{fig:typesForLuin}.
The set of all $\luin$-terms is denoted by $\Luin$.
\end{definition}
\begin{figure}[h]\centering
  \Ovalbox{
    \begin{minipage}{.9\textwidth}
      \begin{center}
\begin{math}
   (ind) ~  \prooftree \phantom{ \udl{n}:[n]} \justifies \udl{n}:[n] \endprooftree
  \hspace*{.3\textwidth}
   (abs) ~  \prooftree  t:0::\ell \justifies \lambda t :\; \mapdown \ell  \using 
   \endprooftree
\end{math}

\medskip

\begin{math}
 (app) ~  \prooftree t_1:\ell_1\qquad t_2:\ell_2  \justifies  t_1t_2 : \ell_1 \zip \ell_2
  \endprooftree
   \hspace*{.3\textwidth}
 (upd) ~  \prooftree t:\ell  \justifies t\upd{i}: (<i\mid\ell) \plpl \mapup(\geq i\mid\ell) \endprooftree
\end{math}

\medskip

\begin{math}
  (sub_{\in}) ~
  \prooftree
  t_1:\ell_1 \qquad t_2:\ell_2  \using {\scriptstyle i`:\ell_1}
  \justifies t_1\{t_2,i\}:((<i\mid\ell_1) \plpl \mapdown(>i\mid\ell_1)) \zip \mapup^i \ell_2
  \using {\scriptstyle i`:\ell_1}
  \endprooftree
\end{math}

\medskip

\begin{math}
  (sub_{\notin}) ~
  \prooftree
  t_1:\ell_1 \qquad t_2:\ell_2
  \justifies t_1\{t_2,i\}:((<i\mid\ell_1) \plpl \mapdown(>i\mid\ell_1))
  \using {  \scriptstyle i`;\ell_1}
  \endprooftree
\end{math}
\end{center}
\end{minipage}}
\caption{Typing rules for $\Luin$}\label{fig:typesForLuin}
\end{figure}

About rule $(sub_\notin)$ we may notice that we assume $t_2:\ell_2$ although this assumption in not used in the
consequence.\footnote{A similar situation occurs with intersection types for explicit
  substitution~\cite{DBLP:journals/iandc/LengrandLDDB04}, with the rule \textsf{(drop)}.}
This guarantees the well-formedness of $t_2$.

Like for $\Lin$, we 
observe that in the typing tree of a $\Ell$-typed closed term, we meet only sorted lists with unique
occurrence of free indices.
This suggests to define 
linearity in $\Lu$ by $\Ell$\=/typeability.


\begin{definition}[Linearity of $\lu$-terms]\label{def:lu-linear}
A $\lu$-term $t$ is said to be linear if $t:[]$.
\end{definition}

\subsection{Reduction of $\Luin$}\label{subsec:reduction}

Let us recall that the $\luin$-calculus is equipped with the rewrite system given in Figure~\ref{fig:rewriting rules for Luin}.
 In the following theorem, we prove that $\Ell$-types are preserved by reduction.

\begin{theorem}[$\Ell$-type preservation]\label{thm:type-pres-lu}
If $t:\ell$ and $t \luarone t'$, then $t':\ell$.
\end{theorem}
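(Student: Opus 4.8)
The plan is to prove the statement by induction on the derivation of the one-step reduction $t \luarin t'$. The base cases are the twelve rewrite rules of Figure~\ref{fig:rewriting rules for Luin} fired at the root, and the inductive cases are the congruence rules that close $\luarin$ under all term contexts (inside an abstraction, on either side of an application, under an updater, and inside a substitution). The only ingredients beyond routine unfolding are Lemma~\ref{lem:on-zip}, the cancellations $\mapdown(\mapup\ell)=\ell$ and, on a strictly positive list, $\mapup(\mapdown\ell)=\ell$ (hence $\mapdown(\mapup^{i+1}\ell)=\mapup^{i}\ell$), the definition of the filter $(p\mid\ell)$, and the fact — which holds for $\Luin$ exactly as Proposition~\ref{prop:sortLin} holds for $\Lin$ — that every $\Ell$-type occurring in a typing derivation is sorted and has no repeated element.

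\textbf{Congruence cases.} Here nothing is needed beyond the induction hypothesis. Every $\Ell$-type in the conclusion of a typing rule is produced from the $\Ell$-types of the immediate subterms by the (possibly failing) operations $\zip$, $\mapdown$, $\mapup^{i}$ and filtering; if an immediate subterm is rewritten to a term with the same $\Ell$-type, the identical operation is applied to the identical arguments, so the conclusion $\Ell$-type, and its definedness, is unchanged. The side conditions $i\in\ell_1$ and $i\notin\ell_1$ of $(sub_\in)$ and $(sub_\notin)$ depend only on the $\Ell$-type $\ell_1$ of the term that is substituted \emph{into}, which the induction hypothesis preserves, so the same rule applies to the reduct; and in $(sub_\notin)$ the premise $t_2:\ell_2$ serves only to guarantee well-formedness of $t_2$ and does not occur in the conclusion.

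\textbf{Base cases.} This is where the work lies. For each rule one computes the $\Ell$-type $\ell$ of the left-hand side by unfolding the typing rules — splitting, in the variable rules, on the membership side condition and on whether the de Bruijn index is $<i$, $=i$ or $>i$ (and on the $+1$-shifted variants) — then one computes the $\Ell$-type of the right-hand side and checks it is again $\ell$, verifying in passing that it is defined (which always follows from disjointness of the merged lists, a subset relation under an injective shift, or from strict positivity of the lists being decremented). Rules $(\mathsf{FVar}_{\{\}})$ through $(\mathsf{VarShift}_{\upd{}})$ and $(\mathsf{B}_{in})$ close by direct calculation with singletons, $[]$, $\mapdown$, $\mapup$ and $\mapup^{i}$ (for $(\mathsf{B}_{in})$ one uses that the tail of a typed abstraction body is strictly positive, so $(<0\mid\cdot)=[]$ and $(>0\mid\cdot)$ is the identity there, and $\mapup^{0}$ is the identity). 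The updater propagation rules $(\mathsf{App}_{\upd{}})$ and $(\mathsf{Lambda}_{\upd{}})$ use commutativity and associativity of $\zip$ (Lemma~\ref{lem:on-zip}(a,b)), its commutation with filtering and with $\mapup$/$\mapdown$ (c,d,e), and the predicate-shift identities (f,g). The substitution propagation rules $(\mathsf{App}_{\{\}})$ and $(\mathsf{Lambda}_{\{\}})$ use the same identities together with the observation that, $\ell_1\zip\ell_2$ being defined, the index $i$ lies in at most one of $\ell_1,\ell_2$; so the substituted term contributes a $\mapup^{i}$-block to at most one of the two sub-substitutions on the right-hand side, matching the at-most-one copy of it on the left.

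\textbf{Expected obstacle.} The difficulty is bookkeeping, and it is maximal at $(\mathsf{Lambda}_{\{\}})$ (and, less acutely, at $(\mathsf{Lambda}_{\upd{}})$): one has to track at once the index shift $i\mapsto i+1$ induced by crossing the binder, the way the $\mapdown$ of the $(abs)$ rule commutes past the filters $<(i+1)$ and $>(i+1)$, converting them to $<i$ and $>i$ — this is exactly Lemma~\ref{lem:on-zip}(f,g) — and the cancellation $\mapdown\circ\mapup^{i+1}=\mapup^{i}$ on the part originating from $t_2$, all while keeping the leading $0$ of the abstraction body's $\Ell$-type in the correct position. Rule $(\mathsf{App}_{\{\}})$ is the second delicate point, because of the disjointness argument above; everything else is mechanical.
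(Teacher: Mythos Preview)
Your proposal is correct and follows essentially the same route as the paper: a rule-by-rule check that the left- and right-hand sides of each of the twelve $\luin$ rules receive the same $\Ell$-type, using Lemma~\ref{lem:on-zip} for the list-algebraic identities, with the same identification of $(\mathsf{Lambda}_{\{\}})$ and $(\mathsf{App}_{\{\}})$ as the delicate cases (the latter via the disjointness argument that $i$ lies in at most one of $\ell_1,\ell_2$). The paper's proof treats only the root-level rewrites and leaves the congruence closure implicit; your explicit treatment of the congruence cases, and your remark that the side conditions of $(sub_\in)/(sub_\notin)$ depend only on the preserved $\Ell$-type of the substituted-into term, make your argument strictly more complete than the paper's written version.
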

\begin{proof}
  Assume that $t$ matches the left-hand side of one of the rules given in Figure~\ref{fig:rewriting rules for Luin}.  

  \begin{description}
  \item[$\mathsf{B}_{in} : $] ~ $(`l t_1) t_2 \luarone t_1\{t_2, 0\} $
    
    \medskip
    
    The left-hand side and the right-hand side of the rule can be typed as follows:
    
    \medskip
    
    \prooftree
    \prooftree
    t_1 : 0 :: \ell_1
    \justifies `l t_1 :~ \downarrow \ell_1
    \endprooftree
    \qquad
    t_2 : \ell_2
    \justifies (`l t_1)\,t_2 :~ \downarrow \ell_1 \zip \ell_2
    \endprooftree
    \hfill
    \prooftree
    t_1 : 0 :: \ell_1
    \qquad
    t_2 : \ell_2
    \justifies t_1\{t_2, 0\} : (< 0 \mid \ell_1) \plpl \downarrow (> 0 \mid \ell_1) \zip \uparrow^0 \ell_2
    \endprooftree

\medskip

Successfully typing the left-hand side means $\downarrow \ell_1 \cap \ell_2 = []$. If this is the case, then
$((< 0 \mid \ell_1) \zip \downarrow (> 0 \mid \ell_1)) \cap \uparrow^0 \ell_2 = []$ holds, so the right-hand side can
be successfully typed.

The equality
 \begin{math}
 ~ \downarrow \ell_1 \zip \ell_2   = (< 0 \mid \ell_1) \plpl \downarrow (> 0 \mid \ell_1) \zip \uparrow^0 \ell_2
 \end{math}
comes from
    \begin{itemize}
    \item $(<0 \mid \ell_1) = []$
    \item $(>0 \mid \ell_1) = \ell_1$ 
    \item $ \uparrow^0 \ell_2 =  \ell_2$
    \end{itemize}
    \bigskip
    
  \item[$\mathsf{App}_{\upd{}} : $]~ $(t_1 t_2)\upd{i} \luarone t_1 \upd{i} t_2 \upd{i}$
  
  \medskip
  
   For the left-hand side of the rule we get

    \medskip
    
    \prooftree
    \prooftree
    t_1 : \ell_1 \qquad t_2 : \ell_2
   \justifies t_1\,t_2 : \ell_1 \zip \ell_2
    \endprooftree
    \justifies (t_1\,t_2)\upd{i}: (<i\mid\ell_1 \zip \ell_2) \plpl \mapup(\geq i\mid\ell_1 \zip \ell_2)
    \endprooftree

    \bigskip
    
    For the right-hand side we get

    \medskip
    
    \prooftree
    \prooftree
    t_1:\ell_1
   \justifies t_1\upd{i} : (<i\mid\ell_1) \plpl \mapup(\geq i\mid\ell_1)
   \endprooftree
   \quad
   \prooftree
   t_2 : \ell_2
   \justifies t_2 \upd{i}: (<i\mid\ell_2) \plpl \mapup(\geq i\mid\ell_2)
    \endprooftree
    \justifies (t_1\upd{i}) \, (t_2\upd{i}): ((<i\mid\ell_1) \plpl \mapup(\ge i\mid\ell_1)) \zip ((<i\mid\ell_2) \plpl (\mapup(\ge i\mid\ell_2))
    \endprooftree

  \bigskip
  
  Typing the left-hand side, means $\ell_1 \cap\ell_2 = []$. As a consequence,
    $((<i\mid\ell_1) \zip \mapup(\ge i\mid\ell_1)) \cap ((<i\mid\ell_2) \zip (\mapup(\ge i\mid\ell_2))= []$ holds, so
    the right-hand side can be successfully typed.  
From Lemma~\ref{lem:on-zip}, we conclude that
  \begin{eqnarray*}
    (<i\mid\ell_1 \zip \ell_2) \plpl \mapup(\geq i\mid\ell_1 \zip \ell_2) &=& %
     ((<i\mid\ell_1) \zip (<i\mid \ell_2)) \plpl (\mapup(\geq i\mid\ell_1) \zip \mapup(\geq i\mid \ell_2)) \\
    &=& ((<i\mid\ell_1) \plpl \mapup(\geq i\mid\ell_1)) \zip ((<i\mid\ell_2) \plpl (\mapup(\geq i\mid\ell_2))
  \end{eqnarray*}
  
  \bigskip
\item[$\mathsf{App}_{\{\}} : $]~ $(t_1 t_2)\{t_3, i \} \luarone t_1 \{t_3 , i \} t_2 \{t_3, i \}$

\medskip

  For the left-hand side, with $i`:\ell_1$ we get

    \medskip
    
    \prooftree
    \prooftree
    t_1 : \ell_1 \qquad t_2 : \ell_2
   \justifies t_1\,t_2 : \ell_1 \zip \ell_2
   \endprooftree
   \qquad\qquad
   t_3 : \ell_3
      \justifies (t_1\,t_2)\{t_3,i\}: (<i\mid\ell_1 \zip \ell_2) \plpl \downarrow(> i\mid\ell_1 \zip \ell_2) \zip \uparrow^i \ell_3
    \using {\scriptstyle i`:\ell_1}
    \endprooftree

    \bigskip
    
    If the right-hand side is successfully typed, then $\ell_1 \cap \ell_2 = []$, and since 
    $i `: \ell_1$, then $i`; \ell_2$.   For the right-hand side, we get

    \medskip
    
    \prooftree
    \prooftree
    t_1:\ell_1  \qquad t_3 : \ell_3
    \justifies t_1\{t_3,i\} : (<i\mid\ell_1) \plpl \downarrow(> i\mid\ell_1) \zip\uparrow^i \ell_3
    \using { \scriptstyle i`:\ell_1}
   \endprooftree
   \quad
\prooftree
    t_2:\ell_2 \qquad t_3 : \ell_3
    \justifies t_2\{t_3,i\} : (<i\mid\ell_2) \plpl \downarrow(> i\mid\ell_2)
    \using {\scriptstyle i`;\ell_2}
    \endprooftree
    \justifies (t_1\{t_3,i\}) \, (t_2\{t_3,i\}): ((<i\mid\ell_1) \plpl \downarrow(> i\mid\ell_1)) \zip\uparrow^i \ell_3 \zip ((<i\mid\ell_2) \plpl \downarrow\,(> i\mid\ell_2))
    \endprooftree

    \bigskip

Like the previous cases, it is straightforward to show that whenever the left-hand side is typeable, the right-hand side is typeable as well.

 Here also, from Lemma~\ref{lem:on-zip}, we get
    
  \begin{math}
    (<i\mid\ell_1 \zip \ell_2) \plpl \downarrow(> i\mid\ell_1 \zip \ell_2) \zip\uparrow^i \ell_3 = ((<i\mid\ell_1) \plpl \downarrow(> i\mid\ell_1))\zip\uparrow^i \ell_3 \zip
    ((<i\mid\ell_2) \plpl (\downarrow(> i\mid\ell_2)).
  \end{math}
  
 The case $i`; \ell_1$, $i`: \ell_2$ is similar. Let us look now at case $i`;\ell_1\zip \ell_2$. For the left-hand side we have
  
    \medskip
    
    \prooftree
    \prooftree
    t_1 : \ell_1 \qquad t_2 : \ell_2
   \justifies t_1\,t_2 : \ell_1 \zip \ell_2
   \endprooftree
  \quad
   t_3 : \ell_3
      \justifies (t_1\,t_2)\{t_3,i\}: (<i\mid\ell_1 \zip \ell_2) \plpl \downarrow(> i\mid\ell_1 \zip \ell_2)
    \using {\scriptstyle i`;\ell_1\zip \ell_2}
    \endprooftree

    \medskip
    For the right-hand side we have

       \prooftree
    \prooftree
    t_1:\ell_1 \quad t_3 : \ell_3
    \justifies t_1\{t_3,i\} : (<i\mid\ell_1) \plpl \downarrow(> i\mid\ell_1)
    \using {\scriptstyle i`;\ell_1}
   \endprooftree
   \quad
\prooftree
    t_2:\ell_2 \quad t_3 : \ell_3 
    \justifies t_2\{t_3,i\} : (<i\mid\ell_2) \plpl \downarrow(> i\mid\ell_2)
    \using { \scriptstyle i`;\ell_2}
    \endprooftree
    \justifies (t_1\{t_3,i\}) \, t_2\{t_3,i\}: ((<i\mid\ell_1) \plpl \downarrow(> i\mid\ell_1)) \zip ((<i\mid\ell_2) \plpl (\downarrow(> i\mid\ell_2)) 
    \endprooftree
    
    \medskip 
    
Whenever $\ell_1 \cap \ell_2 = []$ holds and we can type the left-hand side of the rule, $((<i\mid\ell_1) \plpl \downarrow(> i\mid\ell_1)) \cap ((<i\mid\ell_2) \plpl (\downarrow(> i\mid\ell_2)) = []$  holds and the right-hand side of the rule can be typed.

  From Lemma~\ref{lem:on-zip} we conclude that we obtain equal types for both left-hand side and right-hand side of the rule.
    
    \medskip
\item[$\mathsf{Lambda}_{\upd{}} : $]~  $(`l t)\upd{i} \luarone `l (t \upd{i+1})$

  \medskip
Left-hand and right-hand sides of the rule can be typed as follows:

\medskip

  \prooftree
  \prooftree
  t : 0 :: \ell
  \justifies `l t : ~ \downarrow \ell
  \endprooftree
  \hspace*{5pt}
\justifies (`l t)\upd{i} : (< i \mid \downarrow \ell) \zip \uparrow (\ge i \mid \downarrow \ell) 
\endprooftree
\hfill
\prooftree
\prooftree
t : 0 :: \ell
\justifies t\upd{i+1} : 0 :: (< i+1 \mid \ell) \zip \uparrow (\ge i+1 \mid \ell)
\endprooftree
\justifies `l(t \upd{i+1}) :~ \downarrow((< i+1 \mid \ell) \zip \uparrow (\ge i+1 \mid \ell))
\endprooftree

\bigskip

The equality $ (< i \mid \downarrow \ell) \zip \uparrow (\ge i \mid \downarrow \ell) =~ \downarrow((< i+1 \mid \ell) \zip \uparrow (\ge i+1 \mid \ell)) $  is a consequence of
Lemma~\ref{lem:on-zip} d) e) and g).

\medskip

\item[$\mathsf{Lambda}_{\{\}} : $]~ $(`l t_1)\{t_2, i \} \luarone `l (t_1 \{t_2, i+1 \})$
  
  \medskip
  
  First, we consider case $i+1`; \ell_1$ (with the same calculation as in the case $\mathsf{Lambda}_{\upd{}}$):
  
  \medskip

  \prooftree
  \prooftree
  t_1 : 0 :: \ell_1
  \justifies `l t_1 : \; \downarrow \ell_1
  \endprooftree  \quad t_2 : \ell_2
  \justifies (`l t_1)\{t_2,i\} : ~ (< i \mid \downarrow \ell_1) \zip \downarrow (> i \mid \downarrow \ell_1)
  \using {\scriptstyle i `; \downarrow \ell_1}
  \endprooftree

  \bigskip

  \prooftree
  \prooftree
  t_1 : 0 :: \ell_1 \quad t_2 : \ell_2
  \justifies t_1\{t_2,i+1\} : 0 :: ((< i+1 \mid \ell_1) \zip \downarrow (> i+1 \mid \ell_1))
 \using {\scriptstyle  
   i+1 `; \ell_1}
  \endprooftree
  \justifies `l (t_1\{t_2,i+1\}) : \;\downarrow ((< i+1 \mid \ell_1) \zip \downarrow (> i +1 \mid \ell_1))
  \endprooftree

\medskip

From Lemma~\ref{lem:on-zip} we can conclude that the type of the term on the left-hand side of the rule and the type of the term on the right hand-side of the rule are equal.

  \bigskip

  Next, let us look at the case $i+1 `: \ell_1$

 \medskip

   \prooftree
  \prooftree
  t_1 : 0 :: \ell_1
  \justifies `l t_1 : \downarrow \ell_1
  \endprooftree
  \qquad
   t_2 : \ell_2
   \justifies (`l t_1)\{t_2,i\} : (< i \mid \downarrow \ell_1) \zip \downarrow (> i \mid \downarrow \ell_1) \zip \uparrow^i \ell_2
  \using {\scriptstyle i `: (\downarrow \ell_1)}
  \endprooftree

  \bigskip

  \prooftree
  \prooftree
  t_1 : 0 :: \ell_1 \qquad
  t_2 : \ell_2
  \justifies t_1\{t_2,i+1\} : 0 :: (< i+1 \mid \ell_1) \zip \downarrow (> i+1 \mid \ell_1) \zip \uparrow^{i+1} \ell_2
  \using {\scriptstyle i+1 `: 0 :: \ell_1}
  \endprooftree
  \justifies `l (t_1\{t_2,i+1\}) :~ \downarrow ((< i+1 \mid \ell_1) \zip \downarrow (> i+1 \mid \ell_1)) \zip \uparrow^i \ell_2
  \endprooftree
  
  \medskip

  \medskip
  
\item[$\mathsf{FVar}_{\{\}} : $] ~ $\udl{0}\{t, 0 \} \luarone t$
 
  \medskip

  \prooftree
  \udl{0} : [0] \qquad t : \ell
  \justifies \udl{0}\{t,0\} : (< 0 \mid [0]) \zip (> 0 \mid [0]) \zip \uparrow^0 \ell 
  \using {\scriptstyle  0 `: [0]}
  \endprooftree
  \qquad\qquad 
  $t : \ell$

  \medskip

The equality $(< 0 \mid [0]) \zip (> 0 \mid [0]) \zip \uparrow^0 \ell = \ell$ comes from the fact that : $(< 0 \mid [0]) \zip (> 0 \mid [0]) = []$.  

  \medskip
\item[$\mathsf{RVar}_{\{\}} : $]~ $\udl{n+1}\{t, 0\} \luarone \udl{n}$

 \medskip
 
  \prooftree
  \udl{n+1} : [n+1] \quad t : \ell
   \justifies \udl{n+1}\{t,0\} :  (<0 \mid [n+1]) \zip \downarrow (> 0
   \mid [n+1]) 
   \using {\scriptstyle 0 `; [n+1]}
   \endprooftree
   \qquad \qquad
 $\udl{n} : [n]$

   \medskip

   The equality of the types comes from the fact that 
   $(<0 \mid [n+1]) = []$ and $\downarrow (> 0 \mid [n+1]) = [n]$.

   \medskip
   
 \item[$\mathsf{FVarLift}_{\{\}} : $]~ $\udl{0} \{t, i+1 \} \luarone \udl{0}$

   \medskip

   \prooftree
   \udl{0} : [0]  \quad t : \ell
   \justifies \udl{0}\{t, i+1\} : (<i+1 \mid [0]) \zip \downarrow (> i+1 \mid [0]) 
   \using {\scriptstyle i+1 `; [0]}
   \endprooftree
   \qquad \qquad \qquad \qquad
   \udl{0} : [0]

   \medskip

   The equality of the types comes from $(<i+1 \mid [0]) = [0]$ and $\downarrow (> i+1 \mid [0]) = []$.

   \medskip
   
 \item[$\mathsf{RVarLift}_{\{\}} : $]~ $\udl{n+1}\{t, i+1 \} \luarone \udl{n}\{t, i \}\upd{0} $
 
 \medskip
  We will consider three cases, depending on the numbers $i$ and $n$. First, we consider the case where  $i < n$

   \medskip

 \prooftree
   \udl{n+1} : [n+1]  \quad t : \ell
   \justifies \udl{n+1}\{t,i+1\} :  (<i+1 \mid [n+1]) \zip \downarrow (>i+1 \mid [n+1])
   \using {\scriptstyle i+1 `; [n+1]}
   \endprooftree
   \qquad \qquad
   \prooftree
   \prooftree
   \udl{n} : [n] \quad t : \ell
   \justifies \udl{n}\{t, i\} : \; \downarrow[n]
   \using {\scriptstyle i `; [n]}
   \endprooftree
\justifies \udl{n}\{t, i\}\upd{0} : [n]
   \endprooftree

\medskip
Since $i < n$, we have $i+1 < n+1$, and it holds that $(<i+1 \mid [n+1]) = []$ and $ \downarrow (> i+1 \mid [n+1]) = \downarrow [n+1] = [n]$, so the types are equal.
   \medskip

   Next, we consider the case where $i = n$.

   \medskip

   \prooftree
   \udl{n+1} : [n+1]  \qquad \qquad t : \ell
   \justifies \udl{n+1}\{t,i+1\} :\; ((< i+1 \mid [n+1]) \zip \downarrow (> i+1 \mid [n+1])) \zip \uparrow^{i+1} \ell
   \using {\scriptstyle i+1 `: [n+1]}
   \endprooftree
   \qquad \qquad
   \prooftree
   \prooftree
   \udl{n} : [n] \qquad \qquad t : \ell
   \justifies \udl{n}\{t,i\} : \; \uparrow^i \ell
   \using {\scriptstyle i `: [n]}
   \endprooftree
   \justifies \udl{n}\{t,i\}\upd{0} :\; \uparrow^{i+1} \ell
   \endprooftree
\medskip

From $i = n$, we obtain $i+1 = n+1$, and it follows that  $(< i+1 \mid [n+1]) \zip \downarrow (> i+1 \mid [n+1]) = []$, hence the types are equal.
   \medskip

 Finally, we consider the case $i > n$.

   \medskip

 \prooftree
   \udl{n+1} : [n+1] \qquad \qquad t : \ell
   \justifies \udl{n+1}\{t,i+1\} : (< i+1 \mid [n+1]) \zip \downarrow (> i+1 \mid [n+1]) 
   \using {\scriptstyle i+1 `; [n+1]}
   \endprooftree
   \qquad \qquad
   \prooftree
   \prooftree
   \udl{n} : [n]  \qquad \qquad t : \ell
   \justifies \udl{n}\{t, i\} : [n]
   \using {\scriptstyle i `; [n]}
   \endprooftree
\justifies \udl{n}\{t, i\}\upd{0} : [n+1]
   \endprooftree

\medskip
Since $i>n$, we have $i+1 > n+1$, and it follows that $\downarrow (>i+1 \mid [n+1]) = []$. Hence, the types are equal.

\medskip
We see that in all three cases we have typed both the term on the left-hand side and the term on the right-hand side of the rule with the same type.

\medskip

\item[$\mathsf{FVarLift}_{\upd{}} : $]~ $\udl{0}\upd{i+1} \luarone \udl{0}$

Left-hand side and right-hand side of the rule can be typed as follows:
\medskip

\prooftree
\udl{0} : [0]
\justifies \udl{0}\upd{i+1} : (< i+1 \mid [0]) \zip \mapup (\geq i+1 \mid [0])
\endprooftree 
 \qquad \qquad  
  \udl{0} : [0]
  
\medskip

From Lemma~\ref{lem:on-zip} we have $(\geq i+1 \mid [0]) = []$, thus $\mapup (\geq i+1 \mid [0]) = []$. From the latter and $(< i+1 \mid [0]) = [0]$ we obtain $((< i+1 \mid [0]) \zip \mapup (\geq i+1 \mid [0])) = [0]$.
\medskip


\item[$\mathsf{RVarLift}_{\upd{}} : $]~ $\udl{n+1}\upd{i+1} \luarone \udl{n}\upd{i}\upd{0}$

\medskip
Left-hand side and right-hand side of the rule can be typed as follows:
\medskip

\prooftree
\udl{n+1} : [n+1]
\justifies \udl{n+1}\upd{i+1} : (< i+1 \mid [n+1]) \zip \mapup(\geq i+1 \mid [n+1])
\endprooftree \qquad \qquad
\prooftree
\prooftree
\udl{n} : [n]
\justifies \udl{n}\upd{i} : (< i \mid [n]) \zip \mapup (\geq i \mid [n])
\endprooftree
\justifies \udl{n}\upd{i}\upd{0} : \mapup \ ((< i \mid [n]) \zip \mapup (\geq i \mid [n]))
\endprooftree
\medskip

From Lemma~\ref{lem:on-zip} we get\\
$(< i+1 \mid [n+1]) \zip \mapup(\geq i+1 \mid [n+1]) = \; \mapup ((< i \mid [n]) \zip \mapup (\geq i \mid [n]))$

\medskip

\item[$\mathsf{VarShift}_{\upd{}} : $]~ $\udl{n}\upd{0} \luarone \udl{n+1}$

\medskip
Left-hand side and right-hand side of the rule can be typed as follows:

\medskip

\prooftree
\udl{n} : [n]
\justifies \udl{n}\upd{0} : (< 0 \mid [n]) \zip \mapup (\geq 0 \mid [n])
\endprooftree\qquad \qquad
$\udl{n+1} : [n+1]$

\medskip

Since we have that
\begin{itemize}
\item $(< 0 \mid [n]) = []$,
\item $(\geq 0 \mid [n]) = [n]$, and
\item $\mapup [n] = [n+1]$,
\end{itemize}
it follows that $((< 0 \mid [n]) \zip \mapup( \geq 0 \mid [n])) = [n+1]$.

  \end{description}

\end{proof} 

Let us 
point out that this constructive proof of $\Ell$-type preservation enables a constructive evaluator for terms in~$\Luin$.
Currently this implementation is in \textsf{Haskell}.  Indeed Theorem~\ref{thm:type-pres-lu} and
Definition~\ref{def:lu-linear} entail the correctness of $\luin$-calculus.  
If we remember that 
linearity is defined as $\Ell$-typedness
  when the type is~$[]$, we obtain the following corollary.


\begin{corollary}[Preservation of linearity] If a $\lu$-term $t$ is linear and $t \luarone t'$ then $t'$ is linear.
\end{corollary}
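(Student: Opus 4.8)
The plan is to obtain this directly from Theorem~\ref{thm:type-pres-lu}. By Definition~\ref{def:lu-linear}, a $\luin$-term is linear precisely when it is closed and carries the $\Ell$-type $[]$, so the statement unfolds to: if $t$ is closed, $t:[]$, and $t \luarin t'$, then $t'$ is closed and $t':[]$.

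First I would instantiate Theorem~\ref{thm:type-pres-lu} at $\ell = []$: since $t:[]$ and $t \luarin t'$, the theorem gives $t':[]$. This is the whole of the $\Ell$-type half of the goal, obtained with no further computation.

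Second I would note that $t':[]$ already forces $t'$ to be closed. This is the intended reading of $\Ell$-types recalled in Section~\ref{sec:type-Luin} — the $\Ell$-type of a term is the sorted list of its free indices — so an empty $\Ell$-type means there are no free indices. Should one want it spelled out, it is a routine induction on the $\Ell$-typing derivation, mirroring the argument already used for $\Lin$: $(ind)$ records the unique free index, $(abs)$ discards the bound $0$ and decrements, and $(app)$, $(upd)$, $(sub_{\in})$, $(sub_{\notin})$ recombine the free-index lists via $\zip$, the filters $(<i\mid\cdot)$, $(\ge i\mid\cdot)$, and the maps $\mapup$, $\mapdown$; the only point needing a remark is that $(sub_{\notin})$ forgets its substituted subterm, and indeed along a reduction issued from a linear term every substitution created by rule $\mathsf{B}_{in}$ substitutes a closed subterm, so nothing carrying free indices is ever silently hidden. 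Combining the two steps, $t'$ is closed and $t':[]$, hence $t'$ is linear, as required.

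I do not anticipate a real obstacle: once Theorem~\ref{thm:type-pres-lu} is in hand, the corollary is just its specialisation to $\ell=[]$ under the dictionary ``linear $=$ $\Ell$-typeable by $[]$''. The only item that deserves a sentence is the equivalence between $[]$-typeability and closedness, which is dispatched by the same structural induction as for $\Lin$; and the multi-step version follows at once by induction on the number of reduction steps.
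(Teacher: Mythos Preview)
Your proposal is correct and matches the paper's approach: the corollary is obtained as the immediate specialisation of Theorem~\ref{thm:type-pres-lu} to $\ell=[]$, using Definition~\ref{def:lu-linear} as the dictionary ``linear $=$ $\Ell$-typed by $[]$''. The paper in fact gives no proof text at all for this corollary, so your extra paragraph on closedness (in particular the remark about $(sub_{\notin})$ potentially hiding free indices of a discarded substituend) is more care than the authors themselves take; it is not needed for the argument as they frame it, but it does no harm.
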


An open question is whether this proof of preservation of linearity for a $\lambda$-calculus with implicit names (de
  Bruijn indices) yields easily a proof of preservation of linearity for a calculus with explicit names (nominal
  logic~\cite{DBLP:journals/iandc/Pitts03}). We refer to the discussion of Berghofer and
  Urban~\cite{DBLP:journals/entcs/BerghoferU07} to let the reader figure out the work that remains to be done.

\section{Extended terms with resource control $\LRdB$}
\label{sec:LRdB}

In the previous sections resource control
was achieved by \emph{restricting} the sets $\Lambda$ and $\Lu$ to $\Lin$ and $\Luin$, respectively.
In this section, we take a dual approach and \emph{extend} $\Lambda$ with explicit operators performing erasure and duplication on terms in order to obtain full resource control.  The goal is to design a language capable to linearise  all $`l$-terms. We adapt $\Ell$-types and use them to define terms $\LRdBin$ and to characterise  linear terms in $\LRdB$.

The abstract syntax of $\lRdB$-terms,  plain terms with resources and implicit names, is generated by the following grammar:
\[ t, s ::= \ind{n}{\alpha} \mid `l t \mid t\; s \mid \era{n}{\alpha}{t}  \mid \dup{n}{\alpha}{t} \]
where $\ind{n}{\alpha}$ is a \emph{®\=/index}, $\odot$ denotes \emph{the erasure} of index in a term, and $\triangledown$ denotes \emph{the duplication} of index in a term. The set of all $\lRdB$-terms is denoted by $\LRdB$.

\subsubsection*{®-indices} An ®-index is the pair $\ind{n}{\alpha}$, where $n$ is a natural number and $\alpha$ is a string of booleans. For convenience, we will use the following abbreviations: $\bO \equiv false$ and $\bI \equiv true$.  Therefore $`a$ will be a string of $0$'s and $1$'s. Whether $0$ and $1$ refer to natural numbers or to booleans will be easily distinguished; so we consider that using those notations will introduce no confusion. In $(\udl{n}, \alpha)$, $\udl{n}$ corresponds to an index in $`L$ and $`a$ represents duplications of the index. The empty string of booleans, corresponding to absence of duplications, is denoted by~$\varepsilon$. For instance, if $(\udl{n}, \varepsilon)$ is  duplicated, it is represented by $\ind{n}{\bO}$ and $\ind{n}{\bI}$; if it is triplicated, it can be represented by $\ind{n}{\bO}$, $\ind{n}{{\bI}{\bO}}$ and $\ind{n}{{\bI} {\bI}}$ (or by $\ind{n}{\bO\bO}$, $\ind{n}{\bO \bI}$ and $\ind{n}{\bI}$).

In the following example and in Subsection~\ref{sec:bestiary} we introduce informally notions corresponding to $\lRdB$-terms, which will be formally defined in Subsection~\ref{sec:typ-LRdB}.
\begin{example}\label{exa:R} ~
  \begin{itemize}

  \item The term $`l x.  y$ is represented in $\LRdB$ by the term\\
    $`l \era{0}{`e}{\ind{1}{`e}}$.
  \item The term $`l x . (x (`l y . x y))$ is represented in $\LRdB$ by the term\\
    $`l (\dup{0}{`e}{(\ind{0}{0}\,(`l \ind{1}{1}\,\ind{0}{`e}))})$.
  \item The linear term $`l x . `l y . x \,y$ is represented in $\LRdB$ by
    the term $`l `l \ind{1}{`e}\, \ind{0}{`e}$, that has neither $\triangledown$ nor $\odot$, since it is linear and needs no resource control.
    \Jel{\item The open term with multiple occurrences of a free variable $x \, z \, (y \, z)$
    is represented in $\LRdB$ by the term\\
    $\dup{0}{`e} {(\ind{2}{`e} \ind{0}{0} (\ind{1}{ `e} \ind{0}{ 1}))}$ where $\ind{2}{\varepsilon}$ represents the free variable $x$, $\ind{1}{\varepsilon}$ represents the free variable $y$, and $\ind{0}{0}$ and $\ind{0}{1}$ correspond to the two occurrences of the variable $z$.
}
\item Term $`l x . x\,x\,x$ is discussed in Example ~\ref{exa:Lxxx}.
  \end{itemize}
\end{example}
Several more examples of $\lRdB$-terms will be elaborated in the following subsection.

\subsection{A bestiary of $\lRdB$-terms}
\label{sec:bestiary}

In this section, we examine basic and well known terms.

\medskip\noindent\textbf{The term \textsf{I}}
\begin{eqnarray*}
  \mathsf{I} &=& `l \ind{0}{`e}.
\end{eqnarray*}
This corresponds to the term $`l x.x$ in the $\lambda$-calculus with explicit
names. $\ind{0}{`e}$ means that there is no $`l$ between the ®-index
$\ind{0}{`e}$ and its binder and that there is no duplication.

\medskip\noindent\textbf{The term \textsf{K}}
\begin{eqnarray*}
  \mathsf{K} &=& `l `l \era{0}{`e}{\ind{1}{`e}}.
\end{eqnarray*}
In $\lambda$-calculus, $\mathsf{K}$ is written $`l x .`l y . x$.  In $\Lambda$, \textsf{K} is
written $`l `l \udl{1}$.  The index $\udl{0}$ does not occur in~$\udl{1}$, but since
we want $\LRdB$-terms to be linear, we make it to occur anyway, thus we
write $\era{0}{`e}{\ind{1}{`e}}$.  Notice that $`e$ is the second
component of all the ®-indices since there is no duplication.   Recall that the term $ `l `l \era{0}{`e}{\ind{1}{`e}} $ correspond to the term $`l `l\udl{1}$. The translation from  $\LR$ to  $`L$ will be introduced in Section~\ref{sec:read}.
 This term also corresponds to the term
\begin{displaymath}
  `l x.`l y.\eraG{y}{x}
\end{displaymath}
using the notations of \cite{DBLP:journals/corr/GhilezanILL14} and to the term 
\begin{displaymath}
`l x.`l y.\eraL{y}{x}.
\end{displaymath}
using the notations of~\cite{DBLP:journals/iandc/KesnerL07}.

\medskip\noindent\textbf{The term \textsf{S}}
\begin{eqnarray*}
  \mathsf{S}&=& `l `l `l \dup{0}{`e} {(\ind{2}{`e} \ind{0}{0} (\ind{1}{ `e} \ind{0}{ 1}))}
\end{eqnarray*}
In $\lambda$-calculus, $\mathsf{S}$ is written $`l x. `l y .`l z . x z (y z)$ and in
$\Lambda$, $\mathsf{S}$ is written $`l `l `l (2\, 0\, (1\, 0))$.  We notice
the double occurrence of $z$ in $\lambda$-calculus and of $\udl{0}$ in $\Lambda$.
Therefore a duplication is necessary.  From the ®-index $\ind{0}{`e}$ it
creates two indices $\ind{0}{0}$ and $\ind{0}{1}$.  Where the second
component $0$ is the string of length $1$ made of $0$ alone and the second
component $1$ is the string of length $1$ made of $1$ alone.  This term
can be written using the notations of~\cite{DBLP:journals/corr/GhilezanILL14} as the term
\begin{displaymath}
`l x.`l y.`lz.(\dupG{z}{z_{0}}{z_{1}}{x\,z_{0}\,(y\,z_{1})})
\end{displaymath}
or using the notations of~\cite{DBLP:journals/iandc/KesnerL07} as the term
\begin{displaymath}
  `l x.`l y.`lz.(\dupL{z}{z_{0}}{z_{1}}{x\,z_{0}\,(y\,z_{1})}).
\end{displaymath}
\medskip\noindent\textbf{The term \textsf{5}}
\begin{displaymath}
  \mathsf{5} ~=~ `l `l (\dup{1}{`e}{\dup{1}{0}{\dup{1}{00}{\dup{1}{000}{(\ind{1}{0000}(\ind{1}{0001} (\ind{1}{001} (\ind{1}{01} (\ind{1}{1} \ind{0}{`e})))))}}}})
\end{displaymath}
\textsf{5} represents the Church numeral $5$. Recall that in $\lambda$-calculus,
\textsf{5} is written $ `l f. `l x. (f (f (f (f (f \,x)))))$ and in
$\Lambda$, $`l `l (1 (1 (1 (1 (1 \,0)))))$.  Since $\udl{1}$ is repeated five
times, we need four duplications.  If we compute \textsf{5} other
ways, we can get other forms.  For instance, as the result of $\mathsf{3}
+ \mathsf{2}$:
\begin{displaymath}
  `l `l ((\dup{1}{`e} {\dup{1}{0} {\dup{1}{00}{ (\ind{1}{000} (\ind{1}{001}
    (\ind{1}{01} \dup{1}{1}{ (\ind{1}{10} (\ind{1}{11} \ind{0}{`e}))})))}}}))
\end{displaymath}
or as the result of $\mathsf{2} + \mathsf{3}$:
\begin{displaymath}
  `l `l (\dup{1}{`e}{\dup{1}{0}{  (\ind{1}{00} (\ind{1}{01} \dup{1}{1}{ \dup{1}{10}
  {(\ind{1}{100} (\ind{1}{101} (\ind{1}{11} \ind{0}{`e})))}}))}})
\end{displaymath}
or as the result of $\mathsf{3} + 1 + 1$:
\begin{displaymath}
  \lambda\lambda\dup{1}{\varepsilon}{\dup{1}{0}{\dup{1}{00}{(\ind{1}{000} \dup{1}{001}{(\ind{1}{0010} (\ind{1}{0011} (\ind{1}{01} (\ind{1}{1} \ind{0}{\varepsilon}))))})}}}
\end{displaymath}
The four above forms  correspond to the same term in $`L$, namely
\begin{math}
`l `l (1 (1 (1 (1 (1 \,0))))).
\end{math} The translation $\mathsf{readback}$ from $\LR$ to $`L$ will be defined in Section~\ref{sec:read}.

\medskip\noindent\textbf{The terms \textsf{ff} and \textsf{tt}}

The ®-term \textsf{ff} (i.e. the boolean false) is $\lambda(\era{0}{`e}{\lambda\ind{0}{`e}})$
and the ®-term \textsf{tt}  (i.e. the boolean true, that is also the
combinator \textsf{K}) is $\lambda(\lambda\era{0}{`e}{\ind{1}{`e}})$.

\medskip\noindent\textbf{The Curry fixpoint combinator}

The Curry fixpoint combinator \textsf{Y} is:
\begin{displaymath}
  \mathsf{Y} = \lambda\dup{0}{`e}((\lambda(\ind{1}{0}\,\dup{0}{`e}(\ind{0}{0}\,\ind{0}{1})))\,(\lambda(\ind{1}{1}\,\dup{0}{`e}(\ind{0}{0}\,\ind{0}{1}))))
\end{displaymath}
and in notations of~\cite{DBLP:journals/corr/GhilezanILL14}:
\begin{displaymath}
`l x.(\dupG{x}{x_{0}}{x_{1}}{`l y.(x_{0}\,(\dupG{y}{y_{0}}{y_{1}}{y_{0}\,y_{1}}))\,`l y.(x_{1}\,(\dupG{y}{y_{0}}{y_{1}}{y_{0}\,y_{1}}))})
\end{displaymath}
or using the notations of~\cite{DBLP:journals/iandc/KesnerL07}:
\begin{displaymath}
`l x.(\dupL{x}{x_{0}}{x_{1}}{`l y.(x_{0}\,(\dupL{y}{y_{0}}{y_{1}}{y_{0}\,y_{1}}))\,`l y.(x_{1}\,(\dupL{y}{y_{0}}{y_{1}}{y_{0}\,y_{1}}))})
\end{displaymath}

\subsection{$\Ell$-types for $\LRdBin$}
\label{sec:typ-LRdB}
  
In this setting, lists of  ®-indices are called $\Ell$-types for $\LRdBin$.

  \begin{definition}[$\Ell$-types for $\LRdBin$] The abstract syntax of $\Ell$-types for $\LRdBin$  is given by
\begin{eqnarray*}
    \ell  &::=& [] \mid (\udl{n}, \alpha) :: \ell
\end{eqnarray*}
where $(\ind{n}{\alpha}$ is a ®-index.
\end{definition}

Operations $\zip$ and $\mapdown$ are defined in Section~\ref{sec:LinLambda} for lists
on $\nat$. Here we apply $\zip$ to  lists of ®\=/indices. For that,
we have to define an order on the set of all ®\=/indices. We define first
an order on strings of booleans.
\begin{definition}[Order on strings of booleans]
  An order $<_L$ on strings of booleans is defined as
\begin{displaymath}
  \prooftree
  \justifies \bO :: \ell <_L \ \bI :: \ell
  \endprooftree
  \qquad\qquad
  \prooftree
  \justifies `e <_L b :: \ell
  \endprooftree
  \qquad\qquad
  \prooftree
  \ell_1 <_L \ell_2
  \justifies  b :: \ell_1 <_L b :: \ell_2
  \endprooftree
\end{displaymath}

\end{definition}

In other words, $<_L$ is the lexicographic extension on lists of the order
$\bO < \bI$. 

\begin{definition}[Order on ®-indices]
  An order $<^\circledR$ on  ®\=/indices is defined as 
  \begin{displaymath}
    \prooftree
    n_1 < n_2
    \justifies \ind{n_1}{`a_1} < ^\circledR \ind{n_2}{`a_2}
    \endprooftree
    \qquad \qquad 
    \prooftree
    `a_1 <_L `a_2
   \justifies \ind{n}{`a_1} <^{\circledR} \ind{n}{`a_2}
    \endprooftree
  \end{displaymath}
\end{definition}
In other words, $<^\circledR$ is the lexicographic product $< `* <_L$ of the orders
$<$, on the naturals and $<_L$ on strings of booleans. 
 By $\le^\circledR$ we denote the relation $<^\circledR$ \emph{or} $=$ and the relation $\le^\circledR$ is total.

\begin{definition}[Merge] A binary operation which merges two  lists of ®\=/indices is defined as follows:\label{def:merge}
\begin{eqnarray*}
[] \zip \ell & = & \ell \\
(\ind{n}{`a} :: \ell) \zip [] & = & \ind{n}{`a} :: \ell \\ 
(\ind{n_1}{`a_1} :: \ell_1) \zip (\ind{n_2}{`a_2} :: \ell_2) & = &
                                                                   \mathbf{~if~} \ind{n_1}{`a_1} <^\circledR \ind{n_2}{`a_2}  \mathbf{~then~} \ind{n_1}{`a_1} :: (\ell_1 \zip (\ind{n_2}{`a_2} :: \ell_2))   \\
&& \mathbf{~if~} \ind{n_2}{`a_2} <^\circledR \ind{n_1}{`a_1}  \mathbf{~then~} \ind{n_2}{`a_2} :: ((\ind{n_1}{`a_1} :: \ell_1) \zip  \ell_2)
\end{eqnarray*}
\end{definition}

\begin{remark} The function $\zip$ is {\bf not total}.

\end{remark}

If a list $\ell$ is an empty list or it contains only indices with strictly positive first component, we write $ \ell `: List^+$.

\begin{definition}[Decrement]\label{def:decrem} 
Given a list $\ell$, assume that we have a proof that $ \ell `: List^+$, we can define operation $\downarrow$ on this list:
\begin{eqnarray*}
\downarrow [] = []\\
  \downarrow (\ind{n+1}{`a} :: \ell) &=& \ind{n}{`a}  :: \downarrow \ell
\end{eqnarray*}

\end{definition}

All properties proved in Lemma~\ref{lem:on-zip} hold also for the lists of ®-indices. 
We omit the proof, due to the lack of space and the fact that it is analogous to the proof of Lemma~\ref{lem:on-zip}.
\\
By means of $\Ell$-typeability, we single out meaningful (well-formed) plain terms with resources and implicit names.

\begin{definition}[Terms $\LRdBin$]\label{def:Typ-LR} 
A $\lRdBin$-term is a plain $\lRdB$-term that can be $\Ell$-typed by the  rules of Figure~\ref{fig:typesForLRdB}.
\end{definition}

\begin{figure}[h]	
    \Ovalbox{
    \begin{minipage}{\textwidth}
  \begin{displaymath}
    (\mathsf{ind}) ~
    \prooftree \justifies \ind{n}{\alpha} : [\ind{n}{\alpha}]
    \endprooftree
    \qquad
    (\mathsf{abs}) ~
    \prooftree
    t : \ind{0}{`e} :: \ell
    \justifies `l t :~ \mapdown ~\ell
    \endprooftree
    \qquad 
    (\mathsf{app}) ~
    \prooftree
    t_1 : \ell_1 \qquad t_2 : \ell_2 
    \justifies t_1 \; t_2 : \ell_1 \zip \ell_2
    \endprooftree
  \end{displaymath}\medskip
  \begin{displaymath}
    (\mathsf{era})~
    \prooftree
    t : \ell
    \justifies \era{n}{\alpha}{t} : [\ind{n}{\alpha}]  \zip  \ell 
    \endprooftree
    \qquad\qquad
    (\mathsf{dup}) ~
    \prooftree
    t : \ell\, \zip\, [\ind{n}{\alpha\bO}, \ind{n}{\alpha\bI}]
    \justifies \dup{n}{\alpha}{t} : [ \ind{n}{\alpha} ] \zip \ell  
    \endprooftree   
  \end{displaymath}

  \medskip
\end{minipage}}
\caption{Typing rules for $\LRdBin$}\label{fig:typesForLRdB}
\end{figure}
The set of all $\lRdBin$-terms is denoted by $\LRdBin$.

The following example illustrates the Definition~\ref{def:Typ-LR} by $\Ell$-typing the $\LRdB$-term $\mathsf{S}\mathsf{K}$.
\begin{example} ~\\  
  \prooftree 
  \prooftree 
  \prooftree 
  \prooftree 
  \prooftree 
  \prooftree 
  \prooftree 
  \ind{2}{`e} : [\ind{2}{`e}] \qquad \ind{0}{\bO} : [\ind{0}{\bO}]
     \justifies \ind{2}{`e}\,\ind{0}{\bO} : [\ind{0}{\bO},\ind{2}{`e}]
  \endprooftree 
\qquad 
\prooftree 
 \ind{1}{`e} : [\ind{1}{`e}] \qquad \ind{0}{\bI} : [\ind{0}{\bI}]
   \justifies \ind{1}{`e}\,\ind{0}{\bI} : [\ind{0}{\bI}, \ind{1}{`e}]
  \endprooftree 
   \justifies \ind{2}{`e}\,\ind{0}{\bO}\,(\ind{1}{`e}\,\ind{0}{\bI}) : [\ind{0}{\bO}, \ind{0}{\bI}, \ind{1}{`e}, \ind{2}{`e}]
  \endprooftree 
   \justifies \dup{0}{`e}{(\ind{2}{`e}\,\ind{0}{\bO}\,(\ind{1}{`e}\,\ind{0}{\bI}))} : [\ind{0}{`e}, \ind{1}{`e}, \ind{2}{`e}]
    \endprooftree 
   \justifies `l (\dup{0}{`e}{(\ind{2}{`e}\,\ind{0}{\bO}\,(\ind{1}{`e}\,\ind{0}{\bI}))}) : [\ind{0}{`e}, \ind{1}{`e}]
   \endprooftree
  \justifies `l `l (\dup{0}{`e}{(\ind{2}{`e}\,\ind{0}{\bO}\,(\ind{1}{`e}\,\ind{0}{\bI}))}) : [\ind{0}{`e}]
 \endprooftree 
  \justifies `l `l `l (\dup{0}{`e}{(\ind{2}{`e}\,\ind{0}{\bO}\,(\ind{1}{`e}\,\ind{0}{\bI}))}) : []
 \endprooftree 
  \hspace*{-40pt}
  \prooftree 
  \prooftree 
    \prooftree 
   \ind{1}{`e} : [\ind{1}{`e}]
    \justifies \era{0}{`e}{\ind{1}{`e}} : [\ind{0}{`e}, \ind{1}{`e}]
    \endprooftree 
    \justifies `l \era{0}{`e}{\ind{1}{`e}} : [\ind{0}{`e}]
  \endprooftree 
  \justifies `l `l \era{0}{`e}{\ind{1}{`e}} : []
  \endprooftree 
  \justifies (`l`l`l
    (\dup{0}{`e}{(\ind{2}{`e}\,\ind{0}{\bO}\,(\ind{1}{`e}\,\ind{0}{\bI}))})\,(`l
    `l \era{0}{`e}{\ind{1}{`e}}) : []
  \endprooftree 
\end{example}

Notice that we abstract with $`l$ (see Definition~\ref{def:Typ-LR})
only ®\=/index of the form $(\ind{0}{\varepsilon}$. Further, the definition of $\zip$ ensures that in an $\Ell$-typed term an ®\=/index can occur  at most once (Definition~\ref{def:merge}).
The other binder, namely  duplication, binds two ®\=/indices of the form $\ind{n}{\alpha0}$ and $\ind{n}{\alpha1}$ and produces a new ®\=/index $\ind{n}{`a}$.
Closed terms are terms in which each ®\=/index is bound. 



\begin{proposition}[Closedness]
  If $t: []$ then $t$ is closed.
\end{proposition}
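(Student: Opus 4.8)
The plan is to show that the $\Ell$-type of an $\LRdB$-term is, by construction, nothing but a sorted enumeration of its \emph{free} ®-indices; closedness of a term of type $[]$ is then immediate. To make this precise I would first fix the obvious notion of free ®-index: define $\mathrm{fv}(t)$ by structural recursion on plain terms with resources, mirroring the typing rules of Definition~\ref{def:Typ-LR} but with the partial merge $\zip$ replaced by ordinary (finite set) union, with $`l$ deleting $\ind{0}{`e}$ from the set and decrementing the first component of every remaining index (the action of $\mapdown$), with $\era{n}{`a}{t}$ adding $\ind{n}{`a}$, and with $\dup{n}{`a}{t}$ deleting $\ind{n}{`a\bO}$ and $\ind{n}{`a\bI}$ and adding $\ind{n}{`a}$. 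A term is then \emph{closed} exactly when $\mathrm{fv}(t)=\emptyset$, in agreement with the informal reading ``each ®-index is bound''.

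The core is a lemma: \emph{if $t:\ell$ then the elements of $\ell$ are exactly $\mathrm{fv}(t)$} (equivalently, since $\Ell$-types are sorted, $\ell$ is the sorted list of $\mathrm{fv}(t)$). I would prove this by induction on the derivation of $t:\ell$, which, since the system is syntax-directed, is the same as induction on $t$. The base case $(\mathsf{ind})$ is trivial. For $(\mathsf{app})$ one uses the induction hypotheses together with the fact that $\ell_1\zip\ell_2$, \emph{when defined}, is precisely the sorted union of $\ell_1$ and $\ell_2$; definedness of the type in the premise forces $\ell_1$ and $\ell_2$ disjoint, so nothing is lost and $\mathrm{fv}(t_1\,t_2)=\mathrm{fv}(t_1)\cup\mathrm{fv}(t_2)$ matches $\ell_1\zip\ell_2$. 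The cases $(\mathsf{era})$ and $(\mathsf{dup})$ are the same computation with the singleton $[\ind{n}{`a}]$, resp. with the exchange of $\{\ind{n}{`a\bO},\ind{n}{`a\bI}\}$ for $\{\ind{n}{`a}\}$, noting that definedness of the displayed merges forces the relevant disjointness — in particular that $\ind{n}{`a}$ is not already free underneath the duplicator. For $(\mathsf{abs})$ the premise $t:\ind{0}{`e}::\ell$ says $\ind{0}{`e}\in\mathrm{fv}(t)$ (with a single occurrence, by Proposition~\ref{prop:affineness}), which the $`l$ binds; the remaining free indices of $t$ are those of $\ell$, and crossing the $`l$ decrements each first component, giving $\mathrm{fv}(`l t)=\mapdown\ell$, the operation being well defined precisely because typeability of $`l t$ guarantees $\ell`: List^+$. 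This parallels the argument already given for $\Lin$ (the ``$t:[]$ iff $t$ is closed and linear'' proposition) and reuses the list identities of Lemma~\ref{lem:on-zip}, which the paper notes carry over verbatim to lists of ®-indices.

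With the lemma in hand the Proposition follows at once: if $t:[]$ then $\mathrm{fv}(t)=\emptyset$, i.e. $t$ has no free ®-index, i.e. $t$ is closed. (Combined with Proposition~\ref{prop:affineness} and the observation that $`l$ binds exactly one index and $\triangledown$ exactly two, one in fact recovers closedness \emph{and} linearity, but only closedness is claimed here.)

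I expect the only real obstacle to be bookkeeping in the $(\mathsf{abs})$ case: pinning down the de Bruijn–style shift $\mapdown$ at the level of the free-index set and checking it is well defined, together with the care needed so that the partiality of $\zip$ never causes an index to be silently absorbed. The clean way around this is to state the invariant as ``\emph{whenever} $t:\ell$ is derivable, $\ell$ enumerates $\mathrm{fv}(t)$'', so that in every inductive step the hypotheses already certify that each partial operation occurring is defined and loses nothing; once the invariant is phrased this way the induction is routine.
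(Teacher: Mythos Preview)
Your proposal is correct and follows essentially the same approach as the paper: the paper's proof consists of the single assertion ``If $t:\ell$, then $\ell$ is the set of free ®-indices in the term'', from which closedness for $t:[]$ is immediate. You have simply spelled out, by a routine induction on the typing derivation, the justification for that assertion which the paper leaves implicit.
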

\begin{proof}
  If $t : \ell$, then $\ell$ is the set of free ®-indices in the
  term. Therefore, if $\ell$ is empty then $t$ has no free ®-index and $t$ is
  closed.
\end{proof}

  There are actually two rules which eliminate ®-indices, namely
  \textsf{abs} and \textsf{dup}.  But when \textsf{dup} eliminates two
  indices $\ind{n}{`a\bO}$ and $\ind{n}{`a\bI}$, it introduces
  $\ind{n}{`a}$.  Therefore if a term is closed, all the ®-indices are checked
  for linearity when abstracted by $`l$. This justifies the following definition of linearity.

\begin{definition}[Linearity of $\lRdB$-terms]\label{def:linear}
A $\lRdB$-term $t$ is said to be linear if $t:[]$.
\end{definition}


Similarly as in Section~\ref{sec:Lin-types} and \ref{sec:type-Luin}, the notion of $\Ell$-typeability has enabled the characterisation of 
linearity of $\lRdB$-terms. 

\subsection{Resource reduction in $\LRdB$}
\label{sec:resource}
We define rewriting rules for normal forms w.r.t. $\odot$ and
$\triangledown$, for which we prove $\Ell$-type preservation. Those rules which control resources are inspired
by~\cite{DBLP:journals/corr/GhilezanILL14}.  

First, we define replacement of an ®\=/index in a term. By $\repl{t}{\ind{n}{`a}}{\ind{m}{`b}}$  we denote a term obtained from term $t$ by replacing recursively the  ®\=/index $\ind{n}{`a}$ by $\ind{m}{`b}$.

\begin{definition}[Replacement]
  \emph{ Let
  us call} $cond(`a,`d,n,k)$ \emph{the condition}\\
\[{n\neq k ~\vee~ \forall `g`:\{0,1\}^* ~ `d \neq `a `g}.\] \emph{Notice that this
  can be written also} \[n\neq k ~\vee~ \neg (`a ~\textsf{prefix} ~`d).\]
  \emph{Replacement} $\repl{t}{\ind{n}{`a}}{\ind{m}{`b}}$\emph{ is defined as:}
  \begin{eqnarray*}
    \repl{\ind{n}{`a`g}}{\ind{n}{`a}}{\ind{m}{`b}} &=&\ind{m}{`b`g}\\
    \repl{\ind{k}{`d}}{\ind{n}{`a}}{\ind{m}{`b}} &=& \ind{k}{`d} \hspace{.33\textwidth} \textsf{if~} cond(`a,`d,n,k)\\
    \repl{(t_1\;t_2)} {\ind{n}{`a}}{\ind{m}{`b}} &=& \repl{t_1} {\ind{n}{`a}}{\ind{m}{`b}} \; \repl{t_2} {\ind{n}{`a}}{\ind{m}{`b}}\\
    \repl{`l t}{\ind{n}{`a}}{\ind{m}{`b}} &=&`l(\repl{t} {\ind{n+1}{`a}}{\ind{m+1}{`b}})\\
    \repl{(\ind{k}{\delta} * t)}{\ind{n}{`a}}{\ind{m}{`b}} &=& \repl{\ind{k}{\delta}}{\ind{n}{`a}}{\ind{m}{`b}} *  {\repl{t}{\ind{n}{`a}}{\ind{m}{`b}}}, \hspace{.03\textwidth} * \in \{\odot, \triangledown \}
  \end{eqnarray*}
\end{definition}
\medskip

The rewriting system controlling duplication and erasing in $\lRdB$-terms is given by the  rules in Figure~\ref{fig:rewriting rules for LRdB}. Basically, the resource rewriting rules propagate $\triangledown$ in the term and pull $\odot$ out of the term.

\begin{figure}[!h]
\begin{displaymath}
\begin{array}{rcl@{\quad}l}
\thickhline
 `l \era{n+1}{`a}{t} & \rightarrow_c & \era{n}{`a}{`l t} & ({\lambda} {-}\odot)\\
  \dup{n}{`a}(`l t) & \rightarrow_c  &  `l (\dup{n+1}{`a}{t}) &  (\triangledown {-} {\lambda})\\
(\era{n}{`a}{t_1}) \; t_2 & \rightarrow_c  & \era{n}{`a}{(t_1\;t_2)}  & (\text{AppL} {-} \odot)\\
 t_1 \; (\era{n}{`a}{t_2}) & \rightarrow_c  & \era{n}{`a}{(t_1\;t_2)} & (\text{AppR} {-}\odot)\\
\dup{n}{`a}{(t_1 \; t_2)} & \rightarrow_c  &  (\dup{n}{`a}{t_1}) \; t_2, \hspace*{.07\textwidth} \textbf{if } \ind{n}{`a0} `: t_1 \wedge \ind{n}{`a1} `: t_1  & (\text{AppL} {-} \triangledown) \\
\dup{n}{`a}{(t_1 \; t_2)} & \rightarrow_c  & t_1 \; (\dup{n}{`a}{t_2}), \hspace*{.07\textwidth} \textbf{if } \ind{n}{`a0} `: t_2 \wedge\ind{n}{`a1} `: t_2 & (\text{AppR} {-} \triangledown) \\
\era{n}{`a}{\era{m}{`b}{t}}  & \rightarrow_c  & \era{m}{`b}{\era{n}{`a}{t}}, \hspace*{.03\textwidth} \textbf{ if } n < m & (\odot {-}\odot)\\
  \dup{n}{`a}{\era{n}{`a1}{t}} & \rightarrow_c  & \repl{t}{\ind{n}{`a0}}{\ind{n}{`a}} & (\odot {-}\triangledown_1)\\
   \dup{n}{`a}{\era{n}{`a0}{t}} & \rightarrow_c  & \repl{t}{\ind{n}{`a1}}{\ind{n}{`a}} & (\odot {-}\triangledown_0)\\
  \dup{n}{`a}{\era{m}{`b}{t}} & \rightarrow_c  & \era{m}{`b}{\dup{n}{`a}{\,t}}, \hspace*{.04\textwidth} \mathbf{if~} n \neq m \vee (`b \neq  `a0 \wedge `b \neq `a1) & (\odot {-}\triangledown)\\
  \thickhline
\end{array}
\end{displaymath}
\caption{Resource rewriting system that controls $\odot$ and $\triangledown$ in  $\lRdB$-terms}\label{fig:rewriting rules for LRdB}
\end{figure}
\begin{theorem}[$\Ell$-type preservation]
If $t : \ell$ and $t \rightarrow_c  t' $, then $t' : \ell$.
\end{theorem}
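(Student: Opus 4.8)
The plan is to follow the structure of the proof of Theorem~\ref{thm:type-pres-lu}: we proceed by induction on the derivation of $t\to t'$. The closure-under-context cases (reduction under $`l$, inside either argument of an application, or inside an $\odot$ or a $\triangledown$) are routine, because in every clause of Definition~\ref{def:Typ-LR} the $\Ell$-type of a compound term is obtained from the $\Ell$-types of its immediate subterms by a fixed operation ($\zip$, $\mapdown$, or prefixing a singleton); so if a subterm of $\Ell$-type $\ell_0$ reduces to a term of the same $\Ell$-type, the enclosing term keeps its $\Ell$-type and its derivation stays well formed — no $\zip$ that was defined becomes undefined, since affineness (Proposition~\ref{prop:affineness}) is not affected. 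Thus the work is the twelve root rules of Figure~\ref{fig:rewriting rules for LRdB}.

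Before the case analysis I would record two auxiliary facts. First, the ®-index analogue of Lemma~\ref{lem:on-zip} (commutativity and associativity of $\zip$ and its commutation with $\mapup$ and $\mapdown$), which the paper already notes holds by the same argument. Second, a \textbf{Replacement Lemma}: if $t:\ell$, then $\repl{t}{\ind{n}{`a}}{\ind{m}{`b}}:\sigma(\ell)$, where $\sigma$ is the map on lists that replaces every entry of the form $\ind{n}{`a`g}$ by $\ind{m}{`b`g}$ and re-sorts — provided the right-hand side is defined, which in all our uses is guaranteed by the shape of the redex. This is proved by structural induction on $t$ along the clauses defining $\repl{\cdot}{\cdot}{\cdot}$: the $`l$-clause works because replacement shifts $n$ and $m$ to $n+1$ and $m+1$ in lockstep with rule $(\mathsf{abs})$ and with $\mapdown$, and the side condition $cond(`a,`d,n,k)$ is exactly what guarantees that the indices $\sigma$ leaves fixed keep their type.

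The rules that only relocate an $\odot$ outward — $(\bm{\lambda}\bm{-}\odot)$, $(\mathbf{AppL}\bm{-}\odot)$, $(\mathbf{AppR}\bm{-}\odot)$, $(\odot\bm{-}\odot)$, $(\odot\bm{-}\triangledown)$ — and the ones that push a $\triangledown$ inward — $(\triangledown\bm{-}\bm{\lambda})$, $(\mathbf{AppL}\bm{-}\triangledown)$, $(\mathbf{AppR}\bm{-}\triangledown)$, $(\triangledown\bm{-}\triangledown_2)$ — are treated exactly like the $\mathsf{App}$ and $\mathsf{Lambda}$ cases of Theorem~\ref{thm:type-pres-lu}: in each of them the list of free ®-indices is literally unchanged, and the equality of the two derived $\Ell$-types is an instance of commutativity/associativity of $\zip$ together with its interaction with $\mapdown$. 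For $(\bm{\lambda}\bm{-}\odot)$ one uses in addition that $\mapdown([\ind{n+1}{`a}]\zip\ell)=[\ind{n}{`a}]\zip\mapdown\ell$ on strictly positive lists, which is the ®-index form of Definition~\ref{def:decrem}; for $(\odot\bm{-}\triangledown)$ one also uses that the side condition $n\neq m\vee(`b\neq`a0\wedge`b\neq`a1)$ is precisely what keeps the two nested $\zip$'s defined after the swap.

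The genuinely interesting rules are the three that rewrite indices, $(\odot\bm{-}\triangledown_0)$, $(\odot\bm{-}\triangledown_1)$ and $(\triangledown\bm{-}\triangledown_1)$, and here the Replacement Lemma does the work. For $(\odot\bm{-}\triangledown_1)$, typing $\dup{n}{`a}{(\era{n}{`a1}{t})}$ forces $t:\ell'$ with $[\ind{n}{`a1}]\zip\ell'=\ell\zip[\ind{n}{`a0},\ind{n}{`a1}]$, hence $\ell'=\ell\zip[\ind{n}{`a0}]$, and the whole term gets $[\ind{n}{`a}]\zip\ell$; the Replacement Lemma applied to $\ind{n}{`a0}\mapsto\ind{n}{`a}$ gives $\repl{t}{\ind{n}{`a0}}{\ind{n}{`a}}$ exactly the type $[\ind{n}{`a}]\zip\ell$, and symmetrically for $(\odot\bm{-}\triangledown_0)$. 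The hardest case is $(\triangledown\bm{-}\triangledown_1)$, with its three nested replacements: I would compose the Replacement Lemma three times, checking that $`a00,`a01,`a10,`a11$ are pairwise non-prefix so the substitutions act on disjoint prefix-families and the intermediate lists stay sorted and repetition-free, and then verify that the outer $\dup{n}{`a}{\dup{n}{`a0}{(-)}}$ rebuilds $[\ind{n}{`a}]\zip\ell$. This last bookkeeping of boolean-string prefixes — ensuring every $cond$ side condition lines up and that the re-sorted $\Ell$-types on the two sides of each such rule coincide — is the step I expect to be the main obstacle; the rest is a faithful transcription of the $\Luin$ proof from de~Bruijn indices to ®-indices.
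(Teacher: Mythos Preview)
Your proposal is correct and follows the same overall strategy as the paper: a case analysis on the rewrite rule at the root, with the $\zip/\mapdown$ algebra (the ®-index analogue of Lemma~\ref{lem:on-zip}) establishing equality of the two derived $\Ell$-types. The paper's own proof is in fact much less detailed than yours: it exhibits only $(\bm{\lambda}\bm{-}\odot)$ and $(\triangledown\bm{-}\bm{\lambda})$ --- both cases you classify as routine --- and then dismisses all remaining rules as ``straightforward,'' with no separate treatment of the three replacement-based rules $(\odot\bm{-}\triangledown_0)$, $(\odot\bm{-}\triangledown_1)$, $(\triangledown\bm{-}\triangledown_1)$ that you correctly flag as the delicate ones. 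Your Replacement Lemma is not stated anywhere in the paper, yet it is precisely the device those three cases require; so your write-up is not a different route so much as a completion of the paper's sketch, supplying the missing auxiliary lemma and the prefix-disjointness bookkeeping for $(\triangledown\bm{-}\triangledown_1)$ that the paper leaves implicit.
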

\begin{proof}
Assume that $t$ matches the left-hand side of one of the rules in Figure~\ref{fig:rewriting rules for LRdB}. We consider the following two cases.
\begin{description}
\item[$(\bm{\lambda}\bm{-}\odot) : $]~ $`l \era{n+1}{`a}{t} \rightarrow_c  \era{n}{`a}{`l t}$

  Rule $\bm{\lambda}\bm{-}\odot$ preserves type. Indeed

\bigskip

\prooftree
\prooftree
t : \ind{0}{`e} :: \ell
\justifies \era{n+1}{`a}{t} :  [\ind{n+1}{`a}] \zip \ind{0}{`e} :: \ell
\endprooftree
\justifies `l \era{n+1}{`a}{t} :  [\ind{n}{`a}] \zip  (\;\downarrow \ell)
\endprooftree
\qquad\qquad
\prooftree
\prooftree
t : \ind{0}{`e} :: \ell
\justifies `l t : \downarrow \ell 
\endprooftree
\justifies \era{n}{`a}{`l t} :   [\ind{n}{`a}]\zip (\;\downarrow \ell)
\endprooftree

\vspace*{1pt}

\bigskip
Both the term on the left-hand side and the term on the right-hand side of the rule are typed with the same type.
\noindent

\medskip

\vspace*{1.5pt}

\item[$(\triangledown\bm{-}\bm{\lambda}) : $]~ $\dup{n}{`a}(`l t) \rightarrow_c  `l (\dup{n+1}{`a}{t})$

Rule $\triangledown\bm{-}\bm{\lambda}$ preserves type. Indeed


\bigskip

\prooftree
\prooftree
t :  [\ind{n+1}{\alpha\bO}), (\ind{n+1}{\alpha\bI}] \zip  (\ind{0}{`e} :: \ell) 
\justifies `l t :   [(\ind{n}{\alpha\bO}), (\ind{n}{\alpha\bI})] \zip  \;\downarrow \ell 
\endprooftree
\justifies \dup{n}{`a}{(`l t)} :  [\ind{n}{`a}] \zip  (\;\downarrow \ell)
\endprooftree
\qquad\qquad
\prooftree
\prooftree
t : [\ind{n+1}{\alpha\bO}, \ind{n+1}{\alpha\bI}] \zip  (\ind{0}{`e} :: \ell) 
\justifies \dup{n+1}{`a}{t} :  [\ind{n+1}{`a}] \zip (\ind{0}{`e} :: \ell)
\endprooftree
\justifies `l \dup{n+1}{`a}{t} : [\ind{n}{`a}] \zip (\;\downarrow \ell)
\endprooftree

\bigskip

Both the term on the left-hand side and the term on the right-hand side of the rule are typed with the same type.

\vspace*{1pt}

Proving that other rules preserve $\Ell$-type is straightforward.

\end{description}
\end{proof}


As a consequence, resource reduction preserves linearity.

\begin{corollary}[Preservation of linearity] If a $\lRdB$-term $t$ is linear and $t \rightarrow_c  t'$ then $t'$ is also linear.
\end{corollary}

\BigStepRep  
\subsection{Head-reduction and $\beta$-contraction in $\LRdB^{\Ell}$}
\label{sec:struc}
\noindent In $\LRdB^{\Ell}$, $`b_{\textrm{\textregistered}}$-contraction is
\begin{displaymath}
\begin{array}{rcl@{\quad}l}
(`l t_1)\, t_2 & "->" & u & \qquad \mathrm{\mathbf{where}\quad} t_1 [t_2 / \ind{0}{`e}] \Downs u \qquad  \qquad \qquad (`b_{\textrm{\textregistered}}) 
\end{array}
\end{displaymath}
The binary relation $\Downs$ is between an expression of the form $\sub{t_1}{t_2}{\ind{0}{`e}}$ and a
$\lRdB^{\Ell}$-term, known as the big-step semantics~\cite{DBLP:journals/scp/KokkeSW20} defined in Figure~\ref{fig:BSS}.  Intuitively the substitution
$[t_2 / \ind{0}{`e}]$ substitutes $t_2 : []$ (i.e., a closed and linear term) for the ®-indices $\ind{0}{`e}$. More generally
$[t / \ind{i}{`a}]$ substitutes $t : []$ for the ®\=/indices $\ind{i}{`a}$. The fact that the term $t$, which is
substituted, is of $\Ell$-type $[]$ (hence linear and closed) is crucial.  $`b_{\textrm{\textregistered}}$-contraction is the
key of head reduction in~$\LRdB^{\Ell}$. A~second operation close to \textsf{substitute} and called \textsf{replace} is defined in Figure~\ref{fig:BSR}. The goal is to have an operation which, unlike \textsf{substitute}, replaces ®-indices by terms without updating the  other ®-indices;  \textsf{replace} is used in the definition of \textsf{substitute}.   The reader may notice that the  premises of the rules of  Figure~\ref{fig:BSR} do not overlap and cover all the patterns, so the relation $\Downr$ is deterministic, similarly for the relation $\Downs$ of Figure~\ref{fig:BSS}.  Therefore the $u$ used in the definition of $(`b_{\textrm{\textregistered}})$ is unique and is always defined. 

This duality $\langle$\textsf{substitute} -- \textsf{replace}$\rangle$ reminds us \textsf{laziness} : the action done in the second run of the evaluation is lighter than the action done in the first run of the evaluation.  Clearly this connection should be deepened in order to investigate optimisation in the computation of the $`b_{\textrm{\textregistered}}$-contraction.  To our knowledge, this is the first time that such a mechanism is studied in the context of a $`l$-calculus with implicit names.

\BigStepSub 

\newcommand{\Sub}[3]{\ensuremath{#1[#2/#3]}}
\subsubsection{$\Ell$-type preservation for substitution}



\begin{figure}[!ht]
  \centering
  \Ovalbox{
    \begin{minipage}{.9\textwidth}
      \begin{displaymath}
        (Rep_{\in})~
        \prooftree
        t'\replace{t}{\ind{i}{`a}} \Downs u \quad t': \ell \zip [\ind{i}{`a}] \quad t : []
        \justifies u : \ell
        \endprooftree
      \end{displaymath}
      \bigskip
        \begin{displaymath}
               (Rep_{\notin})~
        \prooftree
        t'\replace{t}{\ind{i}{`a}} \Downs u \quad t': \ell \quad \ind{i}{`a} \notin \ell \quad t : []
        \justifies u : \ell
        \endprooftree
      \end{displaymath}
     \bigskip
        \begin{displaymath}
        (Sub_{\in})~
        \prooftree
        \sub{t'}{t}{\ind{i}{`a}} \Downs u \quad \  t': \ell\zip[\ind{i}{`a}] \quad t : []
         \justifies u :  (\le  i \mid \ell) \zip \down (> i \mid \ell)
         \endprooftree
       \end{displaymath}
       \bigskip
         \begin{displaymath}
 (Sub_{\notin})~
        \prooftree
        \sub{t'}{t}{\ind{i}{`a}} \Downs u \quad \  t': \ell \quad \ind{i}{`a} \notin \ell  \quad t : []
         \justifies u :  (\le  i \mid \ell) \zip \down (> i \mid \ell)
         \endprooftree
      \end{displaymath}
    \end{minipage}}
    \caption[(rep) and (sub)]{Preservation of $\Ell$-type by replace and substitute}
  \label{fig:Type-preserv}
\end{figure}

\begin{proposition}[Admissiblity of \mbox{\textit{Rep}}]\label{prop:lpresrep}
The rules $(Rep_{\in})$ and  $(Rep_{\notin})$ of Figure~\ref{fig:Type-preserv}  are admissible.
\end{proposition} 
\begin{proof}~
  The proof is by structural induction on $t'$ and by case for the rules of Figure~\ref{fig:typesForLRdB} used to type $t'$.

  * $(Rapp)$ Assume $t_1 : \ell_1 \zip [\ind{i}{`a}]$ and $t_2 : \ell_2$. By induction on $(Rep_{\in})$, $u_1 : \ell_1$. By induction on  $(Rep_{\notin})$, $u_2 : \ell_2$.  By (app) $u_1\,u_2 : \ell_1 \zip \ell_2$. The other cases  $t_1 : \ell_1$ and $t_2 : \ell_2 \zip [\ind{i}{`a}]$ and   $t_1 : \ell_1$ and $t_2 : \ell_2$ and $\ind{i}{`a} \notin \ell_1\zip \ell_2$ work the same way.

  * $(Rabs)$ If $`l\,t' : \down  \ell$ then $t':\ind{0}{`a} :: \ell$. By induction  $u:\ind{0}{`a} :: \ell$. Hence $`l\,u : \down \ell$.

  *  $(Rind_1)$ and $(Rind_2)$ and $(R\odot_1)$ are base cases, we have just to prove that conclusions are true statements.

  * $(R\odot_2)$ Assume $t':\ell \zip [\ind{i}{`a}]$. Hence $\era{j}{`b}{t'} : \ell \zip [\ind{i}{`a}] \zip [\ind{j}{`b}] $
  By induction $u :\ell$, hence $\era{j}{`b}{u}:\ell \zip[\ind{i}{`b}]$. If $\ind{i}{`a}\notin \ell$, the proof is similar.

  * $(R\triangledown_1)$  Assume $t':\ell$, $t' \replace{t}{\ind{i}{`a0}} \Downs v$ and $v \replace{t}{\ind{i}{`a1}} \Downr u$. Since $\dup{i}{`a}{t'}$ is a correct $\lRdB$-term, $\ind{i}{`a0} \in \ell$, $\ind{i}{`a1} \in \ell$ and $\ind{i}{`a} \notin \ell$.  Hence we may write $t' : \ell' \zip [\ind{i}{`a0}] \zip [\ind{i}{`a1}]$.
Then by induction $v : \ell \zip  [\ind{i}{`a1}]$. Then 
by $(Rep_{\in}$), $u:  \ell$ and by $(dup)$, $\sub{(\dup{i}{`a}{t'})}{t}{\ind{i}{`a}} : \ell$.

* $(R\triangledown_2)$ is on the same model.
  
\end{proof}
\begin{proposition}[Admissiblity of \mbox{\textit{Sub}}]\label{prop:lpressub}
The rules $(Sub_{\in})$ and $(Sub_{\notin})$ of Figure~\ref{fig:Type-preserv} are admissible.
\end{proposition}
\begin{proof}
  
  * $(Sapp)$ Assume $t_1 : \ell_1 \zip [\ind{i}{`a}]$ and $t_2:\ell_2$. Then $t_1\,t_2 : \ell_1\zip\ell_2 \zip  [\ind{i}{`a}]$. By induction for $(Sub_{\in})$ on $t_1$, $u_1: (\le i\mid\ell_1) \zip (> i\mid\ell_1) $ and by induction for $(Sub_{\notin})$ on $t_2$ $u_2: (\le i\mid\ell_2) \zip (> i\mid\ell_2)$ then $u_1 \,u_2 : (\le i\mid\ell_1 \zip\ell_2) \zip (> i\mid\ell_1 \zip \ell_2) $.   The case $t_1 : \ell_1$ and $t_2:\ell_2 \zip [\ind{i}{`a}]$ works alike.

  * $(Sabs)$ If $`l\, t' : \, \down \ell$, then $t':  \ind{0}{`e} :: \ell$. By induction $u: \ind{0}{`e} :: (\le i \mid \ell) \zip (>i \mid\ell)$, hence
  \begin{displaymath}
`l\,u: \, \down (\le i \mid \ell) \zip \down (>i \mid\ell) ~=~ (\le i \mid \down \ell) \zip \down (>i \mid \down\ell).
\end{displaymath}

  * $(Sind_1)$ and $(Sind_2)$ and $(Sind_3)$ and $(S\odot_1)$ are base cases, we have just to prove that conclusions are true statements. 

  * $(S\odot_2)$ Assume $i>j$ and $t':\ell$, then since $\era{i}{`a}{t'}$ is a correct $\lRdB$-term, $\ind{i}{`a}\notin \ell$ and $(Rep_{\notin})$ applies, then ${u:(\le j \mid \ell) \zip \,\down (>j \mid \ell)}$. Moreover $\sub{(\era{i}{`a}{t'})}{t}{\ind{j}{`b}} : \ell \zip [\ind{i}{` a}]$. Then
\begin{displaymath}
  \era{i-1}{` a}{u} : (\le j \mid \ell) \,\zip \,\down (>j \mid \ell) \zip [\ind {i-1}{`a}] ~=~ (\le j \mid \ell \zip [\ind{i}{` a}]) \zip \,\down (>j \mid \ell\zip [\ind{i}{` a}])
\end{displaymath}

  * $(S\odot_3)$ Assume $i \le j $ and $t':\ell$, then since $\era{i}{`a}{t'}$ is a correct $\lRdB$-term, $\ind{i}{`a}\notin \ell$ and $(Rep_{\notin})$ applies, then ${u:(\le j \mid \ell) \zip \,\down (>j \mid \ell)}$. Moreover $\sub{(\era{i}{`a}{t'})}{t}{\ind{j}{`b}} : \ell \zip [\ind{i}{` a}]$. Then
\begin{displaymath}
  \era{i}{` a}{u} : (\le j \mid \ell) \,\zip \,\down (>j \mid \ell) \zip [\ind {i}{`a}] ~=~ (\le j \mid \ell \zip [\ind{i}{` a}]) \zip \,\down (>j\mid \ell\zip [\ind{i}{` a}])
\end{displaymath}

* $(Sub\triangledown_1)$ Assume $t':\ell$, $\sub{t'}{t}{\ind{i}{`a0}} \Downs v$ and $v \replace{t}{\ind{i}{`a1}} \Downr u$. Since $\dup{i}{`a}{t'}$ is a correct $\lRdB$-term, $\ind{i}{`a0} \in \ell$, $\ind{i}{`a1} \in \ell$ and $\ind{i}{`a} \notin \ell$.  Hence we may write $t' : \ell' \zip [\ind{i}{`a0}] \zip [\ind{i}{`a1}]$.

\noindent Then by induction $v : (\le i \mid \ell') \zip \down (>i \mid \ell') \zip  [\ind{i}{`a1}]$. Then 
by $(Rep_{\in}$), $u:  (\le i \mid \ell') \zip \down (> i \mid \ell')$ and by $(dup)$, $\sub{(\dup{i}{`a}{t'})}{t}{\ind{i}{`a}} :  (\le i \mid \ell') \zip \down (> i \mid \ell')$.

* $(Sub\triangledown_2)$ Assume $\sub{t'}{t}{\ind{j}{`b}} \Downs u$ and $i > j$ and $t':\ell$. Since $\dup{i}{`a}{t'}$ is a correct $\lRdB$-term, $\ind{i}{`a0} \in \ell$ and  $\ind{i}{`a0} \in \ell$. Let us write $t':\ell' \zip [\ind{i}{`a0}] \zip  [\ind{i}{`a1}]$.
By (dup), $\dup{i}{`a}{t'} : \ell'\zip  [\ind{i}{`a}]$.
Beside, by induction
\begin{displaymath}
u: (\le j \mid \ell) \zip \down\, (>j \mid \ell) ~=~  (\le j \mid \ell') \zip \down\, (>j \mid \ell') \zip  [\ind{i-1}{`a0}] \zip  [\ind{i-1}{`a1}].
\end{displaymath}
Hence by (dup) $\dup{i-1}{`a}{u} :  (\le j \mid \ell') \zip \down\, (>j \mid \ell') \zip  [\ind{i-1}{`a}]$.

* $(Sub\triangledown_3)$ Assume $\sub{t'}{t}{\ind{j}{`b}} \Downs u$ and $i < j \vee (i=j= \wedge `a \neq `b)$ and $t':\ell$. Since $\dup{i}{`a}{t'}$ is a correct $\lRdB$-term, $\ind{i}{`a0} \in \ell$ and  $\ind{i}{`a0} \in \ell$. Let us write $t':\ell' \zip [\ind{i}{`a0}] \zip  [\ind{i}{`a1}]$.
By (dup), $\dup{i}{`a}{t'} : \ell'\zip  [\ind{i}{`a}]$.
Beside, by induction
\begin{displaymath}
  u: (\le j \mid \ell) \zip \down\, (>j \mid \ell) ~=~  (\le j \mid \ell') \zip \down\, (>j \mid \ell') \zip  [\ind{i}{`a0}] \zip  [\ind{i}{`a1}].
\end{displaymath}
Hence by (dup) $\dup{i-1}{`a}{u} :  (\le j \mid \ell') \zip \down\, (>j \mid \ell') \zip  [\ind{i}{`a}]$.
\end{proof}

\begin{proposition}[$\Ell$-type preservation for $`b_{\textrm{\textregistered}}$]
  $ `b_{\textrm{\textregistered}}$-contraction preserves $\Ell$-type, i.e., if $(\lambda t_1)t_2 \; \longrightarrow \; u$ and $(\lambda t_1)t_2: \ell $ then $u:\ell$.
  
\end{proposition}
\begin{proof}
The proof uses the admissible rule $(Sub_{\in})$
  \begin{displaymath}
    \prooftree
    \prooftree
    t_1 : \ind{0}{`e} :: \ell
    \justifies `l\, t_1 : \,\down \ell
    \endprooftree
    \qquad t_2 : []
    \justifies (`l\,t_1)~t_2 : (\down \ell) \zip []
    \endprooftree
    \qquad \qquad 
    \prooftree
 \Sub{t_1}{t_2}{\ind{0}{`e}} \Downarrow_s u \qquad   t_1 : \ind{0}{`e} :: \ell \qquad  t_2 : [] 
	\justifies u: (< 0 \mid \ell) \zip \down ( > 0 \mid \ell)
	\using(Sub_{\in})
    \endprooftree
  \end{displaymath}\\

Notice that $ (\down \ell) \zip [] ~=~ \down \ell ~=~  (< 0 \mid \ell) \zip \down ( > 0 \mid \ell)$, because $ (< 0 \mid \ell) = []$ and $( > 0 \mid \ell) = \ell$.
\end{proof}
As a consequence, $`b_{\textrm{\textregistered}}$-reduction preserves linearity.

\subsection{Examples of reductions in $\lRdB$-calculus : {\sf SKK} and booleans}

The operational semantics of the $\lRdB$-calculus is equipped with the $`b_{\textrm{\textregistered}}$-reduction, resource reduction and their reflexive, transitive and contextual closure. Hereinbelow,  we present examples of reductions in $\lRdB$-calculus of the terms {\sf SKK}, negation of a boolean, disjunction of booleans. The examples demonstrate the fine-grained structure of the reductions obtained by the resource reductions.

\begin{example}[Reducing {\sf SKK} in $\lRdB$]\label{ex:redSKK}
  As an example, we propose to reduce the term \textsf{SKK}, from Subsection~\ref{sec:bestiary} :
  \begin{displaymath}
    \begin{array}{lll@{\quad}l}
      ((\lambda(\lambda(\lambda\dup{0}{`e}{((\ind{2}{`e}\,\ind{0}{0})\,(\ind{1}{`e}\,\ind{0}{1}))})))\,(\lambda(\lambda\era{0}{`e}{\ind{1}{`e}})))\,(\lambda(\lambda\era{0}{`e}{\ind{1}{`e}})) & \rightarrow&& \textrm{using~} (`b_{\textrm{\textregistered}})\\
      (\lambda(\lambda\dup{0}{`e}{(((\lambda(\lambda\era{0}{`e}{\ind{1}{`e}}))\,\ind{0}{0})\,(\ind{1}{`e}\,\ind{0}{1}))}))\,(\lambda(\lambda\era{0}{`e}{\ind{1}{`e}}))& \rightarrow&& \textrm{using~} (`b_{\textrm{\textregistered}})\\
      \lambda\dup{0}{`e}{(((\lambda(\lambda\era{0}{`e}{\ind{1}{`e}}))\,\ind{0}{0})\,((\lambda(\lambda\era{0}{`e}{\ind{1}{`e}}))\,\ind{0}{1}))}& \rightarrow&& \textrm{using~} (`b_{\textrm{\textregistered}}) \\
      \lambda\dup{0}{`e}{\era{0}{1}{\ind{0}{0}}}&\rightarrow_c& \lambda\ind{0}{`e}&  \textrm{using~}  (\odot {-}\triangledown_1)
    \end{array}
  \end{displaymath}
As expected, \textsf{SKK} reduces do \textsf{I}.
\end{example}

\begin{example}[Reducing the negation of a boolean in $\lRdB$]\label{ex:negbool}
  Boolean \textsf{tt} and boolean \textsf{ff} are given in Section~\ref{sec:bestiary}.
  Negation is term $\lambda(\lambda(\lambda((\ind{2}{`e}\,\ind{0}{`e})\,\ind{1}{`e})))$.
  \begin{displaymath}
  \begin{array}{lll@{\quad}l}
 (\lambda(\lambda(\lambda((\ind{2}{`e}\,\ind{0}{`e})\,\ind{1}{`e}))))\,(\lambda(\lambda\era{0}{`e}{\ind{1}{`e}})) 
    & \rightarrow&& \textrm{using~} (`b_{\textrm{\textregistered}})\\
\lambda(\lambda(((\lambda(\lambda\era{0}{`e}{\ind{1}{`e}}))\,\ind{0}{`e})\,\ind{1}{`e})) 
    & \rightarrow&& \textrm{using~} (`b_{\textrm{\textregistered}})\\
 \lambda(\lambda((\lambda\era{0}{`e}{\ind{1}{`e}})\,\ind{1}{`e})) 
    &\rightarrow&& \textrm{using~} (`b_{\textrm{\textregistered}})\\
\lambda(\lambda\era{1}{`e}{\ind{0}{`e}}) 
    & \rightarrow_c& \lambda\era{0}{`e}{(\lambda\ind{0}{`e})} & \textrm{using~} ({\lambda} {-}\odot)
  \end{array}
\end{displaymath}
In other words, the negation of \textsf{tt} reduces to \textsf{ff}. 
\begin{displaymath}
  \begin{array}{lll@{\quad}l}
 (\lambda(\lambda(\lambda((\ind{2}{`e}\,\ind{0}{`e})\,\ind{1}{`e}))))\,(\lambda\era{0}{`e}{(\lambda\ind{0}{`e})}) 
    & \rightarrow&& \textrm{using~} (`b_{\textrm{\textregistered}})\\
    \lambda(\lambda(((\lambda\era{0}{`e}{(\lambda\ind{0}{`e})})\,\ind{0}{`e})\,\ind{1}{`e})) 
    & \rightarrow&& \textrm{using~} (`b_{\textrm{\textregistered}})\\
    \lambda(\lambda(\era{0}{`e}{(\lambda\ind{0}{`e})}\,\ind{1}{`e})) 
    &\rightarrow_c&  &  \textrm{using~} ({\lambda} {-}\odot)\\
    \lambda(\lambda\era{0}{`e}{((\lambda\ind{0}{`e})\,\ind{1}{`e})}) 
    & \rightarrow & \lambda(\lambda\era{0}{`e}{\ind{1}{`e}}) & \textrm{using~} (`b_{\textrm{\textregistered}}) 
  \end{array}
\end{displaymath}
\end{example}
In other words, the negation of \textsf{ff} reduces to \textsf{tt}. The reader may notice that rule $({\lambda} {-}\odot)$ pulls out a $\odot$ beyond a $`l$ and therefore enables a $ (`b_{\textrm{\textregistered}})$ contraction.

\begin{example}[Reducing the disjunction of booleans in $\lRdB$]\label{ex:orbool}
  This example deals with disjunction in booleans. The disjunction in $\LRdB$ is
  \begin{displaymath}
    \lambda(\lambda(\lambda(\lambda\dup{1}{`e}{((\ind{3}{`e}\,\ind{1}{0})\,((\ind{2}{`e}\,\ind{1}{1})\,\ind{0}{`e}))})))
  \end{displaymath}
  Let us start with the disjunction of \textsf{tt} with itself. In other hands, let us reduce ``\textsf{tt} \textsf{and} \textsf{tt}''. 
  \begin{displaymath}
    \begin{array}{lll@{\quad}l}
      (\lambda(\lambda(\lambda(\lambda\dup{1}{`e}{((\ind{3}{`e}\,\ind{1}{0})\,((\ind{2}{`e}\,\ind{1}{1})\,\ind{0}{`e}))})))) (\lambda(\lambda\era{0}{`e}{\ind{1}{`e}})) (\lambda(\lambda\era{0}{`e}{\ind{1}{`e}})) 
       & \rightarrow&& \textrm{using~} (`b_{\textrm{\textregistered}})\\
      (\lambda(\lambda(\lambda\dup{1}{`e}{(((\lambda(\lambda\era{0}{`e}{\ind{1}{`e}}))\,\ind{1}{0})\,((\ind{2}{`e}\,\ind{1}{1})\,\ind{0}{`e}))})))\,(\lambda(\lambda\era{0}{`e}{\ind{1}{`e}})) 
       & \rightarrow&& \textrm{using~} (`b_{\textrm{\textregistered}})\\
      \lambda(\lambda\dup{1}{`e}{(((\lambda(\lambda\era{0}{`e}{\ind{1}{`e}}))\,\ind{1}{0})\,(((\lambda(\lambda\era{0}{`e}{\ind{1}{`e}}))\,\ind{1}{1})\,\ind{0}{`e}))}) 
      & \rightarrow&& \textrm{using~} (`b_{\textrm{\textregistered}})\\ 
     \lambda(\lambda\dup{1}{`e}{((\lambda\era{0}{`e}{\ind{2}{0}})\,(((\lambda(\lambda\era{0}{`e}{\ind{1}{`e}}))\,\ind{1}{1})\,\ind{0}{`e}))}) 
      & \rightarrow&& \textrm{using~} (`b_{\textrm{\textregistered}})\\
      \lambda(\lambda\dup{1}{`e}{\era{1}{1}{\era{0}{`e}{\ind{1}{0}}}}) 
 & \rightarrow_c && \textrm{using~} (\odot {-}\triangledown_1)\\
\lambda(\lambda\era{0}{`e}{\ind{1}{`e}})
    \end{array}
  \end{displaymath}
Let us see now the disjunction of \textsf{ff} with itself. 
\begin{displaymath}
  \begin{array}{ll@{~}l}
    ((\lambda(\lambda(\lambda(\lambda\dup{1}{`e}{((\ind{3}{`e}\,\ind{1}{0})\,((\ind{2}{`e}\,\ind{1}{1})\,\ind{0}{`e}))}))))\,(\lambda\era{0}{`e}{(\lambda\ind{0}{`e})}))\,(\lambda\era{0}{`e}{(\lambda\ind{0}{`e})})
    & \rightarrow& \textrm{using~} (`b_{\textrm{\textregistered}})\\
    (\lambda(\lambda(\lambda\dup{1}{`e}{(((\lambda\era{0}{`e}{(\lambda\ind{0}{`e})})\,\ind{1}{0})\,((\ind{2}{`e}\,\ind{1}{1})\,\ind{0}{`e}))})))\,(\lambda\era{0}{`e}{(\lambda\ind{0}{`e})})
    & \rightarrow& \textrm{using~} (`b_{\textrm{\textregistered}})\\
    \lambda(\lambda\dup{1}{`e}{(((\lambda\era{0}{`e}{(\lambda\ind{0}{`e})})\,\ind{1}{0})\,(((\lambda\era{0}{`e}{(\lambda\ind{0}{`e})})\,\ind{1}{1})\,\ind{0}{`e}))})
    & \rightarrow& \textrm{using~} (`b_{\textrm{\textregistered}})\\
    \lambda(\lambda\dup{1}{`e}{(\era{1}{0}{(\lambda\ind{0}{`e})}\,(((\lambda\era{0}{`e}{(\lambda\ind{0}{`e})})\,\ind{1}{1})\,\ind{0}{`e}))})
    & \rightarrow& \textrm{using~} (`b_{\textrm{\textregistered}})\\
    \lambda(\lambda\dup{1}{`e}{(\era{1}{0}{(\lambda\ind{0}{`e})}\,(\era{1}{1}{(\lambda\ind{0}{`e})}\,\ind{0}{`e}))})
    & \rightarrow_c& \textrm{using~}  (\text{AppL} {-} \odot)\\
    \lambda(\lambda\dup{1}{`e}{\era{1}{0}{((\lambda\ind{0}{`e})\,(\era{1}{1}{(\lambda\ind{0}{`e})}\,\ind{0}{`e}))}})
     & \rightarrow_c& \textrm{using~}  (\odot {-}\triangledown_0)\\
    \lambda(\lambda((\lambda\ind{0}{`e})\,(\era{1}{`e}{(\lambda\ind{0}{`e})}\,\ind{0}{`e})))
    & \rightarrow& \textrm{using~} (`b_{\textrm{\textregistered}})\\
    \lambda(\lambda(\era{1}{`e}{(\lambda\ind{0}{`e})}\,\ind{0}{`e}))
     &\rightarrow_c& \textrm{using~}  (\text{AppL} {-} \odot)\\
    \lambda(\lambda\era{1}{`e}{((\lambda\ind{0}{`e})\,\ind{0}{`e})})
    & \rightarrow& \textrm{using~} (`b_{\textrm{\textregistered}})\\
    \lambda(\lambda\era{1}{`e}{\ind{0}{`e}})
     &\rightarrow_c& \textrm{using~}  ({\lambda} {-}\odot)\\
    \lambda\era{0}{`e}{(\lambda\ind{0}{`e})}
      \end{array}
    \end{displaymath}
    As expected, the disjunction of \textsf{ff} and \textsf{ff}
    reduces to \textsf{ff}.  The examples were made possible by the
    \texttt{Haskell} implementation (see Section~\ref{sec:Haskell}).
    To summarise: the above examples use 19 times rule~$(`b_{\textrm{\textregistered}})$ and 8 times resource control rewriting rules from Figure~\ref{fig:rewriting rules for LRdB}.
    Those ``statistics'' give an intuition on the
    cost of resource control. But a study grounded on analytic combinatorics~\cite{flajolet08:_analy_combin}, like this
    of~\cite{DBLP:conf/ppdp/BendkowskiL18,DBLP:journals/lmcs/BendkowskiL19},
    especially like Table~1 of~\cite{DBLP:conf/ppdp/BendkowskiL18},
    might give even more precise information on the ratio of
    $\triangledown$-contractions and $\odot$-contractions over
    $`b_{\textrm{\textregistered}}$-contractions.  This would require
    an approach of the substitution of ®-indices by explicit
    substitutions, which is not done yet. Perhaps, sampling of
    $\lRdB$-terms associated with empiric counting of contractions
    when reducing to the normal form would be valuable.
  \end{example}

\subsection{Examples of reductions in $\lRdB$-calculus :  Church numerals}
\label{section-example}
\newcommand{\sucR}{\mathbf{s}_{\textrm{\textregistered}}}
\newcommand{\Chtwo}{\textbf{\textsf{two}}}
\newcommand{\Chthree}{\textbf{\textsf{three}}}
\newcommand{\Chfive}{\textbf{\textsf{five}}}
\newcommand{\Sum}{\textbf{\textsf{sum}}}

Here we show examples based on Church numerals (see Section~\ref{sec:bestiary}).  The archetype of a non linear $`l$-term is the Church numeral for $\mathbf{2}$ which corresponds to the $\lRdB$-term:
\begin{displaymath}
  \Chtwo \equiv \lambda(\lambda\dup{1}{`e}{(\ind{1}{0}\,(\ind{1}{1}\,\ind{0}{`e}))})
\end{displaymath}
$\Chtwo$ has one occurrence of $\triangledown$, that is one duplication.  Another non linear $`l$-term is \textbf{successor} on Church numeral, which we write $\sucR$ and which is the $\lRdB$-term:
  \begin{eqnarray*}
    \sucR &\equiv& \lambda(\lambda(\lambda\dup{1}{`e}{((\ind{2}{`e}\,\ind{1}{0})\,(\ind{1}{1}\,\ind{0}{`e}))}))
  \end{eqnarray*}
The successor of $\Chtwo$ is $\sucR~\Chtwo$. But  $\sucR~\Chtwo$ is not in $\lRdB$-normal form. It reduces as follows:
\begin{displaymath}
\begin{array}{rcl}
    \sucR~\Chtwo &\equiv& (\lambda(\lambda(\lambda\dup{1}{`e}{((\ind{2}{`e}\,\ind{1}{0})\,(\ind{1}{1}\,\ind{0}{`e}))})))\,(\lambda(\lambda\dup{1}{`e}{(\ind{1}{0}\,(\ind{1}{1}\,\ind{0}{`e}))})) \\
                 &\stackrel{3}{\rightarrow}& \lambda(\lambda\dup{1}{`e}{\dup{1}{0}{(\ind{1}{00}\,(\ind{1}{01}\,(\ind{1}{1}\,\ind{0}{`e})))}})%
                 \qquad \textrm{using three times~} (`b_{\textrm{\textregistered}})\\
  &\equiv& \Chthree
\end{array}
\end{displaymath}
We notice that $\Chthree$ has two $\triangledown$'s.

In $`l$-calculus, $\Sum$  is  yet another non linear $`l$-term.  In $\LRdB$, it is
\begin{displaymath}
\lambda(\lambda(\lambda(\lambda\dup{1}{`e}{((\ind{3}{`e}\,\ind{1}{0})\,((\ind{2}{`e}\,\ind{1}{1})\,\ind{0}{`e}))}))).
\end{displaymath}
Let use the $\Sum$ to define a Church numeral $\Chfive$ as the $\lRdB$-normal form of $\Sum~(\sucR~\Chtwo)~\Chtwo$.
\begin{displaymath}
    \Sum~(\sucR~\Chtwo)~\Chtwo  \stackrel{6}{\rightarrow}
    \lambda(\lambda\dup{1}{`e}{\dup{1}{0}{\dup{1}{00}{(\ind{1}{000}\,(\ind{1}{001}\,(\ind{1}{01}\,\dup{1}{1}{(\ind{1}{10}\,(\ind{1}{11}\,\ind{0}{`e}))})))}}})
  \end{displaymath}
  using 6 times $`b_{\textrm{\textregistered}}$ and the term can no more be reduced. It has $4$ duplications.  If, instead of  $\Sum~(\sucR~\Chtwo)~\Chtwo$  we take $\Sum~\Chtwo~(\sucR~\Chtwo)$ we get:
  \begin{displaymath}
    \lambda(\lambda\dup{1}{`e}{\dup{1}{0}{(\ind{1}{00}\,(\ind{1}{01}\,\dup{1}{1}{\dup{1}{10}{(\ind{1}{100}\,(\ind{1}{101}\,(\ind{1}{11}\,\ind{0}{`e})))}}))}})
  \end{displaymath}
  after $6$ $`b_{\textrm{\textregistered}}$-reductions.
 There is another $\lRdB$-term for the Church numeral for $\mathbf{5}$ :
  \begin{displaymath}
    \lambda(\lambda\dup{1}{`e}{(\ind{1}{0}\,\dup{1}{1}{(\ind{1}{10}\,\dup{1}{11}{(\ind{1}{110}\,\dup{1}{111}{(\ind{1}{1110}\,(\ind{1}{1111}\,\ind{0}{`e}))})})})})
  \end{displaymath}
  The two above $\lRdB$-terms are different by the order, in which the duplications are performed.
  \begin{itemize}
  \item In the first term,$\ind{1}{`e}$ is first duplicated, then $\ind{1}{0}$ is duplicated, then $\ind{1}{00}$ is duplicated, then $\ind{1}{1}$ is duplicated.
  \item In the second term, $\ind{1}{`e}$ is first duplicated, then $\ind{1}{0}$ is duplicated, then $\ind{1}{1}$ is duplicated, then $\ind{1}{10}$ is duplicated.
  \item In the third term, $\ind{1}{`e}$ is first duplicated, then $\ind{1}{1}$ is duplicated,  $\ind{1}{11}$ is duplicated, then $\ind{1}{111}$ is duplicated.
  \end{itemize}
  We can define a notion of \textsf{representative} term which can be computed by a program and which supposes a somewhat ''canonical'' way of performing the duplications. But the there is several ways to define the notion of ``\textsf{canonical}''. Which notion of canonicity to choose is a matter of taste.  The term obtained by the function \textsf{read} is \textsf{canonical}, for the notion canonicity we choose.  Therefore a natural way to define the \textsf{representative} of the $\lRdB$-term  $t$ is $\textsf{read}(\textsf{readback}(t))$.  \textsf{read} and \textsf{readback} are defined in Section~\ref{sec:read}.
  However, there is no rewrite system to compute it, like the system of Figure~\ref{fig:rewriting rules for LRdB}.  Among the three $\lRdB$-normal forms above, representing the Church numeral for $\mathbf{5}$ in~$\LRdB$, the third term is the \textsf{representative}.

\subsection{Correspondence with $\Lambda$}
\label{sec:read}
In order to establish a correspondence between the introduced language
$\LR$ and the well-known system $\Lambda$, we define two translations:
$\mathsf{read} : `L "->" \LR$ and $\mathsf{readback} : \LR "->" `L$.
\begin{definition}[\textsf{read}] \label{def:read}
  $\mathsf{read} : `L "->" \LR$
  \begin{itemize}
  \item $\mathsf{read}~ \udl{n}  = \ind{n}{`e}$
  \item $\mathsf{read}~ (`l\,t) = \mathsf{let~} u = \mathsf{read} ~t \\
    \hspace*{50pt} \mathsf{~in~} \mathsf{if~} \ind{0}{`e} `: u \mathsf{~then~} `l\,u\\
    \hspace*{65pt} \mathsf{else~} `l\,\era{0}{`e}{u}$
  \item $\mathsf{read} (t_1\,t_2) = \multiDup{t_1^\star\cap t_2^\star} (\mathsf{rename}~ 0~ (t_1^\star\cap t_2^\star) (\mathsf{read}~ t_1)\, \mathsf{rename}~ 1~ (t_1^\star\cap t_2^\star) (\mathsf{read}~ t_2)$
  \end{itemize}
  where
  \begin{itemize}
  \item[$-$] $\textsf{rename}~0~\ell$ replaces every ®-index of the form
    $\ind{n}{`a}$ in the list $\ell$ of ®-indices by the corresponding
    ®-index of the form $\ind{n}{`a0}$ and similarly
    $\textsf{rename}~1~\ell$ replaces all ®-index of the form $\ind{n}{`a}$ in
    the list by the corresponding ®-index of the form $\ind{n}{`a1}$.
  \item[$-$] $t_1^\star\cap t_2^\star$ is a short notation for the list of ®-indices that occur both in $\mathsf{read}(t_1)$ and in $\mathsf{read(t_2)}$.
  \item[$-$] $\multiDup{\ell}~t$ is the iteration of the duplicator $\triangledown$.  In other words, if $\ell$ is the list $[\ind{k_1}{`g_1},\ind{k_2}{`g_2},...,\ind{k_n}{`g_n}]$, then $\multiDup{\ell}~t = \dup{k_1}{`g_1}{\dup{k_2}{`g_2}{...\dup{k_n}{`g_n}{t}}}$.
\end{itemize}
\end{definition}

The translation \textsf{read} is the formalisation of the translations presented in Example~\ref{exa:R} and  corresponds to the \textsf{Haskell} function \textsf{readLR} (l. 151 in \href{https://github.com/PierreLescanne/LambdaCalculusWithDuplicationsAndErasures/blob/master/Lambda_R_dB.hs}{Lambda$\_$R$\_$dB.hs}).
\begin{definition}[\textsf{readback}] \label{def:readback}
  $\mathsf{readback}  : \LR "->" `L$
  \begin{itemize}
  \item $\mathsf{readback}~\ind{n}{`a} = \udl{n}$
  \item $\mathsf{readback}~(`l t) = `l (\mathsf{readback}~t)$
  \item $\mathsf{readback}~(t_1\,t_2) = (\mathsf{readback}~t_1)\,(\mathsf{readback}~t_2)$
  \item $\mathsf{readback}~(\era{n}{`a}{t}) = \mathsf{readback}~t$
  \item $\mathsf{readback}~(\dup{n}{`a}{t}) = \mathsf{readback}~t$.
  \end{itemize}
\end{definition}
\begin{proposition}[Correctness of \textsf{read}]\label{th:correctness}
  \begin{math}
    \bm{\lambda}\, t . \mathsf{readback}~ (\mathsf{read}~ t) : `L "->" `L 
  \end{math}
  is the identity on~$`L$.  In other words,
  \begin{displaymath}
    \mathsf{readback}~ (\mathsf{read}~ t) = t.
  \end{displaymath}
\end{proposition}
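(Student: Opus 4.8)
The plan is to prove $\mathsf{readback}~(\mathsf{read}~t)=t$ by structural induction on $t\in`L$, after recording two elementary invariance properties of \textsf{readback}. The two clauses of \textsf{read} that recurse on a single subterm close immediately. For $t=\udl{n}$: $\mathsf{readback}(\mathsf{read}~\udl{n})=\mathsf{readback}~\ind{n}{`e}=\udl{n}$. For $t=`l s$, put $u=\mathsf{read}~s$, so $\mathsf{readback}~u=s$ by the induction hypothesis; if $\ind{0}{`e}`:u$ then $\mathsf{read}(`l s)=`l u$ and $\mathsf{readback}(`l u)=`l(\mathsf{readback}~u)=`l s$, and otherwise $\mathsf{read}(`l s)=`l(\era{0}{`e}{u})$ and $\mathsf{readback}(`l(\era{0}{`e}{u}))=`l(\mathsf{readback}(\era{0}{`e}{u}))=`l(\mathsf{readback}~u)=`l s$, using the $\odot$\=/clause $\mathsf{readback}(\era{n}{`a}{v})=\mathsf{readback}~v$ of \textsf{readback}.

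The application clause of \textsf{read} is the one needing preparation, because it wraps the recursive results in a chain of duplicators and in two \textsf{rename}s. I would first prove two lemmas. \emph{Lemma A (a duplicator prefix is invisible to \textsf{readback}):} for every $\LR$\=/term $s$ and every finite list $L$ of indices, $\mathsf{readback}\bigl(\multiDup{L}\,s\bigr)=\mathsf{readback}~s$; proof by induction on the length of $L$, peeling off one $\triangledown$ at a time via the clause $\mathsf{readback}(\dup{k}{`a}{v})=\mathsf{readback}~v$. \emph{Lemma B (\textsf{rename} is invisible to \textsf{readback}):} for $b\in\{0,1\}$, every list $L$ and every $\LR$\=/term $s$, $\mathsf{readback}(\mathsf{rename}~b~L~s)=\mathsf{readback}~s$; proof by structural induction on $s$, the crux being that \textsf{rename} only edits the boolean\=/string component of an index $\ind{n}{`a}$ into $\ind{n}{`a b}$ and never touches the natural number $n$, whereas $\mathsf{readback}~\ind{n}{`a}=\udl{n}$ depends only on $n$; the $`l$, application, $\odot$ and $\triangledown$ clauses of \textsf{rename} and \textsf{readback} commute, the $+1$ shift of the first component under $`l$ being again discarded by \textsf{readback}.

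With Lemmas A and B in hand, the remaining case $t=t_1\,t_2$ is mechanical. Write $L=t_1^\star\cap t_2^\star$ and unfold \textsf{read} to $\multiDup{L}\,\bigl((\mathsf{rename}~0~L~(\mathsf{read}~t_1))\,(\mathsf{rename}~1~L~(\mathsf{read}~t_2))\bigr)$; apply Lemma A to discard the duplicator prefix; use $\mathsf{readback}(u_1\,u_2)=(\mathsf{readback}~u_1)\,(\mathsf{readback}~u_2)$; apply Lemma B to each factor to discard both renamings; and conclude with the induction hypothesis on $t_1$ and on $t_2$, getting $t_1\,t_2$.

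The argument is short and the individual steps are routine; the only place that demands care is this application case, and that care is concentrated entirely in Lemmas A and B — once one knows that neither the inserted duplicators nor the renamings disturb the natural\=/number skeleton that \textsf{readback} reads off, nothing deep remains. I would also note in passing that $\mathsf{read}~t$ is a well\=/formed (in fact affine, by Proposition~\ref{prop:affineness}) $\LR$\=/term, so that the duplicator prefix and the renamings occurring in the application clause of \textsf{read} make sense; but this well\=/formedness plays no role in the \textsf{readback} computation and need not enter the induction.
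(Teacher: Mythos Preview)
The paper states this proposition without proof, so there is nothing to compare against; your structural induction on $t$ with the two auxiliary observations is exactly the natural argument and is correct.

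One small remark: the paper's prose definition of \textsf{rename} is somewhat informal (and the accompanying \textsf{Haskell} function \texttt{consR} actually \emph{prepends} the boolean rather than appending it, and is keyed on the natural-number component alone), but you have identified the only property that matters for Lemma~B --- that \textsf{rename} never alters the first component $n$ of an ®-index and commutes with the term constructors --- so the discrepancy is harmless. Your treatment of the $`l$ case of Lemma~B (the $+1$ shift affects which indices are targeted, not the values written) is the right observation. With Lemmas~A and~B in place, the application case of the main induction is indeed mechanical, and the whole proof goes through.
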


The function \begin{math}
    \bm{\lambda}\, t . \mathsf{read}~ (\mathsf{readback}~ t) : \LR "->" \LR 
  \end{math}
is an interesting function which associates with a term $t$ another term
with a somewhat standard disposition of $\odot$ and $\triangledown$,
which we call \textsf{standardisation} of the term.

\thefigurelOOO

Evidently, the same non-linear $`l$-term may correspond to several $\lRdB$-terms. For instance, this
is the case for the term $`l ((\udl{0}\,\udl{0})\,\udl{0})$ (a $\LRdB$ instance of $`l x. x x x$) illustrated by the following example and pictured in Figure~\ref{fig:Lxxx}.

  \begin{example}\label{exa:Lxxx}
Consider the term $\lambda\dup{0}{\varepsilon}{\dup{0}{1}{(\ind{0}{0}
\ind{0}{10}) \ind{0}{11}}}$.  This is actually term $`l x . xxx$ in explicit name and no duplication.
\begin{displaymath}
  \mathsf{readback}(\lambda\dup{0}{\varepsilon}{\dup{0}{1}{(\ind{0}{0}
\ind{0}{10}) \ind{0}{11}}} = `l (\udl{0}\,\udl{0})\, \udl{0}
\end{displaymath}
but 
\begin{displaymath}
  \mathsf{read}(`l (\udl{0}\,\udl{0})\, \udl{0}) = \lambda\dup{0}{\varepsilon}{\dup{0}{0}{((\ind{0}{00} \ind{0}{01}) \ind{0}{1})}}
\end{displaymath}
Hence
\begin{displaymath}
\mathsf{read} \mathop{`o} \mathsf{readback}(\lambda\dup{0}{\varepsilon}{\dup{0}{1}{\ind{0}{0} \ind{0}{10}}}
\ind{0}{11})) = \lambda\dup{0}{\varepsilon}{\dup{0}{0}{(\ind{0}{00}
\ind{0}{01}) \ind{0}{1}}}
\end{displaymath}
The reader may notice that, in both terms, the first duplication is
$\dup{0}{`e}{\_}$.  But the reader may also notice that the second duplication is
$\dup{0}{1}{\_}$ in the first term and $\dup{0}{0}{\_}$ in the second term. So
they are not the same.  Choosing $\dup{0}{0}{\_}$ over $\dup{0}{1}{\_}$ is
somewhat canonical.  This corresponds to choosing the leftmost diagram in Figure~\ref{fig:Lxxx}.
The
fourth diagram corresponds to
\begin{displaymath}
\lambda\dup{0}{\varepsilon}{\dup{0}{0}{\ind{0}{01} \ind{0}{00}}
\ind{0}{1}}
\end{displaymath}
and the fifth diagram corresponds to
\begin{displaymath}
\lambda\dup{0}{\varepsilon}{\dup{0}{1}{\ind{0}{11} \ind{0}{10}}
\ind{0}{0}}.
\end{displaymath}
We let the reader write the $\LRdB$ term corresponding to the third diagram of Figure~\ref{fig:Lxxx}. There are $12$ ways to write the term  $`l (\udl{0}\,\udl{0})\, \udl{0}$ in $\LRdB$ and to draw corresponding diagrams.  The reader may devise the omitted cases. 
\end{example}

\subsection{Implementation of $\LRdB$ in \textsf{Haskell}}\label{sec:Haskell}
We implemented the whole $\lRdB$ in \textsf{Haskell}, where the data type for $\LRdB$ is as follows:
\begin{minted}{haskell}
data RTerm = App RTerm RTerm
           | Abs RTerm
           | Ind Int [Bool]
           | Era Int [Bool] RTerm
           | Dup Int [Bool] RTerm
\end{minted}

We give here a flavor of the implementation. 
We have defined functions $\mathsf{read}$ and $\mathsf{readback}$. As presented in the previous section $\mathsf{readback}$ is relatively easy to define, by just forgetting duplications and erasures. Function $\mathsf{read}$, denoted by $\mathsf{readLR}$, is defined in Haskell as follows:

\begin{minted}{haskell}
-- Given a list of indices and a term,
-- dupTheIndices applies all the duplications of that list to that term
dupTheIndices :: [(Int,[Bool])] -> RTerm -> RTerm
dupTheIndices [] t = t
dupTheIndices ((i,alpha):l) t = Dup i alpha  (dupTheIndices l t)

-- `consR` is a function used in `readLR`
-- given a boolean and an index, put the boolean (0 or 1)
-- in front of all the alpha parts associated with the index
consR :: Bool -> Int -> RTerm -> RTerm
consR b i (App t1 t2) = App (consR b i t1) (consR b i t2)
consR b i (Abs t) = Abs (consR b (i+1) t)
consR b i (Ind j beta) = if i==j
                        then Ind j (b:beta)
                        else Ind j beta
consR b i (Era j beta t) = if i==j
                          then Era j (b:beta) (consR b i t)
                          else Era j beta (consR b i t)
consR b i (Dup j beta t) = if i==j
                          then Dup j (b:beta) (consR b i t)
                          else Dup j beta (consR b i t)

\end{minted}
\textsf{indOf} is a function that extracts the indices of a term;
\textsf{?} is an infix operator which returns a boolean, \textsf{i ? t}
returns True if and only if \textsf{i} occurs in \textsf{t}.
\begin{minted}{haskell}
readLR :: Term -> RTerm
readLR (Ap t1 t2) =
  let rt1 = readLR t1
      rt2 = readLR t2
      indToIndR i = (i,[])
      commonInd = sort (indOf t1 `intersect`  indOf t2)
      pt1 = foldl (.) id (map (consR False) commonInd) rt1
      pt2 = foldl (.) id (map (consR True) commonInd) rt2
  in dupTheIndices (map indToIndR commonInd) (App pt1 pt2) 
readLR (Ab t) = if 0 ? t then Abs (readLR t) else Abs (Era 0 [] (readLR t))
readLR (In i) = Ind i []
\end{minted} 

We also present the Haskell code for test of linearity and closedness:

\begin{minted}{haskell}
-- (iL t) returns the list of free (R)-de Bruijn indices of t
-- if all the binders of the term binds one and only one (R)-index.
remove :: Eq a => a -> [a] -> Maybe [a]
remove _ [] = Nothing
remove x (y:l) = if x == y then Just l
                 else case (remove x l) of
                   Nothing -> Nothing
                   Just l' -> Just (y:l')
\end{minted}

\begin{minted}{haskell}
iL :: RTerm -> Maybe [(Int,[Bool])]
iL (Ind n alpha) = Just [(n,alpha)]
iL (Abs t) =
  case iL t of
    Nothing -> Nothing
    Just u -> case remove (0,[]) u of 
      Nothing -> Nothing
      Just u' -> case find (((==) 0).fst) u' of
        Just _ -> Nothing
        Nothing -> Just \$ map ((i,a)->(i-1,a)) u

  \end{minted}

\begin{minted}{haskell}
iL (App t1 t2) =
  case iL t1 of
    Nothing -> Nothing
    Just u1 -> case iL t2 of
                         Nothing -> Nothing
                         Just u2 -> if null (u1 `intersect` u2)
                                    then Just(u1 ++ u2)
                                    else Nothing
                                  
iL (Era n alpha t) = case iL t of
                       Nothing -> Nothing
                       Just u -> Just ((n,alpha):u)
iL (Dup n alpha t) =
  case iL t of
    Nothing -> Nothing
    Just u -> if (n,alpha++[False]) `elem` u &&
                 (n,alpha++[True]) `elem` u
              then Just ((n,alpha):(delete (n,alpha++[False]) (delete (n,alpha++[True]) u)))
              else Nothing
      
    \end{minted}
      
      \begin{minted}{haskell}           
-- is linear in the sense that all the binders bound one and only one index. 
isLinearAndClosed t = case iL t of
  Nothing -> False
  Just u -> u == []
\end{minted}

\section{Discussion and Related work}
\label{sec:related}

Compared to languages with explicit names, like
$`l\mathsf{lxr}$~\cite{DBLP:journals/iandc/KesnerL07} or the language
of~\cite{DBLP:journals/corr/GhilezanILL14}, $\lRdB$ is a simpler  
calculus, because, we can tell exactly how the ®-indices are
duplicated, since we have a tight control on the way those indices are
built.  As consequences, there are fewer basic rules and a simple
implementation is possible.  For instance, if we consider a rough
quantitative aspect, the calculus
of~\cite{DBLP:journals/iandc/KesnerL07} has $19$ rules and $6$
congruences, the system of~\cite{DBLP:journals/corr/GhilezanILL14} has
$18$ rules ($9$~basic rules and $8$ rules for substitution) and $4$
congruences, whereas our system $\lRdB$ has $12$ rules and no congruences.

Linear logic~\cite{DBLP:journals/tcs/Girard87} is, among others, an
approach to linearity of $`l$-calculus and certainly the most popular and there is a vast literature on the subject.  By
1998, people at CMU~\cite{pfenning98:_html} collected 463 entries. Let us cite a few papers that address the
implementation of the linear $\lambda$-calculus by a calculus of explicit substitution related to linear
logic~\cite{DBLP:conf/fossacs/GhaniPR99,DBLP:journals/igpl/GhaniPR00,cervesato99}.  More specifically, in those calculi,
the type system is this of linear logic, with connectors like $"-o"$ (linear implication), $\otimes$~(tensor) and $!$
(exponential modality).

The two approaches share a common focus on linear $`l$-terms, addressing similar objects. However, they diverge in their treatment of types: the authors of ~\cite{DBLP:conf/fossacs/GhaniPR99,DBLP:journals/igpl/GhaniPR00,cervesato99} use a different set of types, namely the specific set of types of linear logic, whereas, in this paper, a given $`l$-term can receive a pair of types, namely a first type usual in $`l$-calculus (simple type, system F, calculus of construction, etc.) to characterise its computing behaviour and a second type ($\Ell$-type) telling its linearity. \Jel{A different approach to tracking resource usage in type systems is presented by McBride~\cite{DBLP:conf/birthday/McBride16}, where resource-annotated contexts distinguish between data that is observed and data that is consumed. Unlike our systems, which assign $\Ell$-types to control linearity, McBride uses annotations to track usage constraints, ensuring that certain terms can be referenced without explicit duplication or erasure. This perspective aligns with our treatment of $\Ell$-types, which regulate resource usage in a different type-theoretic setting.}

The $\Ell$-types of our systems address a notion of \emph{correctness} which
is somewhat orthogonal to this of 
standard types (such as simple types or
higher order types).  A term is well $\Ell$-typed if it is linear and
we prove, thanks to $\Ell$-type preservation, that linearity is
preserved by reduction.  The two notions of types are orthogonal in
the sense that standard types say something about the result (the term
is an integer or a boolean, for instance) whereas $\Ell$-types say
something about the internal features of the terms (the term is linear). 
Since we do not characterise the ``result'' of a computation, but only the structure of
the term, there is no notion of ``progress'' associated with
$\Ell$-types, there is only a notion of ``preservation'' (terms stay linear along their reduction, i.e. $\Ell$-type is preserved). However, it is possible to introduce other standard type systems, such as simple types, intersection types or system F, to further characterise computational properties of $\Ell$-typed terms. As the continuation of the presented research, we intent to explore such a hierarchy of type systems for the language $\LRdB$.

We have developed the concept of $\Ell$-types with the aim to characterise 
linear $\lambda$-terms. 
Linear $\lambda$-calculus with implicit names, namely de Bruijn indices, is also in the focus of research presented in \cite{Schack-NielsenS10,Schack-Nielsen11}. More precisely, in \cite{Schack-NielsenS10} the authors have introduced the calculus of explicit substitutions with the affine and linear types.  They have extended simple type system with affine and linear functional types, in order to control the consumption of resources.  However, these types are classic types in the sense explained above.

The language $\LRdB$ has connection with the \emph{differential $`l$-calculus} of Erhard and Regnier~\cite{DBLP:journals/tcs/EhrhardR08} where the \emph{duplicator} $\triangledown$ is a non-commutative differential operator (similar to their~$D$, which is commutative) and the \emph{black-hole} $\odot$ corresponds to an empty iteration of $\triangledown$ (like~$D^0$).  Therefore $\lRdB$ can be considered as a \emph{non-commutative differential $`l$-calculus}, where iterations are no more done on natural numbers, but on lists of \texttt{Bool}.  These observations merit  to be deepened.

 Paul Tarau and Valeria de Pavia address a similar problem (\cite{DBLP:journals/corr/abs-2009-10241} Section 4.3) in their attempt to generate closed linear  $\lambda$-terms.
Anyway in functional programming, programs of interest are those with no free (undeclared) variable.

\Sil{The implementation in Agda~\cite{githubrepo}, currently under development, covers some topics, e.g. explicit substitution, which are not present in~\cite{DBLP:journals/scp/KokkeSW20}.}

\section{Conclusion}
\label{sec:conc}
This paper introduces a novel approach to dealing with resource-related properties of formal languages, specifically focusing on the concept of  linearity. This concept, although informally clear and intuitive, can be challenging to formally define and successfully implement. Our approach relies on using implicit names (such as de Bruijn indices or novel ®-indices) and $\Ell$-types (which represent the list of free indices in a term). This approach enables simple formal definitions of 
linearity through $\Ell$-typeability. 

To illustrate our approach, we introduced three new languages with implicit names. The first language is the most straightforward, as it comprises traditional $`l$-terms with unique occurrences of each variable (\textsf{BCI$`l$}-terms). The second language addresses an abstract implementation of $`b$\=/reduction through explicit substitution. The third language is concerned with resource control, featuring explicit duplication and explicit erasure.

For each of those languages, we introduce a specific list type system, which is used to give a simple and manageable definition of 
linearity. Those types represent lists of free implicit names: de Bruin indices for the first two calculi, and new ®-indices for the calculus with resource control.

To summarise, we have addressed the concepts of 
linearity, implicit names and explicit resource control (duplication and erasure) by introducing and implementing \Sil{in Haskell \cite{githubrepo2}} list types, ®-indices and three new languages: $\Lin$, $\Luin$ and~$\LRdBin$.

\Sil{As for ongoing and further work, we plan to develop a direct method for charaterising  \emph{affiness and relevance} in the implicit names framework. Moreover, the \textsf{Agda} implementation, which is currently under  development~\cite{githubrepo}, will be furthered}, as well as a study of the statistics (the ratio) of the use of $(`b_{\textrm{\textregistered}})$-contractions over control operations. This would require a definition of ®-substitution in terms of explicit ®-substitution.

\bibliographystyle{plainurl}
\bibliography{ref}


\end{document}
